\title{Regret Pruning for Learning Equilibria in Simulation-Based Games}
\author{
Bhaskar Mishra,
Cyrus Cousins,
Amy Greenwald
}
\pgfplotsset{compat=1.15}
\newcommand{\mydef}[1]{\emph{#1}}
\algnewcommand{\Input}{\textbf{input:}~}
\algnewcommand{\Output}{\textbf{output:}~}
\algnewcommand{\Error}{\textbf{error}~}
\algnewcommand{\Continue}{\textbf{continue}~}
\algnewcommand{\Break}{\textbf{break}~}
\newcommand{\Samples}{\bm{Y}}
\newcommand{\SamplePoint}{\ConditionValue}
\newcommand{\NumberOfSamples}{m}
\newcommand{\UtilityRange}{c}
\newcommand{\UtilityVariance}{\bm{v}}
\newcommand{\EUtilityVariance}{\hat{\bm{v}}}
\newcommand{\ConditionSpace}{\mathcal{Y}}
\newcommand{\ConditionValue}{y}
\newcommand{\ConditionDistribution}{\rand{D}}
\newcommand{\TimeIndex}{t}
\newcommand{\ScheduleLength}{T}
\newcommand{\GameTuple}{\Gamma}
\newcommand{\ConditionalGame}[1]{\GameTuple_{#1}}    
\newcommand{\EmpiricalGame}[1]{\hat{\GameTuple}_{#1}} 
\newcommand{\NumberOfPlayers}{|\SetOfPlayers|}
\newcommand{\SetOfPlayers}{P}
\newcommand{\PlayerIndex}{p}
\newcommand{\StrategySet}{S}
\newcommand{\StrategyAlt}{t}
\newcommand{\StratProfile}{\bm{s}}
\newcommand{\StratProfileSpace}{\bm{\StrategySet}}
\newcommand{\MixedStrategySet}{{\StrategySet}^{\diamond}}
\newcommand{\MixedStratProfileSpace}{\bm{\MixedStrategySet}}
\newcommand{\EUtility}{\hat{\bm{u}}}
\newcommand{\Utility}{\bm{u}}
\DeclareMathOperator{\Adjacent}{Adj}
\DeclareMathOperator{\Regret}{Reg}
\newcommand{\UtilityIndices}{\bm{\mathcal{I}}}
\renewcommand{\epsilon}{\varepsilon}
\renewcommand{\delta}{\updelta}
\newcommand{\ssum}{\mathsmaller{\sum}\limits}
\newcommand{\rand}[1]{\mathscr{#1}}
\newcommand{\abs}[1]{\left\lvert{} #1 \right\rvert{}}
\newcommand{\norm}[1]{\left\lVert{} #1 \right\rVert{}}
\newcommand{\R}{\mathbb{R}}
\newif\iflsymb
\DeclareMathOperator*{\Prob}{\mathbb{P}}
\DeclareMathOperator*{\Expect}{\mathbb{E}}
\DeclareMathOperator*{\Var}{\mathbb{V}}
\newcommand{\Simulator}{\mathscr{S}}
\newcommand{\SimUtility}{\Utility}
\newcommand{\SampleUtility}{\dot{\Utility}}
\newcommand{\NumberQueries}{m}
\newcommand{\epsilonH}{\epsilon^{\textup{H}}}
\newcommand{\epsilonEB}{\epsilon^{\hat{\textup{B}}}}
\newcommand{\EVarianceEpsilon}{\epsilon^{\EUtilityVariance}}
\newcommand{\Equilibria}{\textup{E}}
\newcommand{\MixedEquilibria}{\Equilibria^{\diamond}}
\newcommand{\PSWE}{PS-WE}
\newcommand{\PSRegOld}{PS-REG-0}
\newcommand{\PSRegU}{PS-REG}
\newcommand{\PSRegNU}{PS-REG+}
\newcommand{\PSRegM}{PS-REG-M}
\newcommand{\CumulativeSamples}{M}
\newtheorem{definition}{Definition}
\newtheorem{assumption}{Assumption}
\newtheorem{theorem}{Theorem}
\newtheorem{lemma}{Lemma}
\newtheorem{counterexample}{Counterexample}
\newtheorem{claim}{Claim}
\begin{document}

\maketitle

\begin{abstract}
In recent years, empirical game-theoretic analysis (EGTA) has emerged as a powerful tool for analyzing games in which an exact specification of the utilities is unavailable.
Instead, EGTA assumes access to an oracle, i.e., a 
simulator, which can generate unbiased noisy samples of players' unknown utilities, given a strategy profile. 
Utilities can thus be empirically estimated by repeatedly querying the simulator.
Recently, various progressive sampling (PS) algorithms have been proposed,
which aim to produce PAC-style learning guarantees (e.g., approximate Nash equilibria with high probability) using as few simulator queries as possible.
A recent work by \citeauthor{areyan2020improved} introduces a pruning technique called \emph{regret-pruning} which further minimizes the number of simulator queries placed in PS algorithms which aim to learn \emph{pure} Nash equilibria.
In this paper, we address a serious limitation of this original regret pruning approach -- it is only able to guarantee that \emph{true} pure Nash equilibria of the empirical game are approximate equilibria of the true game, and is unable to provide any strong guarantees regarding the efficacy of approximate pure Nash equilibria. This is a significant limitation since in many games, pure Nash equilibria are computationally intractable to find, or even non-existent. We introduce three novel regret pruning variations. The first two variations generalize the original regret pruning approach to yield guarantees for approximate pure Nash equilibria of the empirical game. The third variation goes further to even yield strong guarantees for all approximate mixed Nash equilibria of the empirical game. We use these regret pruning variations to design two novel progressive sampling algorithms, \PSRegNU{} and \PSRegM{}, which experimentally outperform the previous state-of-the-art algorithms for learning pure and mixed equilibria, respectively, of simulation-based games.
\end{abstract}
\section{Introduction}

Game theory is the standard conceptual framework used to analyze strategic interactions among rational agents in multi-agent systems.
A game comprises a collection of players, each with a set of 
strategies and a utility function, mapping strategy profiles (i.e., combinations of strategies)
to values.
Traditionally, game-theoretic analysis presumes complete access to a game's structure, including the utility functions.

In recent years, empirical game-theoretic analysis (EGTA) has emerged as a powerful tool for analyzing games in which such an exact specification of the utilities is unavailable.
Instead, EGTA assumes access to an oracle, i.e., a (stochastic) simulator, which produces unbiased noisy samples of players' unknown 
utilities given a strategy profile~\cite{Wellman06,tuyls2020bounds,areyan2020improved}.
Such games are called \mydef{simulation-based games}~\cite{vorobeychik2008stochastic}, or \mydef{black-box} games~\cite{picheny2016bayesian}, and their empirical counterparts, which are derived from simulation data, are called \mydef{empirical games}.
Simulation-based games have been studied in many practical settings including trading agent analyses in supply chains~\cite{vorobeychik2006empirical,jordan2007empirical}, ad auctions~\cite{jordan2010designing,DBLP:conf/uai/ViqueiraCMG19}, and
energy markets~\citep{ketter2013autonomous};
designing network routing protocols~\cite{wellman2013analyzing}; strategy selection in real-time games~\cite{tavares2016rock}; and 
the dynamics of 
RL algorithms, like AlphaGo \cite{tuyls2018generalised}. 

A typical EGTA goal is to produce PAC-style learning guarantees (e.g., approximate Nash equilibria with high probability) with minimal \emph{query complexity}, i.e., the number of simulation queries placed \cite{tuyls2020bounds,areyan2020improved,cousins2022computational}.
This goal has led to the development of \mydef{progressive sampling}
algorithms~\cite{areyan2020improved, cousins2022computational}, which place simulation queries in progressive batches, until the desired guarantee is reached.
On top of progressive sampling, two papers introduce various pruning techniques, to further minimize query complexity.
One of these techniques, \mydef{well-estimated pruning}, prunes strategy profiles whose utilities are very likely to already be sufficiently close
to the true utilities~\cite{cousins2022computational}.
This technique is useful for learning a variety of game properties, including regret, pure or mixed Nash equilibria, welfare-maximizing outcomes, and more.
A second technique, called \mydef{regret pruning}, is intended for use only when learning pure Nash equilibria, as it prunes strategy profiles that are highly unlikely to be best responses, and hence unlikely to be necessary for finding pure Nash equilibria ~\cite{areyan2020improved}.

In this paper, we focus predominantly on regret pruning. \citet{areyan2020improved} claim that their regret pruning criterion can be used to learn an empirical game which, with high probability, satisfies a certain dual pure Nash containment guarantee -- all pure Nash equilibria of the simulation-based game are approximate equilibria of the empirical game, and all approximate pure Nash equilibria of the empirical game are approximate equilibria of the simulation-based game. We show via a direct counterexample, however, that their progressive sampling algorithm using regret pruning fails to satisfy this second inclusion. Rather, their algorithm only yields the guarantee that all \emph{true} pure Nash equilibria of the empirical game are approximate equilibria of the simulation-based game, and is unable to directly yield any non-trivial guarantee regarding \emph{approximate} pure Nash equilibria of the empirical game. This difference is crucial, since in many games, computing a pure Nash equilibrium is computationally intractable (e.g., it is NP-complete in graphical games \citep{gottlob2005purenash}), and sometimes such an equilibrium does not even exist. In such cases, an approximate pure Nash equilibrium is the best that can be hoped for, but \citeauthor{areyan2020improved}'s progressive sampling algorithm using regret pruning yields no guarantees for such equilibria.

In response to this limitation of the original regret pruning technique, we design three new variations of regret pruning (and new corresponding progressive sampling algorithms). At the cost of a slightly tighter regret pruning criterion, our first regret pruning variation yields the guarantee regarding approximate equilibria of the empirical game which \citeauthor{areyan2020improved}'s regret pruning technique was originally intended to yield. Our second variation incorporates the non-uniform utility deviation bounds used by \citet{cousins2022computational} for well-estimated pruning in order to design a looser regret pruning criterion which nonetheless yields the same guarantees as the first variation. Finally, our third variation also takes advantage of non-uniform bounds, but uses different proof techniques than those used in the first two variations to yield both pure \emph{and} mixed Nash containment guarantees. 

The third regret pruning variation is a particularly significant contribution, as it is one of the first pruning techniques beyond simple well-estimated pruning for learning mixed equilibria. The only available alternative is rationalizability pruning introduced by \citet{areyan2020improved}, which requires the use of a computationally expensive iterative dominance algorithm, has a very tight pruning criterion which often prunes few to no strategy profiles in practice, and most importantly, like the original regret pruning technique, only yields guarantees regarding \emph{true} mixed Nash equilibria of the empirical game. 
In contrast, our novel third regret pruning variation utilizes a pruning criterion which is very cheap to compute, can prune a very significant number of simulation queries (which we confirm experimentally), and yields Nash containment guarantees for approximate mixed equilibria of the empirical game.

In addition to presenting the guarantees progressive sampling algorithms using these novel variations of regret pruning can satisfy, we also derive sample complexity bounds for these new algorithms. In particular, we present PAC-style upper bounds on the number of samples our progressive sampling algorithms will take to prune each respective strategy profile. Finally, we conclude by demonstrating experimentally that our novel progressive sampling algorithms which incorporate both well-estimated pruning and novel regret pruning variations significantly outperform \citeauthor{cousins2022computational}'s progressive sampling algorithm which used well-estimated pruning alone, in some cases requiring up to 50\% fewer simulation queries to learn equilibria of similar quality.

\subsection{Related Works}

The EGTA literature, while relatively young, is growing rapidly, with researchers actively contributing methods for myriad game models.
Some of these methods are designed for normal-form games~\citep{cousins2022computational,areyan2020improved,areyan2019learning,tavares2016rock,fearnley2015learning,vorobeychik2008stochastic}, and others, for extensive-form games~\citep{learningmaximin,Gattieqapprox,Zhang_Sandholm_2021}.
Most methods apply to games with finite strategy spaces, but some apply to games with infinite strategy spaces \citep{learningmaximin,vorobeychik2007learning,wiedenbeck2018regression}.
A related line of work aims to empirically design mechanisms via EGTA methodologies~\citep{vorobeychik2006empirical,DBLP:conf/uai/ViqueiraCMG19}.

The progressive sampling algorithms and pruning techniques which we design in this paper extend work done by \citet{cousins2022computational,areyan2020improved, areyan2019learning} in designing algorithms for learning equilibria in normal-form simulation-based games with finite strategy spaces.

\section{Learning Framework}

We begin with the standard definition of standard normal-form games, and some related properties
.
We then introduce our formal model of simulation-based games and empirical games.
Finally, we state the concentration inequalities we use to guide pruning in our progressive sampling algorithms.

\subsection{Basic Game Theory}

\begin{definition}[Normal-Form Game]
A \mydef{normal-form game} $\GameTuple \doteq \langle \SetOfPlayers, \{ \StrategySet_\PlayerIndex \}_{\PlayerIndex \in \SetOfPlayers} ,
\Utility \rangle$ consists of a set of players $\SetOfPlayers$,
each with a corresponding \mydef{pure strategy set} $\StrategySet_\PlayerIndex$.
We define $\StratProfileSpace \doteq \StrategySet_1 \times \dots \times \StrategySet_{\NumberOfPlayers}$ to be the \mydef{pure strategy profile space}, and then $\Utility : \StratProfileSpace \to \R^{\NumberOfPlayers}$ is a vector-valued \mydef{utility function} (equivalently, a vector of $\NumberOfPlayers$ scalar utility functions $\Utility_\PlayerIndex$).  
\end{definition}

Given an NFG $\GameTuple$, we denote by $\StrategySet_\PlayerIndex^\diamond$ the set of distributions over $\StrategySet_\PlayerIndex$; this set is called player $\PlayerIndex$'s \mydef{mixed strategy set}. We define $\MixedStratProfileSpace = \MixedStrategySet_1 \times \dots \times \MixedStrategySet_{\NumberOfPlayers}$ to be the \mydef{mixed strategy profile space}, and, overloading notation, we write $\Utility(\StratProfile)$
to denote the expected utility of a mixed strategy profile $\StratProfile \in \MixedStratProfileSpace$.
Each pure strategy profile $\StratProfile\in\StratProfileSpace$ is contained in the mixed strategy profile space $\MixedStratProfileSpace$, represented by the profile with each mixed strategy concentrated entirely at the respective pure strategy.

Given player $\PlayerIndex$ and strategy profile $\StratProfile\in\MixedStratProfileSpace$, the set $\Adjacent_{\PlayerIndex, \StratProfile} \doteq \{ (\StratProfile_1, \dots, \StratProfile_{\PlayerIndex - 1}, \StrategyAlt, \StratProfile_{p+1}, \dots, \StratProfile_{\abs{\SetOfPlayers}}) \mid \StrategyAlt\in\StrategySet_{\PlayerIndex}\}$ contains all adjacent strategy profiles, meaning those in which the strategies of all players $q \ne \PlayerIndex$ are fixed at $\StratProfile_q$,  while  player $\PlayerIndex$'s strategy may vary across their pure strategy set.

\begin{definition}[Regret]
    A player $\PlayerIndex$'s regret at strategy profile $\StratProfile\in\MixedStratProfileSpace$ is defined as $\displaystyle\Regret_\PlayerIndex(\StratProfile; \Utility) \doteq \sup_{\StratProfile' \in \Adjacent_{\PlayerIndex, \StratProfile}} \Utility_{\PlayerIndex} (\StratProfile') - \Utility_{\PlayerIndex} (\StratProfile)$.  
    We further define $\Regret(\StratProfile; \Utility)\doteq \max_{\PlayerIndex\in\SetOfPlayers}\Regret_\PlayerIndex(\StratProfile; \Utility)$.
\end{definition}

A strategy profile $\StratProfile\in\StratProfileSpace$ is player $\PlayerIndex$'s \mydef{best response} if $\Regret_\PlayerIndex(\StratProfile; \Utility) = 0$ (i.e., the player does not regret choosing this strategy profile as opposed to an adjacent one). We say it is an $\epsilon$-\mydef{best response} if $\Regret_\PlayerIndex(\StratProfile; \Utility) \leq \epsilon$.

A strategy profile $\StratProfile\in\MixedStratProfileSpace$ is an $\epsilon$-\mydef{Nash equilibrium} if it is an $\epsilon$-best response for each player $\PlayerIndex\in\SetOfPlayers$ (i.e., if $\Regret(\StratProfile; \Utility) \leq \epsilon$). If $\StratProfile$ corresponds to a pure strategy profile, then we call it an $\epsilon$-\mydef{pure Nash equilibrium} ($\epsilon$-PNE); otherwise, we call it an $\epsilon$-\mydef{mixed Nash equilibrium} ($\epsilon$-MNE).
A $0$-PNE is simply called a PNE, and a $0$-MNE is called an MNE.
The set of $\epsilon$-pure (resp. mixed) Nash equilibria is denoted $\Equilibria_\epsilon(\Utility)$ (resp. $\MixedEquilibria_\epsilon(\Utility)$), and the set of pure (resp. mixed) Nash equilibria is denoted $\Equilibria(\Utility)$ (resp. $\MixedEquilibria(\Utility)$).

\subsection{Formal Model of Simulation Based Games}

In simulation-based games, we assume access to a \mydef{simulator} $\Simulator (\cdot)$, which can be queried to produce unbiased noisy samples of the players' utilities when $\StratProfile \in \StratProfileSpace$ is played.
We denote such a sample by $\SampleUtility (\StratProfile) \sim \Simulator (\StratProfile)$, where $\SampleUtility (\StratProfile)$ is a $\abs{\SetOfPlayers}$-vector comprising utilities for each player.

\begin{definition}[Simulation-Based Game]
    A simulation-based game $\GameTuple_\Simulator \doteq \langle \SetOfPlayers, \StratProfileSpace, \Simulator \rangle$ consists of a set of players $\SetOfPlayers$, a pure strategy profile space $\StratProfileSpace\doteq \StrategySet_1 \times \dots \times \StrategySet_{\abs{\SetOfPlayers}}$, and a simulator $\Simulator(\cdot)$ that produces noisy samples $\SampleUtility (\StratProfile) \sim \Simulator (\StratProfile)$ upon simulation of a strategy profile $\StratProfile \in \StratProfileSpace$.
\end{definition}

Corresponding to each simulation-based game $\GameTuple_\Simulator$ is an expected normal-form game.

\begin{definition}[Expected Normal-Form Game]
    Given a simulation-based game $\GameTuple_\Simulator \doteq \langle \SetOfPlayers, \StratProfileSpace, \Simulator \rangle$, we define the ``underlying'' utility function $\SimUtility : \StratProfileSpace \to \mathbb{R}^{\abs{\SetOfPlayers}}$ by $\SimUtility(s) \doteq \Expect_{\SampleUtility(\StratProfile)\sim \Simulator (\StratProfile)}\left[\SampleUtility(\StratProfile)\right]$. 
    The expected game corresponding to $\GameTuple_\Simulator$ is then the normal-form game $\langle \SetOfPlayers, \StratProfileSpace, \SimUtility\rangle$.
    Overloading notation, we also let $\GameTuple_\Simulator$ denote this (unknown) expected normal-form game.
\end{definition}

Since we do not have direct access to the expected normal-form game, its utilities must be learned by repeatedly querying the simulator.
The resulting empirical estimate of the expected game is called an empirical game.

\begin{definition}[Empirical Normal-Form Game]
    Given a simulation-based game $\GameTuple_\Simulator \doteq \langle \SetOfPlayers, \StratProfileSpace, \Simulator\rangle$, let $\SampleUtility^{(1)}(\StratProfile)$, $\dots$, $\SampleUtility^{(\NumberQueries_{\StratProfile})}(\StratProfile)\sim \Simulator(\StratProfile)$ denote the sample utilities produced by $\NumberQueries_{\StratProfile} > 0$ queries to the simulator at strategy profile $\StratProfile \in \StratProfileSpace$. 
    We define the empirical utility function $\EUtility : \StratProfileSpace \to \mathbb{R}^{\NumberOfPlayers}$ by $\EUtility(\StratProfile)\doteq \frac{1}{\NumberQueries_{\StratProfile}}\sum_{i=1}^{\NumberQueries_{\StratProfile}}\SampleUtility^{(i)}(\StratProfile)$ for all $\StratProfile\in\StratProfileSpace$, and the ensuing empirical normal-form game by $\EmpiricalGame{\Simulator}\doteq \langle \SetOfPlayers, \StratProfileSpace, \EUtility\rangle$.
\end{definition}

From here onwards, let $\GameTuple_\Simulator\doteq \langle\SetOfPlayers, \StratProfileSpace, \Simulator\rangle$ be an arbitrary simulation-based game with underlying utility function $\Utility$, and let $\EmpiricalGame{\Simulator}\doteq\langle \SetOfPlayers, \StratProfileSpace, \EUtility\rangle$ be a corresponding empirical game. Using our formalization, we can now present one of the foundational results of EGTA \citep{tuyls2020bounds}.
\begin{restatable}{lemma}{tuylsDual}
    \label{thm:dual-containment-tuyls}
    If $\abs{\Utility_\PlayerIndex(\StratProfile) - \EUtility_\PlayerIndex(\StratProfile)}\leq \epsilon$ for all $(\PlayerIndex, \StratProfile)\in\UtilityIndices$, then
    \begin{equation*}
        \Equilibria(\Utility)\subseteq \Equilibria_{2\epsilon}(\EUtility) \subseteq \Equilibria_{4\epsilon}(\Utility)\textrm{ and }\MixedEquilibria(\Utility)\subseteq \MixedEquilibria_{2\epsilon}(\EUtility) \subseteq \MixedEquilibria_{4\epsilon}(\Utility)\enspace,
    \end{equation*}
    or more generally, $\Equilibria_{\gamma}(\Utility)\subseteq \Equilibria_{2\epsilon + \gamma}(\EUtility)$ and $\Equilibria_{\gamma}(\EUtility)\subseteq \Equilibria_{2\epsilon + \gamma}(\Utility)$ for all $\gamma \geq 0$ (resp. for mixed equilibria).
\end{restatable}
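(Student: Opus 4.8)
The plan is to reduce both chains of containments to a single stability estimate on the regret function: namely that $\abs{\Regret(\StratProfile; \Utility) - \Regret(\StratProfile; \EUtility)} \le 2\epsilon$ holds for every $\StratProfile \in \MixedStratProfileSpace$. Once this is established, every containment follows mechanically by unwinding the definition of an $\epsilon$-Nash equilibrium as a sublevel set of $\Regret(\StratProfile; \cdot)$.

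First I would upgrade the hypothesis, which is stated only for pure profiles indexed by $\UtilityIndices$, to all mixed profiles. Since $\Utility_\PlayerIndex(\StratProfile)$ and $\EUtility_\PlayerIndex(\StratProfile)$ are both expectations of the pure-profile utilities under the product distribution induced by $\StratProfile$, the per-profile error at a mixed profile is a convex combination of per-profile errors at pure profiles; the triangle inequality then gives $\abs{\Utility_\PlayerIndex(\StratProfile) - \EUtility_\PlayerIndex(\StratProfile)} \le \epsilon$ for all $\PlayerIndex$ and all $\StratProfile \in \MixedStratProfileSpace$. This step matters because the deviation set $\Adjacent_{\PlayerIndex, \StratProfile}$ appearing in the regret contains mixed profiles (all players but $\PlayerIndex$ retain their mixed strategies), so the uniform pure bound alone would not suffice.

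Next I would prove the regret-stability estimate. The crucial structural observation is that $\Adjacent_{\PlayerIndex, \StratProfile}$ depends only on $\StratProfile$ and the strategy sets, not on the utility function, so the supremum defining $\Regret_\PlayerIndex$ ranges over the same index set under $\Utility$ and under $\EUtility$. For any fixed deviation $\StratProfile' \in \Adjacent_{\PlayerIndex, \StratProfile}$, applying the mixed bound to both $\StratProfile'$ and $\StratProfile$ gives $\Utility_\PlayerIndex(\StratProfile') - \Utility_\PlayerIndex(\StratProfile) \le \EUtility_\PlayerIndex(\StratProfile') - \EUtility_\PlayerIndex(\StratProfile) + 2\epsilon \le \Regret_\PlayerIndex(\StratProfile; \EUtility) + 2\epsilon$; taking the supremum over $\StratProfile'$ yields $\Regret_\PlayerIndex(\StratProfile; \Utility) \le \Regret_\PlayerIndex(\StratProfile; \EUtility) + 2\epsilon$, and the symmetric argument gives the reverse, so the per-player regrets differ by at most $2\epsilon$. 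Taking the maximum over $\PlayerIndex$ preserves this gap, since a maximum of functions that are pointwise within $2\epsilon$ is itself within $2\epsilon$, yielding the claimed bound on $\Regret(\StratProfile; \cdot)$.

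Finally I would chain the containments. If $\StratProfile \in \Equilibria_{\gamma}(\Utility)$, i.e.\ $\Regret(\StratProfile; \Utility) \le \gamma$, then stability gives $\Regret(\StratProfile; \EUtility) \le \gamma + 2\epsilon$, so $\StratProfile \in \Equilibria_{2\epsilon + \gamma}(\EUtility)$; the reverse direction is symmetric, establishing the two general $\gamma$-parametrized inclusions. Specializing $\gamma = 0$ gives $\Equilibria(\Utility) \subseteq \Equilibria_{2\epsilon}(\EUtility)$, and feeding $\gamma = 2\epsilon$ into the reverse inclusion gives $\Equilibria_{2\epsilon}(\EUtility) \subseteq \Equilibria_{4\epsilon}(\Utility)$, which composes to the displayed pure chain; the identical argument with $\MixedEquilibria$ yields the mixed chain, since the regret-stability estimate was proved for all of $\MixedStratProfileSpace$ at once. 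The only real subtlety, and hence the step I would be most careful about, is the mixed extension in the first step; the remainder is a routine triangle-inequality-and-supremum manipulation followed by bookkeeping.
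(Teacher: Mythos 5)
Your proposal is correct and follows essentially the same route as the paper's proof: both reduce the containments to a $2\epsilon$ perturbation bound on regret obtained from the triangle inequality applied to the deviation target and to $\StratProfile$ itself, with the paper computing this directly at an arbitrary $\gamma$-equilibrium rather than first stating a uniform regret-stability estimate. Your explicit extension of the hypothesis to mixed profiles via convexity is a point the paper's proof leaves implicit (it handles only the pure computation and invokes ``analogous reasoning''), so that extra care is welcome but does not change the argument.
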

This result can be understood as stating that given a sufficiently strong approximation $\EmpiricalGame{\Simulator}$ of a simulation-based game $\GameTuple_\Simulator$, we can approximate pure (resp. mixed) Nash equilibria in $\GameTuple_\Simulator$ with perfect recall -- all pure (resp. mixed) Nash equilibria in $\GameTuple_{\Simulator}$ are approximate Nash equilibria in $\EmpiricalGame{\Simulator}$ -- and with \emph{approximately} perfect precision -- all approximate pure (resp. mixed) Nash equilibria in $\EmpiricalGame{\Simulator}$ are approximate pure (resp. mixed) Nash equilibria in $\GameTuple_{\Simulator}$. 
\Cref{thm:dual-containment-tuyls} is one of the primary motiviations for EGTA's pursuit of designing efficient algorithms for learning strong approximations of simulation-based games, as it guarantees that the better an approximation an empirical game is of a simulation-based game, the more strategically representative the empirical game will be of the underlying simulation-based game.

\subsection{Tail Bounds}

Next, we state the tail bounds upon which our novel regret pruning techniques and progressive sampling algorithms depend. These are the same bounds derived and used by \citeauthor{cousins2022computational}; for a more thorough discussion of them see \citet{cousins2022computational}. For all subsequent results, we make the following ``bounded utilities'' assumption.

\begin{assumption}[Bounded Utilities]
    For each strategy profile $\StratProfile\in\StratProfileSpace$, the sample utilities produced via $\Simulator(\StratProfile)$ lie on the bounded interval $[a_{\StratProfile}, b_{\StratProfile}]$ for some fixed $a_{\StratProfile}, b_{\StratProfile}\in \mathbb{R}$. We define $c := \sup_{\StratProfile\in\StratProfileSpace}(b_{\StratProfile} - a_{\StratProfile})$.
\end{assumption}

The most straight-forward tail bound for mean-estimation is Hoeffding's Inequality, which was used by \citet{tuyls2020bounds}. We use Hoeffding's inequality to bound each individual utility, combined with a union bound to yield a guarantee for all utilities.

\begin{theorem}[Hoeffding's Inequality]
    \label{thm:hoeffding}
    Let $\EmpiricalGame{\Simulator}\doteq\langle \SetOfPlayers, \StratProfileSpace, \EUtility\rangle$ be an empirical game. Then, with probability at least $1 - \delta$, for all $(\PlayerIndex, \StratProfile)\in \UtilityIndices$, it holds that
    \begin{equation*}
        \abs{\Utility_p(\StratProfile) - \EUtility_p(\StratProfile)} \leq \UtilityRange\sqrt{\frac{\ln\left(\nicefrac{2\abs{\UtilityIndices}}{\delta}\right)}{2\NumberQueries_{\StratProfile}}}\doteq \epsilonH_\PlayerIndex(\StratProfile)\enspace.
    \end{equation*}
\end{theorem}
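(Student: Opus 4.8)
The plan is to apply the classical scalar Hoeffding inequality to each player–profile pair separately, and then combine these bounds with a union bound. Fix an arbitrary $(\PlayerIndex, \StratProfile) \in \UtilityIndices$. By the definition of the empirical game, $\EUtility_\PlayerIndex(\StratProfile)$ is the empirical mean of the $\NumberQueries_{\StratProfile}$ i.i.d.\ samples $\SampleUtility^{(1)}_\PlayerIndex(\StratProfile), \dots, \SampleUtility^{(\NumberQueries_{\StratProfile})}_\PlayerIndex(\StratProfile)$, each of which is an unbiased draw of $\Utility_\PlayerIndex(\StratProfile)$ and, by the bounded-utilities assumption, lies in an interval of width at most $\UtilityRange$. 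The scalar Hoeffding bound then gives, for any $\epsilon > 0$,
\begin{equation*}
    \Probp{\abs{\Utility_\PlayerIndex(\StratProfile) - \EUtility_\PlayerIndex(\StratProfile)} \geq \epsilon} \leq 2\expp{-\frac{2\NumberQueries_{\StratProfile}\epsilon^2}{\UtilityRange^2}}\enspace.
\end{equation*}

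Next I would calibrate $\epsilon$ so that the right-hand side equals $\delta / \abs{\UtilityIndices}$. Setting $2\expp{-2\NumberQueries_{\StratProfile}\epsilon^2 / \UtilityRange^2} = \delta / \abs{\UtilityIndices}$ and solving for $\epsilon$ yields exactly $\epsilon = \UtilityRange\sqrt{\frac{\ln\left(\nicefrac{2\abs{\UtilityIndices}}{\delta}\right)}{2\NumberQueries_{\StratProfile}}} = \epsilonH_\PlayerIndex(\StratProfile)$, so that the probability that player $\PlayerIndex$'s utility at $\StratProfile$ deviates by more than $\epsilonH_\PlayerIndex(\StratProfile)$ is at most $\delta / \abs{\UtilityIndices}$. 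Finally, I would take a union bound over all $\abs{\UtilityIndices}$ index pairs: the probability that \emph{any} of these $\abs{\UtilityIndices}$ bad events occurs is at most $\abs{\UtilityIndices} \cdot \delta / \abs{\UtilityIndices} = \delta$, so with probability at least $1 - \delta$ every deviation is simultaneously controlled by its corresponding $\epsilonH_\PlayerIndex(\StratProfile)$, as claimed.

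This argument is essentially routine, so there is no serious obstacle; the only points that require care are subtleties in the probabilistic bookkeeping. First, Hoeffding's inequality needs the $\NumberQueries_{\StratProfile}$ repeated queries at a \emph{fixed} profile to be i.i.d., which holds by construction, whereas the utilities of different players at the same profile are extracted from a common sample vector and are hence dependent. This dependence is harmless, since the union bound combining the events across $\UtilityIndices$ requires no independence among those events—only that each has individually bounded probability. Second, different profiles may be queried a different number of times, which is precisely why the threshold $\epsilonH_\PlayerIndex(\StratProfile)$ retains its dependence on $\NumberQueries_{\StratProfile}$; replacing the profile-specific width $b_{\StratProfile} - a_{\StratProfile}$ with the uniform bound $\UtilityRange$ only loosens the inequality and keeps it simultaneously valid across all profiles.
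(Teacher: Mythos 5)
Your proof is correct and follows exactly the route the paper indicates: it states immediately before the theorem that the bound is obtained by applying Hoeffding's inequality to each individual utility and combining with a union bound over $\UtilityIndices$, which is precisely your argument (the paper itself defers the details to \citet{cousins2022computational} rather than writing them out). Your calibration of the per-index failure probability to $\nicefrac{\delta}{\abs{\UtilityIndices}}$ and your remarks on dependence across players and on replacing $b_{\StratProfile} - a_{\StratProfile}$ by $\UtilityRange$ are all accurate.
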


Let $\UtilityVariance_{\PlayerIndex}(\StratProfile)$ denote $ \Var_{\SampleUtility(\StratProfile)\sim \Simulator (\StratProfile)}\left[\SampleUtility_{\PlayerIndex}(\StratProfile)\right]$ for all $(\PlayerIndex, \StratProfile) \in \SetOfPlayers \times \StratProfileSpace$. Since Hoeffding's Inequality assumes a worst-case variance on the utilities (i.e., $\UtilityVariance_{\PlayerIndex}(\StratProfile) = \nicefrac{\UtilityRange^2}{4}$), when variances are small, it yields a very loose bound. When the variances of utilities are known, Bennett's inequality provides a non-uniform, variance-sensitive guarantee.

\begin{theorem}[Bennett's Inequality]
    \label{thm:bennett}
    Let $\EmpiricalGame{\Simulator}\doteq\langle \SetOfPlayers, \StratProfileSpace, \EUtility\rangle$ be an empirical game. Then, with probability at least $1 - \delta$, for all $(\PlayerIndex, \StratProfile)\in \UtilityIndices$, it holds that
    \begin{equation*}
        \abs{\Utility_p(\StratProfile) - \EUtility_p(\StratProfile)} \leq \frac{\UtilityRange \ln\left(\nicefrac{2\abs{\UtilityIndices}}{\delta}\right)}{3\NumberQueries_{\StratProfile}} + \sqrt{\frac{2\UtilityVariance_{\PlayerIndex}(\StratProfile)\ln\left(\nicefrac{2\abs{\UtilityIndices}}{\delta}\right)}{\NumberQueries_{\StratProfile}}}
    \end{equation*}
\end{theorem}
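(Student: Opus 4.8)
The plan is to mirror the two-stage structure already used for Hoeffding's bound (\Cref{thm:hoeffding}): first prove a variance-sensitive one-sided tail bound for a \emph{single} empirical utility $\EUtility_\PlayerIndex(\StratProfile)$, and then combine both tails across all $(\PlayerIndex, \StratProfile)\in\UtilityIndices$ by a union bound. Fixing one pair $(\PlayerIndex, \StratProfile)$, note that $\EUtility_\PlayerIndex(\StratProfile) = \frac{1}{\NumberQueries_\StratProfile}\sum_{i=1}^{\NumberQueries_\StratProfile}\SampleUtility^{(i)}_\PlayerIndex(\StratProfile)$ is a mean of $\NumberQueries_\StratProfile$ i.i.d.\ samples, each supported on an interval of length at most $\UtilityRange$ (Bounded Utilities), with mean $\Utility_\PlayerIndex(\StratProfile)$ and variance $\UtilityVariance_\PlayerIndex(\StratProfile)$. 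Centering $Y_i \doteq \SampleUtility^{(i)}_\PlayerIndex(\StratProfile) - \Utility_\PlayerIndex(\StratProfile)$, bounding the moment generating function of a zero-mean variable that is bounded above by $\UtilityRange$ with variance $\UtilityVariance_\PlayerIndex(\StratProfile)$, and then optimizing the Chernoff parameter yields the classical Bennett tail bound
\[
\Probp{\EUtility_\PlayerIndex(\StratProfile) - \Utility_\PlayerIndex(\StratProfile) \ge t} \le \exp\!\left(-\frac{\NumberQueries_\StratProfile\, \UtilityVariance_\PlayerIndex(\StratProfile)}{\UtilityRange^2}\, h\!\left(\frac{\UtilityRange\, t}{\UtilityVariance_\PlayerIndex(\StratProfile)}\right)\right),
\]
where $h(x) \doteq (1+x)\ln(1+x) - x$; the matching lower-tail bound follows by applying the same argument to $-Y_i$.

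The hard part is inverting this to isolate $t$: since $h$ is transcendental, the equation $\frac{\NumberQueries_\StratProfile\UtilityVariance_\PlayerIndex(\StratProfile)}{\UtilityRange^2}h(\cdot) = \ln(1/\delta')$ has no closed-form solution. The obvious first move, lower-bounding $h(x) \ge \frac{x^2}{2(1 + x/3)}$ (which collapses Bennett into Bernstein), is too lossy: solving the resulting quadratic produces a linear term $\frac{2\UtilityRange\ln(1/\delta')}{3\NumberQueries_\StratProfile}$, a factor of two larger than the $\frac{\UtilityRange\ln(1/\delta')}{3\NumberQueries_\StratProfile}$ claimed. Instead, I would verify directly that the claimed radius is admissible. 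Writing $\delta' \doteq \nicefrac{\delta}{2\abs{\UtilityIndices}}$, $L \doteq \ln(1/\delta')$, and $r \doteq \frac{\UtilityRange^2 L}{\UtilityVariance_\PlayerIndex(\StratProfile)\NumberQueries_\StratProfile}$, a short computation shows that substituting the candidate radius $t^\star \doteq \frac{\UtilityRange L}{3\NumberQueries_\StratProfile} + \sqrt{\frac{2\UtilityVariance_\PlayerIndex(\StratProfile)L}{\NumberQueries_\StratProfile}}$ makes $\frac{\UtilityRange t^\star}{\UtilityVariance_\PlayerIndex(\StratProfile)} = \sqrt{2r} + \frac{r}{3}$, so the exponent dominates $L$ exactly when the scalar claim $h\!\left(\sqrt{2r} + \frac{r}{3}\right) \ge r$ holds for all $r \ge 0$. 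This is the crux; I would establish it by calculus — both sides agree to leading order as $r \to 0$ (since $h(x) \approx \tfrac12 x^2$ there, and $u = \sqrt{2r}+\nicefrac r3$ gives $h(u)\approx r$), with the $\nicefrac{r}{3}$ term keeping the gap nonnegative thereafter — or simply cite it as the standard sharp inversion of the Bennett function used by \citet{cousins2022computational}.

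Finally I would assemble the union bound exactly as for \Cref{thm:hoeffding}. There are $\abs{\UtilityIndices}$ player--profile pairs and two tails apiece, hence $2\abs{\UtilityIndices}$ events; allocating failure probability $\delta' = \nicefrac{\delta}{2\abs{\UtilityIndices}}$ to each and summing gives total failure probability at most $\delta$. On the complementary event $\abs{\Utility_\PlayerIndex(\StratProfile) - \EUtility_\PlayerIndex(\StratProfile)} \le t^\star$ holds simultaneously for every $(\PlayerIndex, \StratProfile)\in\UtilityIndices$, which is precisely the stated bound since $L = \ln(\nicefrac{2\abs{\UtilityIndices}}{\delta})$. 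I expect the moment-generating-function and union-bound steps to be entirely routine; essentially all the difficulty lives in the sharp scalar inversion $h(\sqrt{2r}+\nicefrac r3)\ge r$.
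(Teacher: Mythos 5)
The paper itself does not prove \Cref{thm:bennett}; it imports the bound from \citet{cousins2022computational} (see the remark at the top of the Tail Bounds subsection), so there is no in-paper argument to compare against. Your proposal is, however, a correct reconstruction of the standard derivation, and you correctly identify the one non-routine point: the naive Bernstein-style inversion of the quadratic gives a linear term $\frac{2\UtilityRange\ln(1/\delta')}{3\NumberQueries_{\StratProfile}}$, so recovering the stated $\frac{\UtilityRange\ln(1/\delta')}{3\NumberQueries_{\StratProfile}}$ requires plugging the candidate radius back into the genuine Bennett exponent and verifying the scalar inequality $h\!\left(\sqrt{2r}+\nicefrac{r}{3}\right)\ge r$ for $h(x)=(1+x)\ln(1+x)-x$. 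That inequality is true and is exactly the sharp inversion underlying the Maurer--Pontil/Audibert form of Bernstein's inequality, but your justification of it covers only a neighborhood of $r=0$ (the Taylor expansion showing the $r^{3/2}$ terms cancel and a $+\nicefrac{r^{2}}{18}$ surplus remains); a self-contained write-up would still need a global argument, e.g.\ showing that $r\mapsto h\!\left(\sqrt{2r}+\nicefrac{r}{3}\right)-r$ has nonnegative derivative, or an explicit citation. Two small points worth making explicit: the one-sided MGF bound needs the centered samples to be bounded \emph{above} by $\UtilityRange$, which holds for both tails since each $\SampleUtility^{(i)}_{\PlayerIndex}(\StratProfile)$ and its mean lie in a common interval of length at most $\UtilityRange$; and the union bound over $2\abs{\UtilityIndices}$ one-sided events with per-event budget $\nicefrac{\delta}{2\abs{\UtilityIndices}}$ matches the $\ln(\nicefrac{2\abs{\UtilityIndices}}{\delta})$ in the statement, as you say.
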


Of course, since our only access to the utilities of the simulation-based game are via the simulator, we do not know their variances. \citet{cousins2022computational} circumvent this limitation by deriving an ``empirical Bennett's inequality'' depending on empirical estimates of the true utility variances.

\begin{theorem}[Empirical Bennett's Inequality]
    \label{thm:eBennett}
    Let $\EmpiricalGame{\Simulator}\doteq\langle \SetOfPlayers, \StratProfileSpace, \EUtility\rangle$ be an empirical game. Let $\kappa_\delta\doteq\left(\frac{1}{3} + \frac{1}{2\ln\left(\frac{3\abs{\UtilityIndices}}{\delta}\right)}\right)$. For all $(\PlayerIndex, \StratProfile)\in \UtilityIndices$, define
    \begin{align*}
        \footnotesize{\EUtilityVariance_{\PlayerIndex}(\StratProfile)} 
        &
        \doteq 
        \mathsmaller{\frac{1}{\NumberOfSamples - 1}} 
        \ssum_{j = 1}^{\NumberOfSamples} 
        \left( \Utility_{\PlayerIndex}(\StratProfile; \SamplePoint_j) -
        \hat{\Utility}_{\PlayerIndex}(\StratProfile; \Samples) \right)^2;
        \\
        \EVarianceEpsilon_{\PlayerIndex}(\StratProfile)
        &\doteq \mathsmaller{\frac{2 \UtilityRange^{2} \ln \left( \frac{3 \abs{\UtilityIndices}}{\delta} \right)}{3 \NumberOfSamples} + \sqrt{\kappa_\delta \Bigl( \frac{\UtilityRange^{2} \ln \left( \frac{3\abs{\UtilityIndices}}{\delta} \right)}{\NumberOfSamples - 1}\Bigr)^{\smash{2}} + \frac{2 \UtilityRange^{2} \EUtilityVariance_\PlayerIndex(\StratProfile) \ln \left( \frac{3\abs{\UtilityIndices}}{\delta} \right)}{\NumberOfSamples}}};\\
        \epsilonEB_\PlayerIndex(\StratProfile)
        &\doteq \mathsmaller{\frac{\UtilityRange\ln \left( \frac{3 \abs{\UtilityIndices}}{\delta} \right)}{3 \NumberOfSamples} + \sqrt{\frac{2 \left(\EUtilityVariance_\PlayerIndex (\StratProfile) + \EVarianceEpsilon_{\PlayerIndex}( \StratProfile)\right) \ln \left( \frac{3\abs{\UtilityIndices}}{\delta} \right)}{\NumberOfSamples}}}.
    \end{align*}
    Then, with probability at least $1 - \delta$, for all $(\PlayerIndex, \StratProfile)\in \UtilityIndices$, it holds that $\abs{\Utility_p(\StratProfile) - \EUtility_p(\StratProfile)} \leq \epsilonEB_{\PlayerIndex}(\StratProfile)$.
\end{theorem}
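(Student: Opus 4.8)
The plan is to bootstrap the true-variance Bennett bound of \Cref{thm:bennett} into a fully empirical bound by replacing the unknown $\UtilityVariance_\PlayerIndex(\StratProfile)$ with its empirical estimate $\EUtilityVariance_\PlayerIndex(\StratProfile)$ plus a one-sided correction $\EVarianceEpsilon_\PlayerIndex(\StratProfile)$ that controls how far the true variance can exceed the empirical variance. Concretely, I would establish two high-probability events for each pair $(\PlayerIndex,\StratProfile)$: (i) the two-sided mean-concentration event supplied by Bennett's inequality, and (ii) a one-sided variance-concentration event asserting $\UtilityVariance_\PlayerIndex(\StratProfile) \leq \EUtilityVariance_\PlayerIndex(\StratProfile) + \EVarianceEpsilon_\PlayerIndex(\StratProfile)$. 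On the intersection of these events I would substitute (ii) into the variance slot of (i); since $\epsilonEB_\PlayerIndex(\StratProfile)$ is precisely the Bennett bound with $\UtilityVariance_\PlayerIndex(\StratProfile)$ overestimated by $\EUtilityVariance_\PlayerIndex(\StratProfile)+\EVarianceEpsilon_\PlayerIndex(\StratProfile)$, this immediately yields $\abs{\Utility_\PlayerIndex(\StratProfile) - \EUtility_\PlayerIndex(\StratProfile)} \leq \epsilonEB_\PlayerIndex(\StratProfile)$.

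The crux is deriving the variance-concentration event and its explicit correction $\EVarianceEpsilon_\PlayerIndex(\StratProfile)$. I would apply Bennett's inequality a \emph{second} time, now to the centered squared samples $Z_i \doteq (\SampleUtility_\PlayerIndex^{(i)}(\StratProfile) - \Utility_\PlayerIndex(\StratProfile))^2$, which under the bounded-utilities assumption have range at most $\UtilityRange^2$ and, crucially, variance at most $\UtilityRange^2 \UtilityVariance_\PlayerIndex(\StratProfile)$ (since $\abs{\SampleUtility_\PlayerIndex^{(i)}(\StratProfile) - \Utility_\PlayerIndex(\StratProfile)} \leq \UtilityRange$ forces $(\SampleUtility_\PlayerIndex^{(i)}(\StratProfile) - \Utility_\PlayerIndex(\StratProfile))^4 \leq \UtilityRange^2 (\SampleUtility_\PlayerIndex^{(i)}(\StratProfile) - \Utility_\PlayerIndex(\StratProfile))^2$). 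This produces a bound on $\UtilityVariance_\PlayerIndex(\StratProfile) - \EUtilityVariance_\PlayerIndex(\StratProfile)$ whose leading stochastic term is itself $\sqrt{\UtilityRange^2 \UtilityVariance_\PlayerIndex(\StratProfile)\ln(\cdot)/\NumberOfSamples}$ — that is, the bound on the true variance depends on the true variance. I would then treat this as a self-referential quadratic inequality in $\UtilityVariance_\PlayerIndex(\StratProfile)$ and solve it, replacing the residual true-variance appearance on the right by $\EUtilityVariance_\PlayerIndex(\StratProfile)$; completing the square and bounding the resulting cross terms is exactly what produces the constant $\kappa_\delta$ and the closed form of $\EVarianceEpsilon_\PlayerIndex(\StratProfile)$. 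Along the way I would account for Bessel's correction, i.e., the gap between centering at the true mean $\Utility_\PlayerIndex(\StratProfile)$ and at the empirical mean $\EUtility_\PlayerIndex(\StratProfile)$, which is the source of the $\NumberOfSamples - 1$ denominators.

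The final step is the union-bound bookkeeping that fixes the logarithmic argument at $\ln(\nicefrac{3\abs{\UtilityIndices}}{\delta})$ rather than the $\ln(\nicefrac{2\abs{\UtilityIndices}}{\delta})$ of \Cref{thm:hoeffding} and \Cref{thm:bennett}. For each of the $\abs{\UtilityIndices}$ pairs I now require three tail events to hold simultaneously — the two directions of the mean deviation together with the single upper direction of the variance deviation — so I would allocate failure probability $\nicefrac{\delta}{3\abs{\UtilityIndices}}$ to each. This gives $3\abs{\UtilityIndices}$ events whose total failure probability is at most $\delta$, and substituting $\nicefrac{\delta}{3\abs{\UtilityIndices}}$ into each one-sided Bennett application yields the common factor $\ln(\nicefrac{3\abs{\UtilityIndices}}{\delta})$.

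I expect the main obstacle to be the variance-concentration step: establishing the bound on the variance of the squared samples cleanly, correctly handling the empirical-versus-true centering, and — most delicately — solving the resulting implicit quadratic so that the true variance is eliminated in favor of $\EUtilityVariance_\PlayerIndex(\StratProfile)$ without loosening the bound more than necessary (it is this solve that forces the somewhat opaque $\kappa_\delta$ coefficient). The mean-substitution and union-bound steps are then routine given the tail inequalities already stated.
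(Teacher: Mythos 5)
The paper never proves \Cref{thm:eBennett}: it imports the bound verbatim from \citet{cousins2022computational} (``These are the same bounds derived and used by \citeauthor{cousins2022computational}''), so there is no in-paper proof to compare against. Your reconstruction matches the derivation in that cited source --- Bennett on the mean with the true variance, a one-sided empirical variance bound obtained by concentrating the squared deviations (using $\Varop[(X-\mu)^2]\leq \UtilityRange^2\UtilityVariance$), a self-referential quadratic solve that produces $\kappa_\delta$ and $\EVarianceEpsilon$, and a three-events-per-index union bound explaining the $\ln(\nicefrac{3\abs{\UtilityIndices}}{\delta})$ factor --- and the one step you flag as delicate (relating $\frac{1}{\NumberOfSamples}\sum_j(\SampleUtility^{(j)}-\Utility)^2$ to the empirically centered $\EUtilityVariance$, which leaves a residual $(\EUtility-\Utility)^2$ term that must itself be absorbed via the mean-concentration event before solving the quadratic) is indeed where the extra constants and the $\NumberOfSamples-1$ denominators come from.
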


This empirical Bennett guarantee forms the basis for our progressive sampling algorithms.

\subsection{Progressive Sampling Algorithms}
\begin{algorithm*}
\algrenewcommand\algorithmicindent{2.0em}
\begin{algorithmic}[1]
\Procedure{PSP}{$\ConditionalGame{\ConditionSpace}, \ConditionDistribution, \UtilityIndices, \UtilityRange, \delta, \epsilon$}
 
\State \parbox[t]{\dimexpr\textwidth-\leftmargin-\labelsep-\labelwidth}{%
\Input 
Conditional game $\ConditionalGame{\ConditionSpace}$,
condition distribution $\ConditionDistribution$,
index set $\UtilityIndices$,
failure probability $\delta \in (0, 1)$,
target error $\epsilon > 0$\strut}


\State{Initialize empirical utilities $\EUtility^{(0)}_\PlayerIndex(\StratProfile) = 0$ for $(\PlayerIndex, \StratProfile)\in\UtilityIndices$}

\State{Initialize utility deviation bounds $\hat{\epsilon}^{(0)}_\PlayerIndex(\StratProfile) = \infty$ for $(\PlayerIndex, \StratProfile)\in\UtilityIndices$}

\State{Initialize active utility index set $\UtilityIndices^{(0)}\gets \UtilityIndices$}

\State{Initialize a sampling schedule $\NumberOfSamples_1, \dots, \NumberOfSamples_\ScheduleLength$ and cumulative sample size $\CumulativeSamples_0\gets 0$}

\For{$\TimeIndex \in 1, \dots, \ScheduleLength$}
    
    \For{$\StratProfile\in\StratProfileSpace$}
        \State{Determine unpruned player indices $\SetOfPlayers(\StratProfile; \UtilityIndices^{(\TimeIndex)}) \gets \{\PlayerIndex\in\SetOfPlayers \mid (\PlayerIndex, \StratProfile)\in \UtilityIndices^{(\TimeIndex)}\}$ at strategy profile $\StratProfile$}

        \State{Query simulator for utilities of unpruned players: $\{\SampleUtility_\PlayerIndex(\StratProfile) \mid \PlayerIndex\in\SetOfPlayers(\StratProfile; \UtilityIndices^{(\TimeIndex)})\}\gets \Simulator(\StratProfile; \SetOfPlayers(\StratProfile; \UtilityIndices^{(\TimeIndex)}))$}
    
        \State{Update empirical utilities $\EUtility^{(t)}_\PlayerIndex(\StratProfile) \gets \frac{\CumulativeSamples_{\TimeIndex - 1}}{\CumulativeSamples_{\TimeIndex - 1} + \NumberOfSamples_\TimeIndex}\cdot \EUtility^{(t-1)}_\PlayerIndex(\StratProfile) + \frac{\NumberOfSamples_\TimeIndex}{\CumulativeSamples_{\TimeIndex - 1} + \NumberOfSamples_\TimeIndex}\cdot \SampleUtility_\PlayerIndex(\StratProfile)$ for $\PlayerIndex\in\SetOfPlayers(\StratProfile; \UtilityIndices^{(\TimeIndex)})$}

        \State{Compute new utility deviation bounds $\hat{\epsilon}^{(t)}_\PlayerIndex(\StratProfile)$ for $\PlayerIndex\in\SetOfPlayers(\StratProfile; \UtilityIndices^{(\TimeIndex)})$, each with failure probability $\nicefrac{\delta}{\abs{\UtilityIndices}\ScheduleLength}$}
    
    \EndFor

    \State{Update cumulative sample size $\CumulativeSamples_\TimeIndex \gets \CumulativeSamples_{\TimeIndex - 1} + \NumberOfSamples_\TimeIndex$}

    \State{\parbox[t]{\dimexpr\textwidth-\leftmargin-\labelsep-\labelwidth}{%
    Prune any indices in $\UtilityIndices^{(\TimeIndex)}$ which do not require further estimation
    (e.g., well-estimated pruning:\\ $\UtilityIndices^{(\TimeIndex)} \gets \smash{\{ (\PlayerIndex, \StratProfile) \in \UtilityIndices^{(\TimeIndex - 1)} \mid \hat{\epsilon}^{(t)}_{\PlayerIndex} (\StratProfile) > \epsilon \}}$, i.e., prune indices that have met the target $\epsilon$ error guarantee)\strut}} \label{alg:psp:pruning}
    
    \If{all indices in $\UtilityIndices^{(\TimeIndex)}$ are pruned (i.e., $\UtilityIndices^{(\TimeIndex)} = \emptyset$)}
        
        \State \Return{empirical utilities $\EUtility^{(t)}$}
    \EndIf

\EndFor

\EndProcedure
\end{algorithmic}

\caption{General Progressive Sampling Algorithm}
\label{alg:psp}
\end{algorithm*}

Finally, we present the general class of progressive sampling (PS) algorithms (see \Cref{alg:psp}) for learning simualation-based games. As the name suggests, progressive sampling algorithms work by progressively sampling utilities, pruning those which are sufficiently estimated for the relevant learning goal at hand. One core component of progressive sampling algorithms is the sampling schedule. On each iteration, a progressive sampling algorithm will collect the number of samples dictated by the sampling schedule for each active (i.e., unpruned) utility index. It will then use the samples to update the empirical game and to compute new utility deviation bounds (which in our case will be dependent on the tail bounds introduced earlier). Notice that the utility deviation bounds must each have individual failure probability $\frac{\delta}{\abs{\UtilityIndices}\ScheduleLength}$, as opposed to just $\frac{\delta}{\abs{\UtilityIndices}}$ as in \Cref{thm:eBennett}. This is to ensure that, via an additional union bound, all pruned indices will, with high probability (w.h.p.), have been pruned justifiably with respect to the true game. Finally, the empirical game and utility deviation bounds will be used to inform which utility indices can be pruned on the current iteration. Progressive sampling algorithms terminate once either all utility indices are pruned, or the sampling schedule is exhausted.

In this paper, we focus on PS algorithms for learning equilibria. On the basis of \Cref{thm:dual-containment-tuyls}, one sufficient condition for pruning an index $(\PlayerIndex, \StratProfile)\in\UtilityIndices$ is that it is estimated (w.h.p.) to within some target error $\epsilon$. If this is the only pruning criteria used, then upon termination of the algorithm, if all indices have been pruned, the resulting empirical game will (w.h.p.) satisfy $\abs{\Utility_\PlayerIndex(\StratProfile) - \EUtility_\PlayerIndex(\StratProfile)}\leq \epsilon$ for all $(\PlayerIndex, \StratProfile)\in\UtilityIndices$, and will thus (w.h.p.) satisfy the pure and mixed dual Nash containment results in \Cref{thm:dual-containment-tuyls}. \citet{cousins2022computational} design precisely this algorithm, using Hoeffding (\Cref{thm:hoeffding}) and empirical Bennett (\Cref{thm:eBennett}) bounds to inform their pruning (pruning once the guarantee corresponding to an index is tighter than the target error $\epsilon$), and carefully crafting a sampling schedule which guarantees that all indices will be pruned prior to its exhaustion. They refer to this pruning approach as \emph{well-estimated pruning} \citep{cousins2022computational}. We show an example of how the pruning criteria may be implemented in \Cref{alg:psp} \Cref{alg:psp:pruning}.

\section{Regret Pruning}

Having covered the requisite background, we can now begin our discussion of regret pruning and present our novel regret-pruning variations. As discussed earlier, \Cref{thm:dual-containment-tuyls} immediately suggests well-estimated pruning as a pruning approach, and this approach was used to design the PS algorithm presented in \citet{cousins2022computational} (which we henceforth refer to as PS-WE for Progressive Sampling with Well-Estimated Pruning). 
\begin{restatable}{theorem}{algoPSWE}
    If \PSWE{}$(\ConditionalGame{\ConditionSpace}, \ConditionDistribution, \UtilityIndices, \delta, \epsilon)$ returns an empirical utility function $\EUtility$, then with probability at least $1 - \delta$, for all $\gamma\geq 0$, it holds that 
    \begin{enumerate}
        \item $\Equilibria_{\gamma}(\Utility)\subseteq \Equilibria_{2\epsilon + \gamma}(\EUtility)$ and $\Equilibria_{\gamma}(\EUtility)\subseteq \Equilibria_{2\epsilon + \gamma}(\Utility)$
        \item $\MixedEquilibria_{\gamma}(\Utility)\subseteq \MixedEquilibria_{2\epsilon + \gamma}(\EUtility)$ and $\MixedEquilibria_{\gamma}(\EUtility)\subseteq \MixedEquilibria_{2\epsilon + \gamma}(\Utility)$\enspace. 
    \end{enumerate}
\end{restatable}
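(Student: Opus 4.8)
The plan is to reduce this statement to \Cref{thm:dual-containment-tuyls} by establishing, as the single key intermediate claim, that whenever \PSWE{} returns an empirical utility function $\EUtility$, the uniform approximation guarantee $\abs{\Utility_\PlayerIndex(\StratProfile) - \EUtility_\PlayerIndex(\StratProfile)}\leq \epsilon$ holds for all $(\PlayerIndex, \StratProfile)\in\UtilityIndices$ with probability at least $1 - \delta$. Once this event is secured, both the pure and mixed dual containment conclusions (parts 1 and 2, for every $\gamma \geq 0$) follow immediately by instantiating \Cref{thm:dual-containment-tuyls} with this $\epsilon$, since that lemma already delivers exactly the generalized $\gamma$-parametrized containments for both equilibrium notions under the uniform approximation hypothesis.

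First I would examine the termination condition. By the definition of \PSWE{}, the algorithm returns $\EUtility$ only when the active index set becomes empty, i.e., when every index $(\PlayerIndex, \StratProfile)\in\UtilityIndices$ has been removed by the well-estimated pruning rule $\UtilityIndices^{(\TimeIndex)} \gets \{ (\PlayerIndex, \StratProfile) \in \UtilityIndices^{(\TimeIndex - 1)} \mid \hat{\epsilon}^{(t)}_{\PlayerIndex} (\StratProfile) > \epsilon \}$. Thus each index is pruned at some iteration $\TimeIndex_{\PlayerIndex,\StratProfile}\in\{1,\dots,\ScheduleLength\}$ at which its deviation bound satisfies $\hat{\epsilon}^{(\TimeIndex_{\PlayerIndex,\StratProfile})}_{\PlayerIndex}(\StratProfile)\leq\epsilon$. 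Crucially, once an index is pruned no further queries are placed for it, so its empirical estimate is frozen thereafter; hence the returned value $\EUtility_\PlayerIndex(\StratProfile)$ equals the estimate $\EUtility^{(\TimeIndex_{\PlayerIndex,\StratProfile})}_\PlayerIndex(\StratProfile)$ in force at the pruning step.

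Next I would handle the probabilistic bookkeeping, which is where the main subtlety lies. Each deviation bound $\hat{\epsilon}^{(t)}_\PlayerIndex(\StratProfile)$ is computed (via the empirical Bennett guarantee of \Cref{thm:eBennett}, or Hoeffding) at failure probability $\nicefrac{\delta}{\abs{\UtilityIndices}\ScheduleLength}$ rather than $\nicefrac{\delta}{\abs{\UtilityIndices}}$. The reason is that the pruning iteration $\TimeIndex_{\PlayerIndex,\StratProfile}$ is itself data-dependent and not known in advance, so I cannot simply invoke concentration at a single pre-specified round. Instead I would take a union bound over all $\abs{\UtilityIndices}\cdot\ScheduleLength$ pairs $\bigl((\PlayerIndex,\StratProfile),\TimeIndex\bigr)$, so that with probability at least $1-\abs{\UtilityIndices}\ScheduleLength\cdot\nicefrac{\delta}{\abs{\UtilityIndices}\ScheduleLength}=1-\delta$, the bound $\abs{\Utility_\PlayerIndex(\StratProfile)-\EUtility^{(t)}_\PlayerIndex(\StratProfile)}\leq\hat{\epsilon}^{(t)}_\PlayerIndex(\StratProfile)$ holds simultaneously for every index at every round. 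On this good event, specializing to each index's own pruning round $\TimeIndex_{\PlayerIndex,\StratProfile}$ and chaining with $\hat{\epsilon}^{(\TimeIndex_{\PlayerIndex,\StratProfile})}_\PlayerIndex(\StratProfile)\leq\epsilon$ yields $\abs{\Utility_\PlayerIndex(\StratProfile)-\EUtility_\PlayerIndex(\StratProfile)}\leq\epsilon$ for all $(\PlayerIndex,\StratProfile)\in\UtilityIndices$, which is precisely the hypothesis of \Cref{thm:dual-containment-tuyls}.

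The main obstacle is this union-bound step: one must justify the $\ScheduleLength$-fold inflation of the per-index failure budget and verify that the frozen-estimate observation lets the round-$\TimeIndex_{\PlayerIndex,\StratProfile}$ concentration bound transfer to the \emph{returned} utilities. Everything else is a direct application of \Cref{thm:dual-containment-tuyls}, whose $\gamma$-general form already covers both enumerated conclusions.
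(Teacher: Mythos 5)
Your proposal is correct and follows essentially the same route as the paper's proof: a union bound over all $\abs{\UtilityIndices}\cdot\ScheduleLength$ index--iteration pairs at individual failure probability $\nicefrac{\delta}{\abs{\UtilityIndices}\ScheduleLength}$, the observation that the returned estimate for each index is the one frozen at its pruning round where $\hat{\epsilon}^{(\TimeIndex)}_\PlayerIndex(\StratProfile)\leq\epsilon$, and then a direct appeal to \Cref{thm:dual-containment-tuyls}. No gaps.
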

The condition presented in \Cref{thm:dual-containment-tuyls} (i.e., $\abs{\Utility_\PlayerIndex(\StratProfile) - \EUtility_\PlayerIndex(\StratProfile)}\leq \epsilon$ for all $(\PlayerIndex, \StratProfile)\in\UtilityIndices$), however, is not the only condition to yield these kinds of Nash containment results. Consider the PS algorithm designed in \citet{areyan2020improved} (which we refer to as \PSRegOld{} for Progressive Sampling with Regret Pruning; the 0 will distinguish this algorithm from our new variations). Unlike \PSWE{}, \PSRegOld{} uses uniform utility deviation bounds -- they simply bound all utility deviations by $\sup_{(\PlayerIndex, \StratProfile)\in\UtilityIndices} \epsilonEB_\PlayerIndex(\StratProfile)$ ($\epsilonEB$ defined as in \Cref{thm:eBennett})\footnote{\citet{areyan2020improved} use a version of the empirical Bennett tail bounds that is a constant factor looser than the one presented here, as the tighter bound had not been derived until \citet{cousins2022computational}}. \PSRegOld{} uses well-estimated pruning (though the authors do not explicitly call it that), but it additionally uses what the authors call ``regret pruning'' \citep{areyan2020improved}. The authors claim that though using this pruning approach does \emph{not} guarantee that the condition in \Cref{thm:dual-containment-tuyls} is met upon termination of the algorithm, it nonetheless guarantees that (w.h.p.) the pure Nash containment result $\Equilibria(\Utility)\subseteq \Equilibria_{2\epsilon}(\EUtility)\subseteq \Equilibria_{4\epsilon}(\Utility)$ is satisfied. In this section, we present a counterexample which shows that their pruning approach does not in fact guarantee this pure Nash containment result. We show that, instead, their pruning approach is only able to guarantee a weaker Nash containment result. 
Furthermore, we also present 3 novel variations of the regret pruning criterion presented in \citet{areyan2020improved} which each have varying benefits and costs in comparison to the original. 
The first variation is a generalization of the original regret pruning criterion with respect to a hyper-parameter $\gamma^*\geq 0$. We show that when $\gamma^* = 0$, this variation is identical to the regret pruning criterion from \PSRegOld{}. When $\gamma^* = 2\epsilon$, however, we show that, at the cost of taking slightly longer to prune indices, this new variation yields the stronger pure Nash containment guarantee which \citet{areyan2020improved} originally intended their pruning criterion to meet (i.e., $\Equilibria(\Utility)\subseteq \Equilibria_{2\epsilon}(\EUtility)\subseteq \Equilibria_{4\epsilon}(\Utility)$). The second variation takes advantage of non-uniform utility deviation bounds to yield the same guarantee as the first variation, while pruning indices significantly sooner in practice than otherwise. Finally, the third variation modifies the second variation, yielding a mixed Nash containment guarantee in addition to the same pure Nash guarantee, at the cost of a slightly tighter pruning criterion than variation 2.

\subsection{Old Regret Pruning}

We begin by presenting the regret pruning criterion used in \PSRegOld{} from \citet{areyan2020improved}. Since \PSRegOld{} uses uniform utility deviation bounds we simply use $\hat{\epsilon}^{(t)}$ to denote the utility deviation bound on $\EUtility^{(t)}$ at iteration $t$. \PSRegOld{} prunes an index $(\PlayerIndex, \StratProfile)\in\UtilityIndices$ on an iteration $t$ if any of the following holds:
\begin{enumerate}
    \item $\hat{\epsilon}^{(t)}\leq \epsilon$ (well-estimated pruning)
    \item $\Regret_\PlayerIndex(\StratProfile; \EUtility^{(t)}) \geq 2\hat{\epsilon}^{(t)}$ (regret pruning).
\end{enumerate}
\citet{areyan2020improved} claim that \PSRegOld{} satisfies the following guarantee.
\begin{claim}
    \label{clm:old-regret}
     If \PSRegOld{}$(\ConditionalGame{\ConditionSpace}, \ConditionDistribution, \UtilityIndices, \delta, \epsilon)$ returns an empirical utility function $\EUtility$, then with probability at least $1 - \delta$, it holds that $\Equilibria(\Utility)\subseteq \Equilibria_{2\epsilon}(\EUtility)\subseteq \Equilibria_{4\epsilon}(\Utility)$.
\end{claim}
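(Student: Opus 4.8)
The plan is to condition throughout on the high-probability event $\mathcal{G}$ guaranteed by the algorithm's union bound: for every index $(\PlayerIndex, \StratProfile)\in\UtilityIndices$ and every iteration $t$, $\abs{\Utility_\PlayerIndex(\StratProfile) - \EUtility^{(t)}_\PlayerIndex(\StratProfile)}\leq \hat{\epsilon}^{(t)}$. Because each deviation bound is computed at failure probability $\nicefrac{\delta}{\abs{\UtilityIndices}\ScheduleLength}$, a union bound over all indices and iterations gives $\Probp{\mathcal{G}}\geq 1-\delta$, after which every step is deterministic. Upon termination each index was removed in one of two ways: by well-estimated pruning, so its frozen estimate is accurate to within $\epsilon$; or by regret pruning at some iteration $t$, where we know only that $\Regret_\PlayerIndex(\StratProfile; \EUtility^{(t)})\geq 2\hat{\epsilon}^{(t)}$ with $\hat{\epsilon}^{(t)}$ possibly exceeding $\epsilon$. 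Both inclusions rest on the Lipschitz-type relation behind \Cref{thm:dual-containment-tuyls}: if the utilities feeding a player's regret are each accurate to within $\eta$, then that player's regrets under $\Utility$ and $\EUtility$ differ by at most $2\eta$. The entire difficulty lies in the regret-pruned indices, for which $\eta$ is the coarse $\hat{\epsilon}^{(t)}$ rather than $\epsilon$.

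For the first inclusion $\Equilibria(\Utility)\subseteq\Equilibria_{2\epsilon}(\EUtility)$, I would fix $\StratProfile\in\Equilibria(\Utility)$ and a player $\PlayerIndex$ and bound $\Regret_\PlayerIndex(\StratProfile; \EUtility)$. The key step is to argue that the indices governing player $\PlayerIndex$'s regret at $\StratProfile$ --- namely $(\PlayerIndex, \StratProfile)$ together with its deviations in $\Adjacent_{\PlayerIndex, \StratProfile}$ --- are not regret-pruned in a way that harms the bound: since $\StratProfile$ is a true best response, $\Regret_\PlayerIndex(\StratProfile;\Utility)=0$, and under $\mathcal{G}$ the Lipschitz relation forces $\Regret_\PlayerIndex(\StratProfile; \EUtility^{(t)})\leq 2\hat{\epsilon}^{(t)}$ at every $t$, so such an index is squeezed against the regret-pruning threshold and is instead removed by well-estimated pruning. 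Granting that the governing estimates are then accurate to $\epsilon$, the Lipschitz relation yields $\Regret_\PlayerIndex(\StratProfile; \EUtility)\leq 0 + 2\epsilon$, and maximizing over players places $\StratProfile$ in $\Equilibria_{2\epsilon}(\EUtility)$.

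The second inclusion $\Equilibria_{2\epsilon}(\EUtility)\subseteq\Equilibria_{4\epsilon}(\Utility)$ is where I expect the main obstacle. The symmetric attempt takes $\StratProfile$ with $\Regret(\StratProfile; \EUtility)\leq 2\epsilon$ and bounds each player's true regret by $2\epsilon + 2\eta$, which gives the target $4\epsilon$ only when the governing indices are well-estimated ($\eta=\epsilon$). The genuine gap is that an empirical $2\epsilon$-PNE may involve an index $(\PlayerIndex, \StratProfile)$ that regret pruning froze at accuracy $\hat{\epsilon}^{(t)} > \epsilon$, so the best available bound on the true regret is $2\epsilon + 2\hat{\epsilon}^{(t)} > 4\epsilon$. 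One would like to rescue the argument by showing a regret-pruned profile can never lie in $\Equilibria_{2\epsilon}(\EUtility)$, but this fails precisely because pruning freezes $\EUtility_\PlayerIndex(\StratProfile)$ while the competing utilities in $\Adjacent_{\PlayerIndex, \StratProfile}$ keep being refined: a profile pruned while its empirical regret exceeded $2\hat{\epsilon}^{(t)}$ can later see that regret drift below $2\epsilon$ as its deviation targets shrink, re-entering $\Equilibria_{2\epsilon}(\EUtility)$ despite a poorly estimated and genuinely high-regret base. I would therefore expect the proof to stall here, and to conclude either that a strictly larger threshold is needed --- foreshadowing the paper's $\gamma^*$-generalization with $\gamma^*=2\epsilon$ --- or that \Cref{clm:old-regret} is simply false as stated, with only the weaker guarantee on \emph{exact} empirical equilibria $\Equilibria(\EUtility)$ surviving, consistent with the counterexample the paper announces.
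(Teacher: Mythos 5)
Your diagnosis is exactly right and matches the paper's own treatment: the statement is false as written, the paper refutes it with a two-iteration counterexample whose mechanism is precisely the one you describe (an index regret-pruned and frozen at coarse accuracy $\hat{\epsilon}^{(t)} > \epsilon$ whose empirical regret later drifts below $2\epsilon$ as the adjacent utilities are refined downward, while its true regret exceeds $4\epsilon$), and only the weaker guarantee $\Equilibria(\Utility)\subseteq\Equilibria_{2\epsilon}(\EUtility)$ together with $\Equilibria(\EUtility)\subseteq\Equilibria_{2\epsilon}(\Utility)$ survives. The only thing separating your argument from a complete refutation is instantiating the failure mechanism as a concrete run, which the paper does with a two-by-one game, $\epsilon = 0.2$, true utilities $2$ and $1$, and empirical estimates $(2.5,\,1.45)$ at $\hat{\epsilon}^{(1)}=0.5$ followed by $(1.8,\,1.45)$ at $\hat{\epsilon}^{(2)}=0.2$.
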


We present a counter-example that shows that the second inclusion $\Equilibria_{2\epsilon}(\EUtility)\subseteq \Equilibria_{4\epsilon}(\Utility)$ does not necessarily hold.

\begin{counterexample}
    Consider a two-player game $\GameTuple$ in which player $A$ has two pure strategies, $a_1$ and $a_2$, while player $B$ has just one pure strategy $b$. Define player $A$'s utility function by $\Utility_A(a_1, b) = 2$ and $\Utility_A(a_2, b) = 1$. Suppose we run \PSRegOld{} with target error $\epsilon\doteq 0.2$ and get the following:
    \begin{alignat*}{3}
        &\textup{Iteration 1:}\quad
        \boxed{
        \begin{aligned}
            &\EUtility^{(1)}_A(a_1, b) = 2.5\\
            &\EUtility^{(1)}_A(a_2, b) = 1.45\\
            &\hat{\epsilon}^{(1)}=0.5\\
        \end{aligned}}&\\
        &\quad\Longrightarrow\quad
        \begin{aligned}[t]
            &\textup{Index $(A, (a_2, b))$ regret pruned}\\
            &\textup{(since $\hat{\epsilon}^{(1)} = 0.5 < \frac{\Regret_A((a_2, b); \EUtility^{(1)})}{2} = 0.525$)}
        \end{aligned}\\
        &\textup{Iteration 2:}\quad
        \boxed{
        \begin{aligned}
            &\EUtility^{(2)}_A(a_1, b) = 1.8\\
            &\EUtility^{(2)}_A(a_2, b) = 1.45\\
            &\hat{\epsilon}^{(2)}=0.2\\
        \end{aligned}}&\\\\
        &\quad\Longrightarrow\quad
        \begin{aligned}[t]
        &\textup{Index $(A, (a_1, b))$ well-estimated pruned;}\\
        &\textup{\PSRegOld{} terminates with $\EUtility_A=\EUtility^{(2)}_A$}
        \end{aligned}
    \end{alignat*}
    We have that $(a_2, b)\in \Equilibria_{2\epsilon}(\EUtility)$, since $\Regret_A((a_2, b); \EUtility) = 1.8 - 1.45 = 0.35 < 2\epsilon = 0.4$ and $\Regret_B((a_2, b); \EUtility) = 0$. But we have $(a_2, b)\not\in \Equilibria_{4\epsilon}(\Utility)$, since $\Regret_A((a_2, b); \Utility) = 2 - 1 = 1 > 4\epsilon = 0.8$. Hence, we have $\Equilibria_{2\epsilon}(\EUtility)\not\subseteq \Equilibria_{4\epsilon}(\Utility)$. Since all utility deviation guarantees have been held, this is not a failure case. Thus, \Cref{clm:old-regret} cannot be true.
\end{counterexample}

Though the second inclusion in \Cref{clm:old-regret} cannot be guaranteed, we show that an alternative inclusion does hold (w.h.p.).

\begin{restatable}{theorem}{regretOld}
    If \PSRegOld{}$(\ConditionalGame{\ConditionSpace}, \ConditionDistribution, \UtilityIndices, \delta, \epsilon)$ returns an empirical utility function $\EUtility$,
    then with probability at least $1 - \delta$, we have that $\Equilibria(\Utility)\subseteq \Equilibria_{2\epsilon}(\EUtility)$ and $\Equilibria(\EUtility)\subseteq \Equilibria_{2\epsilon}(\Utility)$.
\end{restatable}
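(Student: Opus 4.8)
The plan is to condition on a single high-probability ``good event'' and then reduce both inclusions to one structural lemma about which indices regret pruning can remove. First I would define the good event $\mathcal{G}$ to be the event that every deviation bound computed by \PSRegOld{} holds: for every index $(\PlayerIndex, \StratProfile)$ and every iteration $t$ at which it is active, $\abs{\Utility_\PlayerIndex(\StratProfile) - \EUtility^{(t)}_\PlayerIndex(\StratProfile)} \le \hat\epsilon^{(t)}$. Because each bound is computed with failure probability $\nicefrac{\delta}{\abs{\UtilityIndices}\ScheduleLength}$, a union bound over all indices and all $\ScheduleLength$ iterations gives $\Prob(\mathcal{G}) \ge 1 - \delta$; since a pruned index's utility and bound are frozen thereafter, on $\mathcal{G}$ the final estimates also satisfy $\abs{\Utility_\PlayerIndex(\StratProfile) - \EUtility_\PlayerIndex(\StratProfile)} \le \hat\epsilon_{\PlayerIndex,\StratProfile}$, where $\hat\epsilon_{\PlayerIndex,\StratProfile}$ denotes the bound in force when $(\PlayerIndex, \StratProfile)$ was pruned. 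Everything below is argued on $\mathcal{G}$.

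Central lemma: on $\mathcal{G}$, if $\StratProfile$ is a best response for player $\PlayerIndex$ in \emph{either} the true game $\Utility$ or the final empirical game $\EUtility$, then $(\PlayerIndex, \StratProfile)$ is well-estimated pruned, so $\hat\epsilon_{\PlayerIndex,\StratProfile} \le \epsilon$. To prove it I would suppose, for contradiction, that $(\PlayerIndex, \StratProfile)$ is regret pruned at some iteration $t$, i.e.\ there is an adjacent $\StratProfile' \in \Adjacent_{\PlayerIndex, \StratProfile}$ with $\EUtility^{(t)}_\PlayerIndex(\StratProfile') - \EUtility^{(t)}_\PlayerIndex(\StratProfile) > 2\hat\epsilon^{(t)}$. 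I would then use the deviation bounds at $t$ to show this strict $2\hat\epsilon^{(t)}$ margin survives: if $\StratProfile$ is a true best response, the bounds force $\Utility_\PlayerIndex(\StratProfile') > \Utility_\PlayerIndex(\StratProfile)$, contradicting best-response-ness; if instead $\StratProfile$ is an exact empirical best response, then $\StratProfile$'s frozen value $\EUtility_\PlayerIndex(\StratProfile) = \EUtility^{(t)}_\PlayerIndex(\StratProfile)$ would force $\StratProfile'$'s estimate to drop by more than $2\hat\epsilon^{(t)}$ between iteration $t$ and termination, which the two-sided bounds forbid. The key structural fact I would lean on is neighborhood invariance, $\Adjacent_{\PlayerIndex, \StratProfile'} = \Adjacent_{\PlayerIndex, \StratProfile}$ whenever $\StratProfile' \in \Adjacent_{\PlayerIndex, \StratProfile}$, so that witnesses of deviations stay within a single neighborhood and best-response comparisons transfer across it.

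Given the lemma, both inclusions fall out. For $\Equilibria(\Utility) \subseteq \Equilibria_{2\epsilon}(\EUtility)$, fix a true PNE $\StratProfile^*$ and a player $\PlayerIndex$; the lemma makes $(\PlayerIndex, \StratProfile^*)$ well-estimated, so $\EUtility_\PlayerIndex(\StratProfile^*) \ge \Utility_\PlayerIndex(\StratProfile^*) - \epsilon$. For each adjacent $\StratProfile'$ I would split on how it was pruned: a well-estimated $\StratProfile'$ gives $\EUtility_\PlayerIndex(\StratProfile') \le \Utility_\PlayerIndex(\StratProfile') + \epsilon \le \Utility_\PlayerIndex(\StratProfile^*) + \epsilon$, while a regret-pruned $\StratProfile'$ was frozen at least $2\hat\epsilon^{(s)}$ below a same-neighborhood witness and hence satisfies $\EUtility_\PlayerIndex(\StratProfile') \le \Utility_\PlayerIndex(\StratProfile^*)$; either way $\EUtility_\PlayerIndex(\StratProfile') - \EUtility_\PlayerIndex(\StratProfile^*) \le 2\epsilon$, so $\Regret_\PlayerIndex(\StratProfile^*; \EUtility) \le 2\epsilon$. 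For $\Equilibria(\EUtility) \subseteq \Equilibria_{2\epsilon}(\Utility)$, fix an exact empirical PNE $\StratProfile^*$, a player $\PlayerIndex$, and the true best response $\StratProfile^\dagger \in \Adjacent_{\PlayerIndex, \StratProfile^*}$; by the lemma \emph{both} $(\PlayerIndex, \StratProfile^*)$ (empirical best response) and $(\PlayerIndex, \StratProfile^\dagger)$ (true best response) are well-estimated. Combining $\Utility_\PlayerIndex(\StratProfile^\dagger) \le \EUtility_\PlayerIndex(\StratProfile^\dagger) + \epsilon$, the empirical equilibrium inequality $\EUtility_\PlayerIndex(\StratProfile^\dagger) \le \EUtility_\PlayerIndex(\StratProfile^*)$, and $\EUtility_\PlayerIndex(\StratProfile^*) \le \Utility_\PlayerIndex(\StratProfile^*) + \epsilon$ yields $\Regret_\PlayerIndex(\StratProfile^*; \Utility) = \Utility_\PlayerIndex(\StratProfile^\dagger) - \Utility_\PlayerIndex(\StratProfile^*) \le 2\epsilon$.

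Main obstacle: the delicate part is that the regret-pruning test is evaluated on the intermediate game $\EUtility^{(t)}$, in which some neighbors are frozen at stale values whose deviation bounds $\hat\epsilon^{(s)}$ (from an earlier, looser iteration $s \le t$) can exceed the current $\hat\epsilon^{(t)}$; a naive bound on such a neighbor is too weak to control the regret supremum. I expect to resolve this with the ``frozen-low'' observation -- a regret-pruned neighbor was frozen strictly below a witness in its own neighborhood, so its loose bound is irrelevant because its value is already low -- together with monotonicity of the bounds along the sampling schedule and the strictness of the regret-pruning inequality, which is exactly what excludes the boundary case in which a genuine best response could otherwise be pruned.
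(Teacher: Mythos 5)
Your proposal is correct in substance and rests on the same underlying ideas as the paper, but it reaches them by a different route. The paper proves this theorem by pure reduction: \PSRegOld{} is identified as the $\gamma^* = 0$ special case of \PSRegU{} (\Cref{thm:regret-pruning-1}), which is in turn the uniform-bound special case of \PSRegNU{} (\Cref{thm:regret-pruning-2}); all of the actual content --- the union bound over indices and iterations, the fact that a true best response can never be regret-pruned, the fact that a regret-pruned index must end with empirical regret exceeding $\gamma^*$, and the containment argument of \Cref{lem:new-dual-containment} --- lives in those earlier proofs. Your ``central lemma'' is exactly that content specialized to $\gamma^* = 0$, so you are re-deriving the general machinery directly for \PSRegOld{} rather than invoking it; this buys a self-contained argument at the cost of re-doing work the paper factors out. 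Two remarks on the details you defer. First, in your derivation of $\Equilibria(\Utility)\subseteq\Equilibria_{2\epsilon}(\EUtility)$ the per-neighbor case split is unnecessary: the regret is a supremum attained at the empirical best response in $\Adjacent_{\PlayerIndex,\StratProfile^*}$, which has zero empirical regret and is therefore well-estimated by your own central lemma, so the regret-pruned-neighbor case (the one where your bound $\EUtility_\PlayerIndex(\StratProfile')\le\Utility_\PlayerIndex(\StratProfile^*)$ needs extra justification) never has to be entered --- this is precisely how the proof of \Cref{lem:new-dual-containment} argues. Second, the stale-witness problem you flag is real and does intrude on the central lemma itself, since the neighbor witnessing a regret-pruning at iteration $t$ may be frozen from an earlier iteration with a looser bound $\hat\epsilon^{(r)} > \hat\epsilon^{(t)}$; your proposed fix (an induction on pruning time showing every regret-pruned index is frozen at least $\hat\epsilon^{(r)}$ below the true maximum of its neighborhood) is the right one and does close the gap. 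It is worth noting that \PSRegNU{} sidesteps this entirely by defining $\Regret^\downarrow$ with each neighbor's own deviation bound, so the paper's reduction route never confronts stale witnesses explicitly --- your direct argument is in this respect more attentive to the frozen-index bookkeeping than the paper's own two-line reduction.
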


From this guarantee, we see that if a game analyst is using \PSRegOld{} to learn an approximate pure Nash equilibrium of the simulation-based game, they will need to compute a (true) pure Nash equilibrium of the resulting empirical game. Of course, in many games, computing a pure Nash equilibrium is computationally intractable (e.g., it is NP-complete in graphical games \citep{gottlob2005purenash}); the best that can be hoped for is an approximate pure Nash equilibrium. Furthermore, even if the simulation-based game has a pure Nash equilibrium, it is not then guaranteed that the empirical game will also have a pure Nash equilibrium, but rather only that it will have a $2\epsilon$-pure Nash equilibrium. A limitation of \PSRegOld{} is that it is not able to provide any guarantees regarding the efficacy of approximate pure Nash equilibria from the empirical game in the true game.

\subsection{New Regret Pruning Variations}

We now introduce our novel regret pruning criteria. We begin by presenting a variation which resolves the limitation observed in \PSRegOld{} of lacking guarantees regarding empirical approximate Nash equilibria. This variation is derived on the basis of a stronger version of \Cref{thm:dual-containment-tuyls}.

\begin{restatable}{lemma}{dualNew}
    \label{lem:new-dual-containment}
    Let $\gamma^*\geq 0$. If $\abs{\Utility_\PlayerIndex(\StratProfile) - \EUtility_\PlayerIndex(\StratProfile)}\leq \epsilon$ for all $(\PlayerIndex, \StratProfile)\in\UtilityIndices$ satisfying $\Regret_\PlayerIndex(\StratProfile; \Utility) = 0$ or $\Regret_\PlayerIndex(\StratProfile; \EUtility) \leq \gamma^*$, then $\Equilibria(\Utility)\subseteq \Equilibria_{2\epsilon}(\EUtility)$ and $\Equilibria_{\gamma}(\EUtility)\subseteq \Equilibria_{2\epsilon + \gamma}(\Utility)$ for all $0\leq \gamma \leq \gamma^*$.
\end{restatable}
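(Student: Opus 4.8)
The plan is to prove the two inclusions separately, in each case reducing to a per-player, per-profile comparison between $\Regret_\PlayerIndex(\StratProfile; \Utility)$ and $\Regret_\PlayerIndex(\StratProfile; \EUtility)$, while carefully tracking \emph{which} indices $(\PlayerIndex, \StratProfile)$ the hypothesis actually grants the closeness bound for. The central difficulty, relative to \Cref{thm:dual-containment-tuyls}, is that $\abs{\Utility_\PlayerIndex(\StratProfile) - \EUtility_\PlayerIndex(\StratProfile)}\leq \epsilon$ is no longer assumed everywhere, but only at indices that are true best responses or empirical $\gamma^*$-best responses. So I must ensure that every profile whose utility I invoke during a regret estimate genuinely lies in this favored set.

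The key enabling observation, which I would establish first, is an \emph{adjacency invariance}: for any player $\PlayerIndex$ and profile $\StratProfile$, the set $\Adjacent_{\PlayerIndex, \StratProfile}$ depends only on the other players' strategies. Hence if $\StratProfile^\dagger \in \Adjacent_{\PlayerIndex, \StratProfile}$ maximizes $\EUtility_\PlayerIndex$ over $\Adjacent_{\PlayerIndex, \StratProfile}$ (a genuine max since $\StrategySet_\PlayerIndex$ is finite), then $\Adjacent_{\PlayerIndex, \StratProfile^\dagger} = \Adjacent_{\PlayerIndex, \StratProfile}$ and therefore $\Regret_\PlayerIndex(\StratProfile^\dagger; \EUtility) = 0 \leq \gamma^*$; thus the empirical best-response profile automatically satisfies the second hypothesis condition, so the bound holds at $(\PlayerIndex, \StratProfile^\dagger)$. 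Symmetrically, a true best-response profile $\StratProfile^\ddagger$ over $\Adjacent_{\PlayerIndex, \StratProfile}$ has $\Regret_\PlayerIndex(\StratProfile^\ddagger; \Utility) = 0$ and so satisfies the first hypothesis condition.

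For the first inclusion $\Equilibria(\Utility)\subseteq \Equilibria_{2\epsilon}(\EUtility)$, I would take $\StratProfile \in \Equilibria(\Utility)$ and fix a player $\PlayerIndex$. Since $\Regret_\PlayerIndex(\StratProfile; \Utility) = 0$, the index $(\PlayerIndex, \StratProfile)$ meets the first condition, giving the bound at $\StratProfile$; choosing the empirical best response $\StratProfile^\dagger$ gives the bound at $\StratProfile^\dagger$ via the observation above. Then $\Regret_\PlayerIndex(\StratProfile; \EUtility) = \EUtility_\PlayerIndex(\StratProfile^\dagger) - \EUtility_\PlayerIndex(\StratProfile) \leq \Utility_\PlayerIndex(\StratProfile^\dagger) - \Utility_\PlayerIndex(\StratProfile) + 2\epsilon \leq \Regret_\PlayerIndex(\StratProfile; \Utility) + 2\epsilon = 2\epsilon$, where the final step uses that the true utility gap is dominated by the true regret, which vanishes. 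Taking the max over players yields $\StratProfile \in \Equilibria_{2\epsilon}(\EUtility)$.

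For the second inclusion, fix $0 \leq \gamma \leq \gamma^*$, take $\StratProfile \in \Equilibria_\gamma(\EUtility)$, and fix $\PlayerIndex$. Now $\Regret_\PlayerIndex(\StratProfile; \EUtility) \leq \gamma \leq \gamma^*$ supplies the bound at $\StratProfile$ via the second condition, while the true best response $\StratProfile^\ddagger$ supplies the bound at $\StratProfile^\ddagger$ via the first condition. The mirror-image computation gives $\Regret_\PlayerIndex(\StratProfile; \Utility) = \Utility_\PlayerIndex(\StratProfile^\ddagger) - \Utility_\PlayerIndex(\StratProfile) \leq \EUtility_\PlayerIndex(\StratProfile^\ddagger) - \EUtility_\PlayerIndex(\StratProfile) + 2\epsilon \leq \Regret_\PlayerIndex(\StratProfile; \EUtility) + 2\epsilon \leq 2\epsilon + \gamma$, and maximizing over players completes the argument. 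I expect the only genuinely delicate point to be the bookkeeping established in the second paragraph --- verifying that at each invocation of the $\epsilon$-bound the relevant index really satisfies one of the two hypothesis conditions; the arithmetic itself is exactly the two-sided triangle-inequality estimate underlying \Cref{thm:dual-containment-tuyls}.
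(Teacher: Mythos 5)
Your proposal is correct and follows essentially the same argument as the paper's proof: both inclusions are obtained by bounding the relevant regret via the two-sided $\epsilon$-estimate, applied at the given profile (covered by the zero-true-regret or $\gamma^*$-empirical-regret condition) and at the best-response profile (which automatically has zero regret in the corresponding game). Your explicit "adjacency invariance" observation is left implicit in the paper, but the bookkeeping and arithmetic are identical.
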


Whereas \Cref{thm:dual-containment-tuyls} required all indices to be well-estimated, \Cref{lem:new-dual-containment} requires only indices with sufficiently low regret in both the true game and empirical game to be well-estimated. This gives room for indices with provably high regret to be regret-pruned. Of course, as \citet{areyan2020improved} also observed, this potential for regret pruning seems to come at the cost of any guarantees regarding mixed Nash equilibria. On the basis of \Cref{lem:new-dual-containment}, we design our first regret pruning variation.

\begin{restatable}{theorem}{regretU}
    \label{thm:regret-pruning-1}
    Consider a PS algorithm, \PSRegU{}, using uniform utility deviation bounds, which conducts well-estimated pruning and on each iteration $t$, also \mydef{regret-prunes} any index $(\PlayerIndex, \StratProfile)\in\UtilityIndices$ which satisfies
    \begin{equation*}
        \Regret_\PlayerIndex(\StratProfile; \EUtility^{(t)}) > \max\{2\hat{\epsilon}^{(t)}, \gamma^* + \epsilon + \hat{\epsilon}^{(t)}\}\enspace.
    \end{equation*}
    If \PSRegU{}$(\ConditionalGame{\ConditionSpace}, \ConditionDistribution, \UtilityIndices, \delta, \epsilon, \gamma^*)$ returns an empirical utility function $\EUtility$, then with probability at least $1 - \delta$, it holds that $\Equilibria(\Utility)\subseteq \Equilibria_{2\epsilon}(\EUtility)$ and $\Equilibria_{\gamma}(\EUtility)\subseteq \Equilibria_{2\epsilon + \gamma}(\Utility)$ for all $0\leq \gamma\leq \gamma^*$.
\end{restatable}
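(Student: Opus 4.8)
The plan is to show that, on a high-probability ``good event,'' the final empirical utilities $\EUtility$ returned by \PSRegU{} satisfy the hypothesis of \Cref{lem:new-dual-containment} with parameter $\gamma^*$; the two stated pure-containment conclusions then follow immediately by applying that lemma. First I would define the good event $\mathcal{E}$ as the event that every utility deviation bound invoked by the algorithm holds simultaneously: for each index $(\PlayerIndex,\StratProfile)\in\UtilityIndices$ and each iteration $t$ up to the iteration at which that index is pruned, $\abs{\Utility_\PlayerIndex(\StratProfile) - \EUtility^{(t)}_\PlayerIndex(\StratProfile)} \leq \hat{\epsilon}^{(t)}$. Since each bound is computed with failure probability $\nicefrac{\delta}{\abs{\UtilityIndices}\ScheduleLength}$, a union bound over the $\abs{\UtilityIndices}$ indices and $\ScheduleLength$ iterations gives $\Prob(\mathcal{E}) \geq 1 - \delta$. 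Because \PSRegU{} returns only once every index is pruned, and an index's empirical utility is frozen at its pruning iteration, on $\mathcal{E}$ the final value $\EUtility_\PlayerIndex(\StratProfile)$ equals $\EUtility^{(t)}_\PlayerIndex(\StratProfile)$ at its pruning iteration $t$ and hence lies within $\hat{\epsilon}^{(t)}$ of $\Utility_\PlayerIndex(\StratProfile)$.

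Working on $\mathcal{E}$, I would verify the hypothesis of \Cref{lem:new-dual-containment} via its contrapositive: every index that is \emph{not} well-estimated in the final game (i.e., $\abs{\Utility_\PlayerIndex(\StratProfile) - \EUtility_\PlayerIndex(\StratProfile)} > \epsilon$) must satisfy both $\Regret_\PlayerIndex(\StratProfile; \Utility) > 0$ and $\Regret_\PlayerIndex(\StratProfile; \EUtility) > \gamma^*$, so that it fails both disjuncts of the hypothesis and imposes no well-estimation requirement. An index that is well-estimated pruned at iteration $t$ has $\hat{\epsilon}^{(t)} \leq \epsilon$, hence is well-estimated; therefore any index that is not well-estimated must have been regret-pruned. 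Fix such an index, pruned at iteration $t$ with $\Regret_\PlayerIndex(\StratProfile; \EUtility^{(t)}) > \max\{2\hat{\epsilon}^{(t)}, \gamma^* + \epsilon + \hat{\epsilon}^{(t)}\}$, and let $\StratProfile^*$ attain the maximum defining this empirical regret. For the true-regret claim I would use the $2\hat{\epsilon}^{(t)}$ branch: transferring $\EUtility^{(t)}_\PlayerIndex(\StratProfile^*) - \EUtility^{(t)}_\PlayerIndex(\StratProfile) > 2\hat{\epsilon}^{(t)}$ from the empirical to the true game shifts each utility by at most $\hat{\epsilon}^{(t)}$ on $\mathcal{E}$, yielding $\Utility_\PlayerIndex(\StratProfile^*) - \Utility_\PlayerIndex(\StratProfile) > 0$ and thus $\Regret_\PlayerIndex(\StratProfile; \Utility) > 0$.

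For the empirical-regret claim I would use the $\gamma^* + \epsilon + \hat{\epsilon}^{(t)}$ branch, whose extra slack is calibrated precisely to survive the drift in the best-response utility between the pruning iteration and termination. The final empirical regret is at least $\EUtility_\PlayerIndex(\StratProfile^*) - \EUtility_\PlayerIndex(\StratProfile)$, and $\EUtility_\PlayerIndex(\StratProfile)$ is frozen at $\EUtility^{(t)}_\PlayerIndex(\StratProfile)$; the only danger is that $\EUtility_\PlayerIndex(\StratProfile^*)$ has been refined downward since iteration $t$, which is exactly the mechanism behind the counterexample to \Cref{clm:old-regret}. Bounding this drift through the true utility, $\EUtility_\PlayerIndex(\StratProfile^*) \geq \Utility_\PlayerIndex(\StratProfile^*) - \epsilon \geq \EUtility^{(t)}_\PlayerIndex(\StratProfile^*) - \hat{\epsilon}^{(t)} - \epsilon$, and substituting $\EUtility^{(t)}_\PlayerIndex(\StratProfile^*) > \EUtility^{(t)}_\PlayerIndex(\StratProfile) + \gamma^* + \epsilon + \hat{\epsilon}^{(t)}$, the $\epsilon$ and $\hat{\epsilon}^{(t)}$ terms cancel and leave $\EUtility_\PlayerIndex(\StratProfile^*) - \EUtility_\PlayerIndex(\StratProfile) > \gamma^*$, as required.

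The hard part will be making the last step rigorous, since it silently assumes that the supremizing profile $\StratProfile^*$ is itself well-estimated in the final game and that the deviation bound applicable to it at iteration $t$ is $\hat{\epsilon}^{(t)}$ rather than some looser frozen bound. Different profiles in the shared adjacency class $\Adjacent_{\PlayerIndex,\StratProfile}$ may be pruned at different iterations under different uniform bounds, so I would isolate, as a separate lemma, the fact that a (true) best response in any adjacency class is never regret-pruned: on $\mathcal{E}$ its empirical regret is at most $2\hat{\epsilon}^{(t)}$, so it cannot trigger the regret criterion and can only be well-estimated pruned, hence is well-estimated. An induction on the pruning order within each adjacency class then guarantees that whenever the $\pm\hat{\epsilon}^{(t)}$ transfer is invoked for $\StratProfile^*$, the operative bound is genuinely at most $\hat{\epsilon}^{(t)}$. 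Once this is established, both claims hold for every regret-pruned index, the hypothesis of \Cref{lem:new-dual-containment} is verified on $\mathcal{E}$, and the theorem follows.
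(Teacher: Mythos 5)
Your proposal is correct in substance but takes a genuinely different route from the paper's proof of this particular statement. The paper dispatches \PSRegU{} with a two-line reduction: it observes that when all deviation bounds are uniform ($\hat{\epsilon}^{(t)}_\PlayerIndex(\StratProfile) = \hat{\epsilon}^{(t)}$), the \PSRegNU{} criterion $\Regret^\downarrow_\PlayerIndex(\StratProfile; \EUtility^{(t)}) > \max\{0, \gamma^* + \epsilon - \hat{\epsilon}^{(t)}_\PlayerIndex(\StratProfile)\}$ algebraically rearranges into exactly the \PSRegU{} criterion $\Regret_\PlayerIndex(\StratProfile; \EUtility^{(t)}) > \max\{2\hat{\epsilon}^{(t)}, \gamma^* + \epsilon + \hat{\epsilon}^{(t)}\}$, so \PSRegU{} is the uniform-bound special case of \PSRegNU{} and inherits the guarantee of \Cref{thm:regret-pruning-2}. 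You instead give a direct argument, which in effect reproduces the paper's proof of \Cref{thm:regret-pruning-2} specialized to uniform bounds: union bound for the good event, then show every regret-pruned index has $\Regret_\PlayerIndex(\StratProfile; \Utility) > 0$ and $\Regret_\PlayerIndex(\StratProfile; \EUtility) > \gamma^*$ so that \Cref{lem:new-dual-containment} applies to the remaining (well-estimated) indices. The reduction buys brevity and makes the generalization relationship explicit; your direct proof buys self-containment and makes visible exactly which branch of the max serves which purpose.

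One caution: as literally written, your third paragraph lower-bounds the final empirical regret via the \emph{empirical} best response $\StratProfile^*$ at the pruning iteration and asserts $\EUtility_\PlayerIndex(\StratProfile^*) \geq \Utility_\PlayerIndex(\StratProfile^*) - \epsilon$, which requires $\StratProfile^*$ to be well-estimated in the final game --- not something you can assume, since that profile may itself be regret-pruned. You correctly diagnose this in your final paragraph; the clean repair is the one the paper uses in the \PSRegNU{} proof: route the chain through the \emph{true} best response $\StratProfile^*_{\Utility}$, which has zero true regret, hence (on the good event) empirical regret at most $2\hat{\epsilon}^{(t)}$ at every iteration, hence is never regret-pruned, hence is well-estimated pruned. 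Then $\Regret_\PlayerIndex(\StratProfile; \EUtility) \geq \EUtility_\PlayerIndex(\StratProfile^*_{\Utility}) - \EUtility_\PlayerIndex(\StratProfile) \geq (\Utility_\PlayerIndex(\StratProfile^*_{\Utility}) - \epsilon) - \EUtility^{(t)}_\PlayerIndex(\StratProfile)$, and \Cref{lem:sup-bounds} plus the $\gamma^* + \epsilon + \hat{\epsilon}^{(t)}$ branch of the criterion finishes the calculation. With that substitution (which makes your proposed ``induction on pruning order'' unnecessary), your argument is complete and matches the paper's underlying mathematics.
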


Notice that when $\gamma^* = 0$, \PSRegU{} is identical to \PSRegOld{}, even yielding the same exact guarantees. \PSRegU{} is thus a generalization of \PSRegOld{} to cases where $\gamma^* > 0$. When $\gamma^* > 0$, \PSRegU{} yields, at the cost of potentially reduced pruning, a stronger guarantee upon termination than \PSRegOld{}, ensuring that even an approximate empirical pure Nash equilibrium (so long as it is at worst a $\gamma^*$-pure Nash equilibrium) will be an approximate pure Nash equilibrium in the true game. In practice, the parameter $\gamma^*$ can be set to the smallest value for which the game analyst is still certain they will be able to compute a $\gamma^*$-pure Nash equilibria of the empirical game. If we set $\gamma^* = 2\epsilon$, we get the dual pure Nash containment guarantee, $\Equilibria(\Utility)\subseteq \Equilibria_{2\epsilon}(\EUtility)\subseteq \Equilibria_{4\epsilon}(\Utility)$, which \PSRegOld{} was originally designed to meet.

One limitation of both \PSRegU{} and \PSRegOld{} is that they both depend only on uniform utility deviation bounds. The next algorithm and regret pruning variation takes advantage of non-uniform utility deviation bounds (\Cref{thm:eBennett}) to prune potentially more (and in practice significantly more; see \Cref{fig:query-one-game}) indices via both well-estimated pruning and regret-pruning, while yielding the same guarantees as \PSRegU{}. In the following result, we use $\Regret^\downarrow_\PlayerIndex(\StratProfile; \EUtility)$ to denote the high-probability lower-bound on $\Regret_\PlayerIndex(\StratProfile; \Utility)$ defined by
\begin{equation*}
    \Regret^\downarrow_\PlayerIndex(\StratProfile; \EUtility) \doteq \sup_{\StratProfile' \in \Adjacent_{\PlayerIndex, \StratProfile}} (\Utility_{\PlayerIndex} (\StratProfile') - \epsilon_\PlayerIndex(\StratProfile')) - (\Utility_{\PlayerIndex} (\StratProfile) + \epsilon_\PlayerIndex(\StratProfile)).
\end{equation*}
\begin{restatable}{theorem}{regretNU}
    \label{thm:regret-pruning-2}
    Consider a PS algorithm, \PSRegNU{}, using non-uniform utility deviation bounds, which conducts well-estimated pruning and on each iteration $t$, also \mydef{regret-prunes} any index $(\PlayerIndex, \StratProfile)\in\UtilityIndices$ which satisfies
    \begin{equation*}
        \Regret^\downarrow_\PlayerIndex(\StratProfile; \EUtility^{(t)}) > \max\{0, \gamma^* + \epsilon - \hat{\epsilon}^{(t)}_\PlayerIndex(\StratProfile)\}\enspace.
    \end{equation*}
    If \PSRegNU{}$(\ConditionalGame{\ConditionSpace}, \ConditionDistribution, \UtilityIndices, \delta, \epsilon, \gamma^*)$ returns an empirical utility function $\EUtility$, then with probability at least $1 - \delta$, it holds that $\Equilibria(\Utility)\subseteq \Equilibria_{2\epsilon}(\EUtility)$ and $\Equilibria_{\gamma}(\EUtility)\subseteq \Equilibria_{2\epsilon + \gamma}(\Utility)$ for all $0\leq \gamma \leq \gamma^*$.
\end{restatable}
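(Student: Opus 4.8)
The plan is to reduce everything to \Cref{lem:new-dual-containment}, applied to the returned empirical game $\EUtility$: it suffices to show that, with probability at least $1-\delta$, every index $(\PlayerIndex,\StratProfile)$ with $\Regret_\PlayerIndex(\StratProfile;\Utility)=0$ or $\Regret_\PlayerIndex(\StratProfile;\EUtility)\le\gamma^*$ is well estimated, i.e. $\abs{\Utility_\PlayerIndex(\StratProfile)-\EUtility_\PlayerIndex(\StratProfile)}\le\epsilon$. First I would fix the good event: since each deviation bound is computed at level $\delta/(\abs{\UtilityIndices}\ScheduleLength)$, a union bound over all indices and all $\ScheduleLength$ iterations gives that, with probability at least $1-\delta$, $\abs{\Utility_\PlayerIndex(\StratProfile)-\EUtility^{(t)}_\PlayerIndex(\StratProfile)}\le\hat{\epsilon}^{(t)}_\PlayerIndex(\StratProfile)$ holds simultaneously for every index and iteration; I condition on this event, so every confidence statement below becomes deterministic. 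Because the algorithm returns, every index has been pruned, either by well-estimated pruning (which immediately yields $\abs{\Utility_\PlayerIndex(\StratProfile)-\EUtility_\PlayerIndex(\StratProfile)}\le\hat{\epsilon}^{(t)}_\PlayerIndex(\StratProfile)\le\epsilon$) or by regret pruning. Hence it remains only to show that every regret-pruned index \emph{fails} the low-regret condition, i.e. satisfies both $\Regret_\PlayerIndex(\StratProfile;\Utility)>0$ and $\Regret_\PlayerIndex(\StratProfile;\EUtility)>\gamma^*$, so that the hypothesis of \Cref{lem:new-dual-containment} holds vacuously at such indices.

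The key device I would introduce is the true best response within each slice: for a regret-pruned index $(\PlayerIndex,\StratProfile)$, let $\StratProfile^{\mathrm{BR}}\in\argmax_{\StratProfile'\in\Adjacent_{\PlayerIndex,\StratProfile}}\Utility_\PlayerIndex(\StratProfile')$ (a maximizer exists since strategy sets are finite). The crucial observation is that $\StratProfile^{\mathrm{BR}}$ is \emph{never} regret-pruned: since $\Adjacent_{\PlayerIndex,\StratProfile^{\mathrm{BR}}}=\Adjacent_{\PlayerIndex,\StratProfile}$, we have $\Regret_\PlayerIndex(\StratProfile^{\mathrm{BR}};\Utility)=0$, and under the good event $\Regret^\downarrow_\PlayerIndex(\StratProfile^{\mathrm{BR}};\EUtility^{(t)})\le\Regret_\PlayerIndex(\StratProfile^{\mathrm{BR}};\Utility)=0$ at every iteration, as intended by its definition as a high-probability lower bound on the true regret; since the regret-pruning threshold $\max\{0,\gamma^*+\epsilon-\hat{\epsilon}^{(t)}_\PlayerIndex(\StratProfile^{\mathrm{BR}})\}$ is nonnegative and the criterion demands a strict excess over it, $(\PlayerIndex,\StratProfile^{\mathrm{BR}})$ can never be regret-pruned. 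Consequently it must be well-estimated pruned, so at termination $\abs{\Utility_\PlayerIndex(\StratProfile^{\mathrm{BR}})-\EUtility_\PlayerIndex(\StratProfile^{\mathrm{BR}})}\le\epsilon$.

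Now I would cash in the regret-pruning certificate. Suppose $(\PlayerIndex,\StratProfile)$ is regret-pruned at iteration $t$; once pruned its empirical value freezes, so $\EUtility_\PlayerIndex(\StratProfile)=\EUtility^{(t)}_\PlayerIndex(\StratProfile)$. Under the good event each term satisfies $\EUtility^{(t)}_\PlayerIndex(\StratProfile')-\hat{\epsilon}^{(t)}_\PlayerIndex(\StratProfile')\le\Utility_\PlayerIndex(\StratProfile')$, so $\sup_{\StratProfile'\in\Adjacent_{\PlayerIndex,\StratProfile}}(\EUtility^{(t)}_\PlayerIndex(\StratProfile')-\hat{\epsilon}^{(t)}_\PlayerIndex(\StratProfile'))\le\Utility_\PlayerIndex(\StratProfile^{\mathrm{BR}})$; combining this with the pruning inequality $\Regret^\downarrow_\PlayerIndex(\StratProfile;\EUtility^{(t)})>\gamma^*+\epsilon-\hat{\epsilon}^{(t)}_\PlayerIndex(\StratProfile)$ and cancelling the $\hat{\epsilon}^{(t)}_\PlayerIndex(\StratProfile)$ terms yields $\Utility_\PlayerIndex(\StratProfile^{\mathrm{BR}})-\EUtility_\PlayerIndex(\StratProfile)>\gamma^*+\epsilon$. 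The condition $\Regret_\PlayerIndex(\StratProfile;\Utility)>0$ then follows since $\Regret_\PlayerIndex(\StratProfile;\Utility)\ge\Regret^\downarrow_\PlayerIndex(\StratProfile;\EUtility^{(t)})>0$ (strictness coming from the $\max\{0,\cdot\}$). For the second condition I substitute the well-estimation of $\StratProfile^{\mathrm{BR}}$ from the previous paragraph: $\Regret_\PlayerIndex(\StratProfile;\EUtility)\ge\EUtility_\PlayerIndex(\StratProfile^{\mathrm{BR}})-\EUtility_\PlayerIndex(\StratProfile)\ge\Utility_\PlayerIndex(\StratProfile^{\mathrm{BR}})-\epsilon-\EUtility_\PlayerIndex(\StratProfile)>\gamma^*$. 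With both conditions established, every low-regret index is well estimated, and \Cref{lem:new-dual-containment} delivers the two claimed inclusions for all $0\le\gamma\le\gamma^*$.

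I expect the main obstacle to be exactly the point the third paragraph finesses: between the pruning iteration $t$ and termination, the empirical utilities at the adjacent profiles keep changing, so the profile that witnesses the large empirical lower-confidence regret at time $t$ need not be well estimated at the end — indeed it could itself have been regret-pruned, in which case its final deviation bound is not controlled by $\epsilon$. Anchoring the entire argument to the \emph{true} best response $\StratProfile^{\mathrm{BR}}$, which provably escapes regret pruning and is therefore guaranteed well estimated upon termination, is precisely what lets the time-$t$ certificate be transported to the final empirical regret $\Regret_\PlayerIndex(\StratProfile;\EUtility)$. A secondary point requiring care is the bookkeeping of the union bound together with the strictness of the inequalities, needed to exclude $\Regret_\PlayerIndex(\StratProfile;\Utility)=0$ exactly rather than merely establishing $\Regret_\PlayerIndex(\StratProfile;\Utility)\ge 0$.
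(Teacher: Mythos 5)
Your proposal is correct and follows essentially the same route as the paper's proof: a union bound over all indices and iterations to fix the good event, the observation that the true best response $\StratProfile^*_{\Utility}$ in each slice can never be regret-pruned and is therefore well-estimated at termination, chaining the time-$t$ pruning certificate through $\Utility_\PlayerIndex(\StratProfile^*_{\Utility})$ to conclude that every regret-pruned index has $\Regret_\PlayerIndex(\StratProfile;\Utility)>0$ and $\Regret_\PlayerIndex(\StratProfile;\EUtility)>\gamma^*$, and finally invoking \Cref{lem:new-dual-containment}. The obstacle you flag in your last paragraph (the drift of empirical utilities between the pruning iteration and termination) is exactly the issue the paper's anchoring to $\StratProfile^*_{\Utility}$ resolves, so your account matches both the argument and its motivation.
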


Notice again that when $\gamma^* = 0$, the pruning criterion becomes simply $\Regret_\PlayerIndex^\downarrow(\StratProfile; \EUtility^{(\TimeIndex)}) > 0$ and yields the same guarantees as \PSRegOld{}. Thus, when $\gamma^* = 0$, \PSRegNU{} can also reasonably be called \PSRegOld{}+.

Since all the aforementioned regret pruning variations derive from the result presented in \Cref{lem:new-dual-containment}, they are only able to provide guarantees regarding pure equilibria. This is the most glaring limitation of the new PS algorithms presented so far, since many games do not even have strong approximate pure Nash equilibria which the algorithms could potentially be used to learn. In contrast, \PSWE{} derives from \Cref{thm:dual-containment-tuyls}, and thus yields mixed Nash containment guarantees, but of course does not allow for regret-pruning. We now present a lemma which serves as a middle ground between \Cref{lem:new-dual-containment} and \Cref{thm:dual-containment-tuyls}.

\begin{restatable}{lemma}{dualMiddle}
    \label{lem:middle-dual-containment}
    If $\abs{\Utility_\PlayerIndex(\StratProfile) - \EUtility_\PlayerIndex(\StratProfile)}\leq \max\left\{\epsilon, \frac{\Regret_\PlayerIndex(\StratProfile; \EUtility)}{2}\right\}$ for all $(\PlayerIndex, \StratProfile)\in\UtilityIndices$, then for all $0\leq \gamma \leq 2\epsilon$, it holds that
    \begin{equation*}
        \Equilibria(\Utility)\subseteq \Equilibria_{2\epsilon}(\EUtility) 
        \textup{ and }
        \Equilibria_{\gamma}(\EUtility)\subseteq \Equilibria_{2\epsilon + \gamma}(\Utility),
    \end{equation*}
    and for all $\gamma \geq 0$, it holds that
    \begin{equation*}
        \MixedEquilibria(\Utility)\subseteq \MixedEquilibria_{4\epsilon}(\EUtility)
        \textup{ and }
        \MixedEquilibria_{\gamma}(\EUtility)\subseteq \MixedEquilibria_{2\epsilon + \frac{3\gamma}{2}}(\Utility).
    \end{equation*}
\end{restatable}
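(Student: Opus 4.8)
Throughout, fix player $\PlayerIndex$ and, for a strategy profile $\StratProfile$, write $\hat B_\PlayerIndex \doteq \sup_{\StratProfile'\in\Adjacent_{\PlayerIndex,\StratProfile}}\EUtility_\PlayerIndex(\StratProfile')$ and $B_\PlayerIndex \doteq \sup_{\StratProfile'\in\Adjacent_{\PlayerIndex,\StratProfile}}\Utility_\PlayerIndex(\StratProfile')$ for the empirical and true best-response values (which depend only on the opponents' strategies in $\StratProfile$). The plan is to prove all four containments from a single workhorse: the pointwise best-response-value closeness bound $\abs{\hat B_\PlayerIndex - B_\PlayerIndex}\le\epsilon$. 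The direction $\hat B_\PlayerIndex\le B_\PlayerIndex+\epsilon$ uses that the empirical best response $\StratProfile^*$ has empirical regret $0$, so the hypothesis forces $\abs{\Utility_\PlayerIndex(\StratProfile^*)-\EUtility_\PlayerIndex(\StratProfile^*)}\le\epsilon$ and hence $\hat B_\PlayerIndex=\EUtility_\PlayerIndex(\StratProfile^*)\le\Utility_\PlayerIndex(\StratProfile^*)+\epsilon\le B_\PlayerIndex+\epsilon$. For $B_\PlayerIndex\le\hat B_\PlayerIndex+\epsilon$ I would evaluate the hypothesis at the true best response $\StratProfile^\dagger$ and split on whether its empirical regret is below or above $2\epsilon$: the first case gives $\abs{\Utility_\PlayerIndex(\StratProfile^\dagger)-\EUtility_\PlayerIndex(\StratProfile^\dagger)}\le\epsilon$ directly, and the second collapses because there $\Utility_\PlayerIndex(\StratProfile^\dagger)\le\tfrac12\bigl(\EUtility_\PlayerIndex(\StratProfile^\dagger)+\hat B_\PlayerIndex\bigr)\le\hat B_\PlayerIndex$. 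This is exactly where the $\max\{\epsilon,\Regret_\PlayerIndex(\cdot;\EUtility)/2\}$ shape is needed.

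Given this lemma the two pure containments are short. For $\Equilibria(\Utility)\subseteq\Equilibria_{2\epsilon}(\EUtility)$, fix a PNE $\StratProfile$ and a player $\PlayerIndex$; since $\StratProfile$ is a true best response, $B_\PlayerIndex=\Utility_\PlayerIndex(\StratProfile)$, so $\Regret_\PlayerIndex(\StratProfile;\EUtility)=\hat B_\PlayerIndex-\EUtility_\PlayerIndex(\StratProfile)\le\epsilon+\abs{\Utility_\PlayerIndex(\StratProfile)-\EUtility_\PlayerIndex(\StratProfile)}$, and a case split on whether this deviation is at most $\epsilon$ or instead bounded by $\Regret_\PlayerIndex(\StratProfile;\EUtility)/2$ (forcing $\Regret_\PlayerIndex(\StratProfile;\EUtility)\le2\epsilon$) closes it. For $\Equilibria_\gamma(\EUtility)\subseteq\Equilibria_{2\epsilon+\gamma}(\Utility)$ with $0\le\gamma\le2\epsilon$, the bound $\Regret_\PlayerIndex(\StratProfile;\EUtility)\le\gamma\le2\epsilon$ makes the $\max$ equal to $\epsilon$, so these profiles are $\epsilon$-well-estimated, after which $\Regret_\PlayerIndex(\StratProfile;\Utility)=B_\PlayerIndex-\Utility_\PlayerIndex(\StratProfile)\le(\hat B_\PlayerIndex+\epsilon)-(\EUtility_\PlayerIndex(\StratProfile)-\epsilon)=\Regret_\PlayerIndex(\StratProfile;\EUtility)+2\epsilon\le2\epsilon+\gamma$.

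The mixed containments are the hard part, and I expect them to be the main obstacle. For $\MixedEquilibria(\Utility)\subseteq\MixedEquilibria_{4\epsilon}(\EUtility)$ I would fix a true MNE $\MixedStratProfile$, let $\StrategyAlt^*$ be an empirical best response of player $\PlayerIndex$ to $\MixedStratProfile_{-\PlayerIndex}$, and telescope $\Regret_\PlayerIndex(\MixedStratProfile;\EUtility)$ through $\Utility_\PlayerIndex(\MixedStratProfile_{-\PlayerIndex},\StrategyAlt^*)$ and $\Utility_\PlayerIndex(\MixedStratProfile)$; the middle difference $\Utility_\PlayerIndex(\MixedStratProfile_{-\PlayerIndex},\StrategyAlt^*)-\Utility_\PlayerIndex(\MixedStratProfile)$ is $\le0$ at a true MNE, leaving the empirical-versus-true deviation of the best-response value and of the mixed value. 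For $\MixedEquilibria_\gamma(\EUtility)\subseteq\MixedEquilibria_{2\epsilon+3\gamma/2}(\Utility)$ I would telescope symmetrically through a true best response $\StrategyAlt^\dagger$, this time picking up an empirical-regret middle term that is $\le\gamma$.

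In both cases the residual errors are expectations, over the opponents' mixture, of per-profile deviations $\abs{\Utility_\PlayerIndex-\EUtility_\PlayerIndex}$. The difficulty is that the hypothesis bounds each only by $\max\{\epsilon,\Regret_\PlayerIndex(\cdot;\EUtility)/2\}$, and naively replacing this with $\epsilon+\Regret_\PlayerIndex(\cdot;\EUtility)/2$ before averaging reintroduces the \emph{expected} per-profile best-response value, which overshoots the \emph{mixed} best-response value by an uncontrolled Jensen gap; indeed a matching-pennies slice shows this lossy step alone cannot even recover the constant. The resolution I would pursue is to retain the $\max$ and bound the two residual terms jointly rather than separately: the profiles carrying a large $\Regret/2$ contribution are precisely those at which the played strategy is far from its per-profile best response, and their effect on the residual is offset against the mixed value by the (near-)optimality of $\StrategyAlt^*$ (respectively $\StrategyAlt^\dagger$). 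Pushed through, this yields exactly the advertised slack -- the combined residual is $4\epsilon$ for the first inclusion and, together with the $\le\gamma$ middle term, $2\epsilon+\tfrac{3\gamma}{2}$ for the second. I would finally certify that these constants are unimprovable with a two-strategy matching-pennies instance, in which the permitted deviations can be driven so that the best-response-value gap and the mixed-value gap both equal $2\epsilon$ simultaneously, saturating $\Regret_\PlayerIndex(\MixedStratProfile;\EUtility)=4\epsilon$.
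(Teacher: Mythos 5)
Your pure-equilibrium half is correct and is essentially the paper's own argument in different packaging: your ``workhorse'' bound $\abs{\hat B_\PlayerIndex - B_\PlayerIndex}\le\epsilon$ on the best-response values is exactly what the paper extracts from its accessory lemma (any profile with zero true regret is forced to be $\epsilon$-well-estimated, because otherwise the $\Regret/2$ branch of the $\max$ self-destructs), and the two pure containments then follow as you describe.

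The mixed half contains a genuine gap. You correctly observe that decomposing the residual as an expectation over the \emph{opponents'} mixture of per-pure-profile deviations creates a Jensen gap (the expected per-profile best-response value overshoots the best-response value against the mixture), but the resolution you propose --- ``retain the $\max$ and bound the two residual terms jointly,'' with large-regret profiles ``offset against the mixed value'' --- is never actually executed; ``pushed through, this yields exactly the advertised slack'' is an assertion where the proof has to be. The missing idea is a different decomposition plus a bootstrapping step. The paper expands $\EUtility_\PlayerIndex(\StratProfile)=\sum_{\StrategyAlt\in\StrategySet_\PlayerIndex}\Prob[\StrategyAlt\vert\StratProfile]\,\EUtility_\PlayerIndex(\StratProfile\vert_\StrategyAlt)$ over player $\PlayerIndex$'s \emph{own} randomization only, keeping the opponents mixed. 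Every $\StratProfile\vert_\StrategyAlt$ then shares the same adjacent set $\Adjacent_{\PlayerIndex,\StratProfile}$, hence the same empirical best-response value, so the weighted average of the per-strategy regrets equals the mixed regret \emph{exactly}: $\sum_{\StrategyAlt}\Prob[\StrategyAlt\vert\StratProfile]\,\Regret_\PlayerIndex(\StratProfile\vert_\StrategyAlt;\EUtility)=\Regret_\PlayerIndex(\StratProfile;\EUtility)$, with no Jensen loss. The ``lossy'' bound $\max\{\epsilon,\Regret/2\}\le\epsilon+\Regret/2$ is then perfectly adequate: it produces $\Regret_\PlayerIndex(\StratProfile;\EUtility)\le 2\epsilon+\tfrac{1}{2}\Regret_\PlayerIndex(\StratProfile;\EUtility)$, which one \emph{solves} to get $\Regret_\PlayerIndex(\StratProfile;\EUtility)\le 4\epsilon$; the analogous computation for $\StratProfile\in\MixedEquilibria_\gamma(\EUtility)$ leaves a recovered $\tfrac{1}{2}\Regret_\PlayerIndex(\StratProfile;\EUtility)\le\tfrac{\gamma}{2}$ term, which is precisely where the $\tfrac{3\gamma}{2}$ comes from. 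This self-referential resummation is the step your sketch lacks, and your matching-pennies claim that the lossy step ``cannot even recover the constant'' is an artifact of having chosen the wrong decomposition. (A fair criticism you could level in return: the paper applies the pointwise hypothesis to the partially mixed profiles $\StratProfile\vert_\StrategyAlt$, which is itself a mild extension of the stated pure-profile hypothesis; but that is the paper's route, and your proposal does not supply a working alternative.)
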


The condition in \Cref{lem:middle-dual-containment} is looser than that in \Cref{thm:dual-containment-tuyls}, opening up the potential for regret-pruning, but is (strictly) tighter than the condition in \Cref{lem:new-dual-containment} when $\gamma^* = 2\epsilon$ (i.e., empirical games which satisfy the condition in \Cref{lem:middle-dual-containment} necessarily satisfy the condition in \Cref{lem:new-dual-containment} when $\gamma^* = 2\epsilon$, but the converse does not hold; see Appendix for proof), which allows for the additional mixed Nash containment guarantee. Notice further that unlike \Cref{lem:new-dual-containment}, \Cref{lem:middle-dual-containment} does not depend on any additional parameter $\gamma^*$ and yields guarantees for all $\gamma\geq 0$, rather than just $\gamma\in [0, \gamma^*]$. We use \Cref{lem:middle-dual-containment} to derive yet another regret pruning variation.

\begin{restatable}{theorem}{regretM}
    \label{thm:regret-pruning-3}
    Consider a PS algorithm, \PSRegM{}, using non-uniform utility deviation bounds, which conducts well-estimated pruning and on each iteration $t$, also \mydef{regret-prunes} any index $(\PlayerIndex, \StratProfile)\in\UtilityIndices$ which satisfies
    \begin{equation*}
        \Regret^\downarrow_\PlayerIndex(\StratProfile; \EUtility^{(t)}) > \epsilon + \hat{\epsilon}^{(t)}_\PlayerIndex(\StratProfile)\enspace.
    \end{equation*}
    If \PSRegM{}$(\ConditionalGame{\ConditionSpace}, \ConditionDistribution, \UtilityIndices, \delta, \epsilon)$ returns an empirical utility function $\EUtility$, then with probability at least $1 - \delta$, for all $0\leq \gamma \leq 2\epsilon$, it holds that
    \begin{equation*}
        \Equilibria(\Utility)\subseteq \Equilibria_{2\epsilon}(\EUtility) 
        \textup{ and }
        \Equilibria_{\gamma}(\EUtility)\subseteq \Equilibria_{2\epsilon + \gamma}(\Utility),
    \end{equation*}
    and for all $\gamma \geq 0$, it holds that
    \begin{equation*}
        \MixedEquilibria(\Utility)\subseteq \MixedEquilibria_{4\epsilon}(\EUtility)
        \textup{ and }
        \MixedEquilibria_{\gamma}(\EUtility)\subseteq \MixedEquilibria_{2\epsilon + \frac{3\gamma}{2}}(\Utility).
    \end{equation*}
\end{restatable}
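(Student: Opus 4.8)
The plan is to reduce \Cref{thm:regret-pruning-3} to \Cref{lem:middle-dual-containment}: once I establish that, with probability at least $1-\delta$, the returned empirical game satisfies $\abs{\Utility_\PlayerIndex(\StratProfile) - \EUtility_\PlayerIndex(\StratProfile)} \le \max\{\epsilon,\, \Regret_\PlayerIndex(\StratProfile;\EUtility)/2\}$ at every index $(\PlayerIndex,\StratProfile)\in\UtilityIndices$, both the pure containments (for $0\le\gamma\le 2\epsilon$) and the mixed containments (for all $\gamma\ge 0$) follow by directly invoking the lemma. I would first condition on the ``good event'' that every frozen deviation bound is valid, i.e. $\abs{\Utility_\PlayerIndex(\StratProfile) - \EUtility^{(t)}_\PlayerIndex(\StratProfile)} \le \hat\epsilon^{(t)}_\PlayerIndex(\StratProfile)$ simultaneously for all indices and all iterations $t$; by the per-bound failure probability $\nicefrac{\delta}{\abs{\UtilityIndices}\ScheduleLength}$ and a union bound over the $\abs{\UtilityIndices}\ScheduleLength$ bounds, this event holds with probability at least $1-\delta$, and everything afterward is deterministic on it. Since \PSRegM{} returns only when all indices are pruned, each $(\PlayerIndex,\StratProfile)$ is either well-estimated pruned or regret-pruned.

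The well-estimated case is immediate: such an index has $\hat\epsilon^{(t)}_\PlayerIndex(\StratProfile)\le\epsilon$, so $\abs{\Utility_\PlayerIndex(\StratProfile)-\EUtility_\PlayerIndex(\StratProfile)}\le\epsilon$ and the $\max$ is met by its first argument. The real work is the regret-pruned case, where I must show $\abs{\Utility_\PlayerIndex(\StratProfile)-\EUtility_\PlayerIndex(\StratProfile)}\le \Regret_\PlayerIndex(\StratProfile;\EUtility)/2$, equivalently $\Regret_\PlayerIndex(\StratProfile;\EUtility)\ge 2\hat\epsilon^{(t)}_\PlayerIndex(\StratProfile)$, where $\EUtility$ is the \emph{final} empirical utility and $t$ is the pruning iteration. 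The step I expect to be the main obstacle is precisely that $\Regret_\PlayerIndex(\StratProfile;\EUtility)$ ranges over adjacent profiles whose utilities keep being refined after $(\PlayerIndex,\StratProfile)$ is pruned, so one cannot simply substitute the iteration-$t$ quantities used by the pruning test.

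I would resolve this by anchoring the argument on the \emph{true} best response $\StratProfile^\dagger \in \argmax_{\StratProfile'\in\Adjacent_{\PlayerIndex,\StratProfile}}\Utility_\PlayerIndex(\StratProfile')$. Because $\Adjacent_{\PlayerIndex,\StratProfile^\dagger}=\Adjacent_{\PlayerIndex,\StratProfile}$, we have $\Regret_\PlayerIndex(\StratProfile^\dagger;\Utility)=0$; and since $\Regret^\downarrow$ is a lower bound on true regret on the good event, while the regret-pruning test requires $\Regret^\downarrow_\PlayerIndex(\StratProfile^\dagger;\EUtility^{(t')})>\epsilon+\hat\epsilon^{(t')}_\PlayerIndex(\StratProfile^\dagger)\ge\epsilon>0$, the index $(\PlayerIndex,\StratProfile^\dagger)$ can never be regret-pruned. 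Hence it is well-estimated pruned, so its final deviation is at most $\epsilon$, giving $\EUtility_\PlayerIndex(\StratProfile^\dagger)\ge\Utility_\PlayerIndex(\StratProfile^\dagger)-\epsilon$. This is the crux: a provably low-regret adjacent profile is forced to be well-estimated, which pins down the sup defining the final empirical regret.

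With $\StratProfile^\dagger$ in hand the computation is short. Bounding each lower-confidence term in $\Regret^\downarrow_\PlayerIndex(\StratProfile;\EUtility^{(t)})$ by the corresponding true utility gives $\Regret^\downarrow_\PlayerIndex(\StratProfile;\EUtility^{(t)}) \le \Utility_\PlayerIndex(\StratProfile^\dagger) - \EUtility^{(t)}_\PlayerIndex(\StratProfile) - \hat\epsilon^{(t)}_\PlayerIndex(\StratProfile)$; combined with the pruning test $\Regret^\downarrow_\PlayerIndex(\StratProfile;\EUtility^{(t)})>\epsilon+\hat\epsilon^{(t)}_\PlayerIndex(\StratProfile)$, this yields $\Utility_\PlayerIndex(\StratProfile^\dagger) - \EUtility^{(t)}_\PlayerIndex(\StratProfile) > \epsilon + 2\hat\epsilon^{(t)}_\PlayerIndex(\StratProfile)$. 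Then, using that $\EUtility_\PlayerIndex(\StratProfile)=\EUtility^{(t)}_\PlayerIndex(\StratProfile)$ is frozen after pruning,
\[
\Regret_\PlayerIndex(\StratProfile;\EUtility)\ge \EUtility_\PlayerIndex(\StratProfile^\dagger)-\EUtility_\PlayerIndex(\StratProfile)\ge \Utility_\PlayerIndex(\StratProfile^\dagger)-\epsilon-\EUtility^{(t)}_\PlayerIndex(\StratProfile) > 2\hat\epsilon^{(t)}_\PlayerIndex(\StratProfile),
\]
which is exactly the required bound, as $\abs{\Utility_\PlayerIndex(\StratProfile)-\EUtility_\PlayerIndex(\StratProfile)}\le\hat\epsilon^{(t)}_\PlayerIndex(\StratProfile)$. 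Having verified the hypothesis of \Cref{lem:middle-dual-containment} at every index, I would finish by quoting the lemma to obtain both the pure and the mixed containments; the mixed guarantees come ``for free'' from the looser condition, which is precisely what separates \PSRegM{} from the pure-only variations.
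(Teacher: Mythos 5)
Your proposal is correct and follows essentially the same route as the paper's proof: both reduce the theorem to \Cref{lem:middle-dual-containment} by showing that every regret-pruned index satisfies $\Regret_\PlayerIndex(\StratProfile;\EUtility)\geq 2\hat{\epsilon}^{(t)}_\PlayerIndex(\StratProfile)\geq 2\abs{\Utility_\PlayerIndex(\StratProfile)-\EUtility_\PlayerIndex(\StratProfile)}$, using the key observation that the true best response $\StratProfile^*_{\Utility}$ can never be regret-pruned and must therefore be well-estimated. The paper imports this chain of inequalities from the proof of \Cref{thm:regret-pruning-2} rather than rederiving it, but the content is identical to your argument anchored at $\StratProfile^\dagger$.
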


The mixed Nash containment guarantee achieved by \PSRegM{} is a $\frac{\gamma}{2}$ factor looser than that achieved by \PSWE{} (\Cref{thm:dual-containment-tuyls}). As a result, slightly stronger approximate empirical mixed Nash equilibria will need to be computed when using \PSRegM{} in order to guarantee an equally strong approximate true mixed Nash equilibrium as in \PSWE{}.

\subsection{Efficiency Bounds and Correctness}

Using similar proof techniques to those used in \citet{cousins2022computational} to derive efficiency bounds for \PSWE{}, we derive upper bounds on the number of samples each utility index requires prior to being pruned by \PSRegNU{} and \PSRegM{}, respectively. In the following results, suppose that the utility deviation bounds being used by the mentioned PS algorithms are the minimum of Hoeffding bounds (\Cref{thm:hoeffding}) and empirical Bennett bounds (\Cref{thm:eBennett}).

\begin{restatable}{theorem}{efficiency}
    When running \PSRegNU{}$(\ConditionalGame{\ConditionSpace}, \ConditionDistribution, \UtilityIndices, \delta, \epsilon, \gamma^*)$, with probability at least $1 - \frac{\delta}{3}$, the index $(\PlayerIndex, \StratProfile)\in \UtilityIndices$ will be pruned prior to the first iteration $\TimeIndex$ with cumulative sample size $\CumulativeSamples_\TimeIndex \geq$
    \begin{equation*}
        2 + 2\ln\smash{\frac{3\abs{\UtilityIndices}\ScheduleLength}{\delta}}\min\begin{cases}
            \frac{10\UtilityRange}{\Regret_\PlayerIndex(\StratProfile; \Utility) - \gamma^*} + \frac{25\norm{\UtilityVariance_\PlayerIndex(\Adjacent_{\PlayerIndex, \StratProfile})}_\infty}{(\Regret_\PlayerIndex(\StratProfile;\Utility) - \gamma^*)^2}\\
            \frac{5\UtilityRange}{2\epsilon} + \frac{\UtilityVariance_\PlayerIndex(\StratProfile)}{\epsilon^{2}}
        \end{cases}
    \end{equation*}
    (defaulting to the second option when $\Regret_\PlayerIndex(\StratProfile; \Utility) \leq \gamma^*$). 

    When running \PSRegM{}$(\ConditionalGame{\ConditionSpace}, \ConditionDistribution, \UtilityIndices, \delta, \epsilon)$, with probability at least $1 - \frac{\delta}{3}$, the index $(\PlayerIndex, \StratProfile)\in \UtilityIndices$ will be pruned prior to the first iteration $\TimeIndex$ with cumulative sample size $\CumulativeSamples_\TimeIndex \geq$
    \begin{equation*}
        2 + 2\ln\smash{\frac{3\abs{\UtilityIndices}\ScheduleLength}{\delta}}\min\begin{cases}
            \frac{12.5\UtilityRange}{\Regret_\PlayerIndex(\StratProfile; \Utility) - \epsilon} + \frac{25\norm{\UtilityVariance_\PlayerIndex(\Adjacent_{\PlayerIndex, \StratProfile})}_\infty}{(\Regret_\PlayerIndex(\StratProfile;\Utility) - \epsilon)^2}\\
            \frac{5\UtilityRange}{2\epsilon} + \frac{\UtilityVariance_\PlayerIndex(\StratProfile)}{\epsilon^{2}}
        \end{cases}
    \end{equation*}
    queries at profile $\StratProfile$ (defaulting to the second option when $\Regret_\PlayerIndex(\StratProfile; \Utility) \leq \epsilon$). 
\end{restatable}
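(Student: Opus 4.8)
The plan is to fix an arbitrary index $(\PlayerIndex, \StratProfile) \in \UtilityIndices$ and show that it is pruned — by \emph{either} well-estimated pruning \emph{or} regret-pruning — once the cumulative sample count reaches the claimed bound; the stated minimum then arises because the index is pruned as soon as the faster of the two mechanisms fires. Throughout I would condition on the high-probability event that every per-index deviation guarantee from \Cref{thm:hoeffding} and \Cref{thm:eBennett} holds simultaneously (each at failure level $\nicefrac{\delta}{\abs{\UtilityIndices}\ScheduleLength}$); a union bound over the relevant events shows this fails with probability at most $\nicefrac{\delta}{3}$, matching the theorem's guarantee. On this event $\hat{\epsilon}^{(t)}_\PlayerIndex(\StratProfile)$ is a valid bound on $\abs{\Utility_\PlayerIndex(\StratProfile) - \EUtility^{(t)}_\PlayerIndex(\StratProfile)}$ at every iteration and every profile.

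The first ingredient is an \emph{inversion lemma} for the deviation bounds, reusing the technique of \citet{cousins2022computational}: since $\hat{\epsilon}^{(t)}_\PlayerIndex(\StratProfile)$ is the minimum of the Hoeffding and empirical-Bennett bounds, and since the empirical variance $\EUtilityVariance$ concentrates around $\UtilityVariance$, one can show that to guarantee $\hat{\epsilon}^{(t)}_\PlayerIndex(\StratProfile) \leq \eta$ it suffices that $\CumulativeSamples_\TimeIndex \geq 2 + 2\ln(\nicefrac{3\abs{\UtilityIndices}\ScheduleLength}{\delta})\bigl(\tfrac{5\UtilityRange}{2\eta} + \tfrac{\UtilityVariance_\PlayerIndex(\StratProfile)}{\eta^2}\bigr)$. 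Applying this with $\eta = \epsilon$ immediately yields the well-estimated branch (the second case of each minimum), since $\hat{\epsilon}^{(t)}_\PlayerIndex(\StratProfile) \leq \epsilon$ triggers well-estimated pruning of $(\PlayerIndex, \StratProfile)$.

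For the regret-pruning branch I would first lower-bound the pruning statistic. Letting $\StratProfile^\ast$ be a true best response in $\Adjacent_{\PlayerIndex, \StratProfile}$, dropping the supremum in $\Regret^{\downarrow}$ to this single term and applying $\abs{\Utility - \EUtility} \leq \hat{\epsilon}$ at both $\StratProfile^\ast$ and $\StratProfile$ gives, on the good event, $\Regret^{\downarrow}_\PlayerIndex(\StratProfile; \EUtility^{(t)}) \geq \Regret_\PlayerIndex(\StratProfile; \Utility) - 2\hat{\epsilon}^{(t)}_\PlayerIndex(\StratProfile^\ast) - 2\hat{\epsilon}^{(t)}_\PlayerIndex(\StratProfile)$. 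Crucially, $\StratProfile^\ast$ has zero true regret, so it is never regret-pruned and remains well-estimated; hence its deviation bound shrinks with the sample count alongside that of $\StratProfile$. Thus once every profile in $\Adjacent_{\PlayerIndex, \StratProfile}$ together with $\StratProfile$ itself has deviation bound at most a common level $\eta$, we obtain $\Regret^{\downarrow}_\PlayerIndex(\StratProfile; \EUtility^{(t)}) \geq \Regret_\PlayerIndex(\StratProfile; \Utility) - 4\eta$. Choosing $\eta$ just small enough that this lower bound exceeds the respective threshold — $\max\{0, \gamma^\ast + \epsilon - \hat{\epsilon}^{(t)}_\PlayerIndex(\StratProfile)\}$ for \PSRegNU{} and $\epsilon + \hat{\epsilon}^{(t)}_\PlayerIndex(\StratProfile)$ for \PSRegM{} — forces a critical level $\eta$ proportional to $\Regret_\PlayerIndex(\StratProfile; \Utility) - \gamma^\ast$ and to $\Regret_\PlayerIndex(\StratProfile; \Utility) - \epsilon$ respectively (the $\hat{\epsilon}^{(t)}_\PlayerIndex(\StratProfile)$ inside each threshold is itself $\leq \eta$, which I would use to simplify the resulting inequality). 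Invoking the inversion lemma at this $\eta$, but now with the worst-case neighborhood variance $\norm{\UtilityVariance_\PlayerIndex(\Adjacent_{\PlayerIndex, \StratProfile})}_\infty$ in place of $\UtilityVariance_\PlayerIndex(\StratProfile)$ (since all neighbors must be well-sampled), produces the first case of each minimum, with the numerical constants $10, 25$ (resp. $12.5, 25$) emerging once the inversion constants are combined with the proportionality factor between $\eta$ and the regret gap. The final bound is the minimum over the two mechanisms.

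The main obstacle is the regret-pruning branch. Three points require care: the threshold in \PSRegNU{} is self-referential (it contains $\hat{\epsilon}^{(t)}_\PlayerIndex(\StratProfile)$ on its right-hand side), so solving for the critical $\eta$ must be done consistently rather than treating the threshold as fixed; the statistic $\Regret^{\downarrow}$ depends on \emph{all} of $\StratProfile$'s neighbors, forcing the $\infty$-norm over $\Adjacent_{\PlayerIndex, \StratProfile}$ and a verification that the best response $\StratProfile^\ast$ (which, by \Cref{lem:new-dual-containment}, is exactly the kind of zero-regret profile that must stay well-estimated) does not cease being adequately estimated before $\StratProfile$ is pruned; and pinning down the exact constants requires propagating the precise Hoeffding/empirical-Bennett inversion constants through the factor relating $\eta$ to the regret gap, which is the bulk of the routine-but-delicate calculation.
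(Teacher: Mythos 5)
Your proposal is correct and follows essentially the same route as the paper's proof: condition on the deviation guarantees, invoke the sample-complexity inversion of the Hoeffding/empirical-Bennett bounds (the paper's \Cref{lem:sample-complexity}) with $\eta=\epsilon$ for the well-estimated branch, lower-bound $\Regret^{\downarrow}_\PlayerIndex(\StratProfile;\EUtility^{(t)})$ by $\Regret_\PlayerIndex(\StratProfile;\Utility)$ minus a multiple of the neighborhood deviation bounds, and solve for the critical $\eta$ proportional to the regret gap (yielding the factors $\nicefrac{1}{4}$ and $\nicefrac{1}{5}$ and hence the stated constants) before applying the inversion again with the worst-case neighborhood variance. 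The subtleties you flag — the self-referential threshold and the fact that the true best response is never regret-pruned and so stays well-estimated — are exactly the points the paper's argument relies on.
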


The above result reinforces the idea that care is required when choosing a sampling schedule for these algorithms. If the marginal sample size $\NumberOfSamples_\TimeIndex$ is small at each iteration $\TimeIndex$, then very few queries will be wasted from when an index is ready to be pruned to when it is actually pruned by the algorithm. On the other hand, if very small marginal sample sizes are used, then a very large schedule length $\ScheduleLength$ will be required to reach a sufficiently large cumulative sample size to prune all utility indices. But a larger schedule length $\ScheduleLength$ yields looser utility deviation bounds, thus resulting in \emph{all} indices requiring more queries to be pruned than otherwise. Hence, there is a trade-off in designing a sampling schedule between keeping marginal sample sizes small and keeping the schedule length small. 

We discuss our particular choice of sampling schedule in the next section, and in greater detail in the Appendix. In the following result, we use Hoeffding's Inequality to derive an upper bound on the requisite total cumulative sample size a sampling schedule needs to ensure that all utility indices will be pruned prior to its exhaustion.

\begin{restatable}{theorem}{omegaBound}
    \label{thm:omega}
    Suppose that the total samples $\CumulativeSamples_\ScheduleLength$ allocated in the sampling schedule is greater than or equal to the maximum number of samples needed to prune an arbitrary index, i.e., 
    \begin{equation*}
        \CumulativeSamples_\ScheduleLength\doteq \sum_{\TimeIndex=1}^\ScheduleLength \NumberOfSamples_\TimeIndex \geq \frac{\UtilityRange^2\ln\frac{2\abs{\UtilityIndices}\ScheduleLength}{\delta}}{2\epsilon^2}\enspace.
    \end{equation*}
    Then for each algorithm $\textup{PS}\in$
    $$\{\textup{\PSWE{}, \PSRegOld{}, \PSRegU{}, \PSRegNU{}, \PSRegM{}}\},$$
    it is guaranteed that PS$(\ConditionalGame{\ConditionSpace}, \ConditionDistribution, \UtilityIndices, \delta, \epsilon)$ will terminate and return an empirical game with the guarantees corresponding to the respective algorithm.
\end{restatable}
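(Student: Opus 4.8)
The plan is to decompose the statement into a \emph{deterministic} termination claim and a \emph{high-probability} correctness claim, and to handle the two separately. The correctness half is essentially free: each listed algorithm already comes with a theorem (the \PSWE{} theorem, the \PSRegOld{} theorem, and \Cref{thm:regret-pruning-1,thm:regret-pruning-2,thm:regret-pruning-3}) guaranteeing that, \emph{conditioned on the algorithm returning an empirical utility function}, its respective Nash-containment guarantee holds with probability at least $1 - \delta$. Thus once I establish that the algorithm does in fact reach its return statement, the desired guarantee follows by directly invoking the appropriate one of these theorems; the probability $1-\delta$ is exactly the one delivered by the union bound over the $\abs{\UtilityIndices}\ScheduleLength$ deviation bounds, each at confidence $\delta/(\abs{\UtilityIndices}\ScheduleLength)$, which is also the source of the $\ln\!\frac{2\abs{\UtilityIndices}\ScheduleLength}{\delta}$ term in the budget.

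The real work is the termination claim, and the key point I want to emphasize is that it is \emph{deterministic}. I would first observe that the stated budget $\frac{\UtilityRange^2\ln\frac{2\abs{\UtilityIndices}\ScheduleLength}{\delta}}{2\epsilon^2}$ is precisely the sample count at which a single-index Hoeffding deviation bound, instantiated at the per-index-per-iteration failure probability $\delta/(\abs{\UtilityIndices}\ScheduleLength)$ used in \Cref{alg:psp}, first falls to $\epsilon$: solving $\UtilityRange\sqrt{\ln(\nicefrac{2\abs{\UtilityIndices}\ScheduleLength}{\delta})/(2\CumulativeSamples)} \le \epsilon$ for $\CumulativeSamples$ returns exactly this threshold. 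Crucially, this Hoeffding value depends only on the sample count, not on the realized samples (unlike the empirical Bennett bound), so the resulting pruning is triggered regardless of the random draw. Since every listed algorithm uses a deviation bound that is at most the Hoeffding bound -- taking the minimum with Hoeffding, and, for the uniform-bound algorithms, using that all still-active indices share the common cumulative count $\CumulativeSamples_\TimeIndex$ so that their single uniform bound is itself at most $\UtilityRange\sqrt{\ln(\nicefrac{2\abs{\UtilityIndices}\ScheduleLength}{\delta})/(2\CumulativeSamples_\TimeIndex)}$ -- any index whose cumulative count reaches the threshold has deviation bound $\le \epsilon$ and is immediately well-estimated pruned.

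I would then close the argument by combining this with the budget hypothesis. Because unpruned indices are sampled on every iteration, any index that is still active entering iteration $\ScheduleLength$ has by then accumulated $\CumulativeSamples_\ScheduleLength \geq \frac{\UtilityRange^2\ln\frac{2\abs{\UtilityIndices}\ScheduleLength}{\delta}}{2\epsilon^2}$ samples, so its deviation bound is $\le \epsilon$ and it is well-estimated pruned on iteration $\ScheduleLength$; hence $\UtilityIndices^{(\ScheduleLength)} = \emptyset$ and the algorithm returns. The extra regret-pruning criteria of \PSRegOld{}, \PSRegU{}, \PSRegNU{}, and \PSRegM{} can only remove indices earlier, so they never delay termination, and well-estimated pruning alone already suffices. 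I expect the main obstacle to be careful bookkeeping rather than any deep difficulty: cleanly separating the deterministic termination step from the probabilistic correctness step (so that the $1-\delta$ is not double-counted), and verifying the ``$\le$ Hoeffding'' inequality uniformly across the uniform- and non-uniform-bound algorithms, which for the uniform case relies on the shared sample count collapsing the defining supremum to a single Hoeffding value.
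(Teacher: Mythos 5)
Your proposal is correct and follows essentially the same route as the paper's own proof: the budget makes the (deterministic, sample-count-only) Hoeffding deviation bound at failure probability $\nicefrac{\delta}{\abs{\UtilityIndices}\ScheduleLength}$ fall to $\epsilon$ by iteration $\ScheduleLength$, so every still-active index is well-estimated pruned there, and the per-algorithm guarantees then follow from the already-proved correctness theorems. Your explicit separation of deterministic termination from probabilistic correctness, and the remark that regret pruning can only remove indices earlier, are just slightly more careful spellings-out of what the paper leaves implicit.
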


\section{Experiments}

\begin{figure}
    \centering
    \includegraphics[width=\columnwidth]{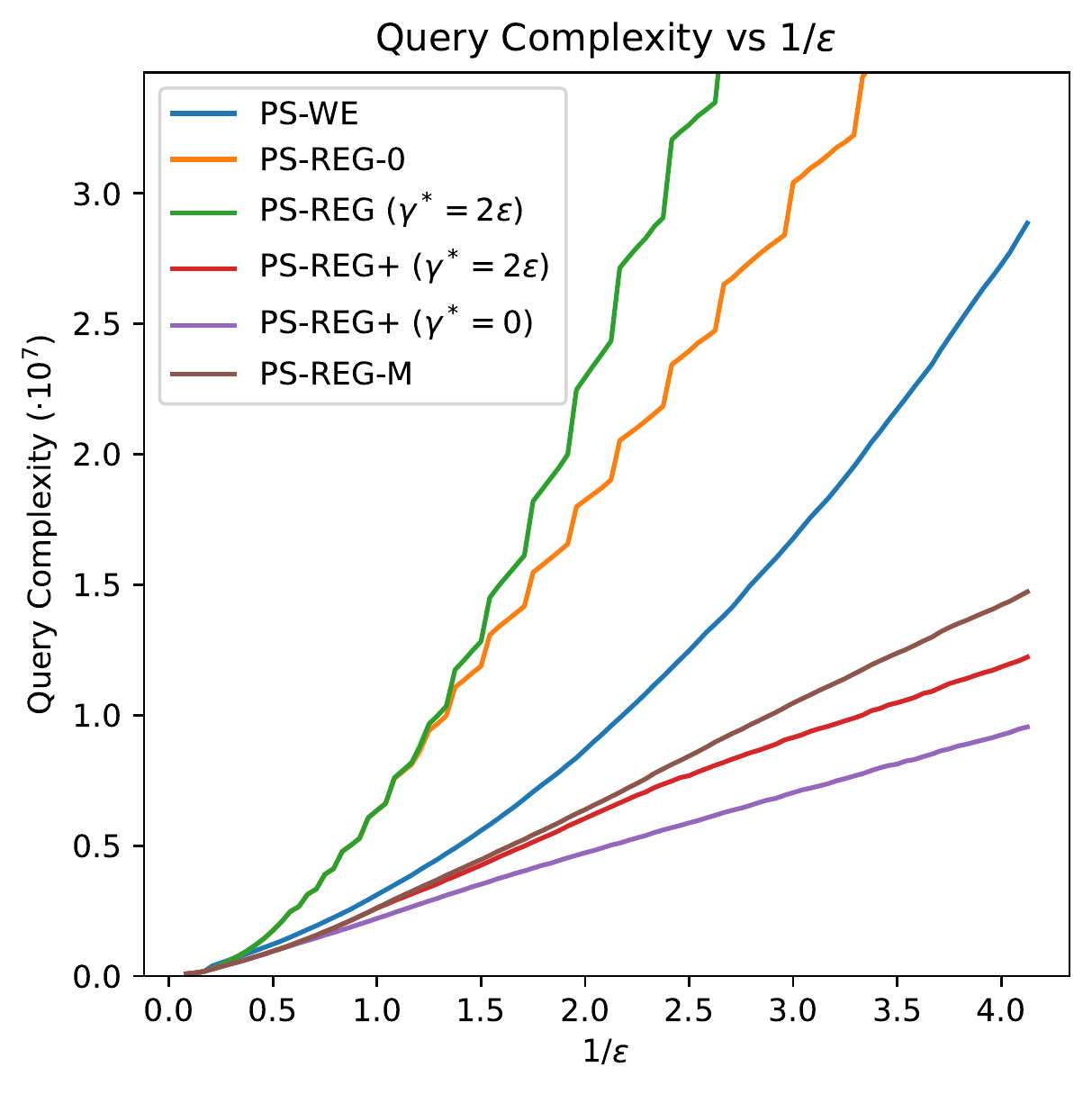}
    \caption{Average query complexity vs $\nicefrac{1}{\epsilon}$ for 10 runs of each algorithm for each target error $\epsilon\in \{\frac{\UtilityRange}{2}, \frac{\UtilityRange}{3}, \dots, \frac{\UtilityRange}{100}\}$. All algorithm runs are conducted on a single randomly-generated two-player zero-sum game with non-uniform additive noise.}
    \label{fig:query-one-game} 
\end{figure}

In this section, we experimentally explore the behavior of all the aforementioned PS algorithms. When choosing a sampling schedule for each algorithm, we follow \citet{cousins2022computational} in designing sampling schedules which begin at sample size $\alpha$, a lower bound on the minimum number of samples needed to prune an arbitrary index, and end with a cumulative sample size $\omega$, an upper bound on the maximum number of samples needed to prune an arbitrary index. 
For all algorithms, we use the upper bound from \Cref{thm:omega} for $\omega$, i.e., $\omega\doteq \frac{\UtilityRange^2\ln\frac{2\abs{\GameTuple}\ScheduleLength}{\delta}}{2\epsilon^2}$, thus guaranteeing that each algorithm will return an empirical game satisfying the respective guarantees of the algorithm upon termination. 
For algorithms using regret pruning, we set $\alpha$ to be a lower bound on the number of samples to estimate a zero-variance utility to (w.h.p.) within $\frac{\UtilityRange}{2}$ error (since no regret pruning can occur prior to at least one index achieving such an error guarantee; see Appendix for proof). For \PSWE{}, we follow \citet{cousins2022computational} in setting $\alpha$ to be a lower bound on the number of samples needed to estimate a zero-variance utility to (w.h.p.) within a target error $\epsilon$, though we improve their lower bound by a small constant factor. 
Finally, while our \PSWE{} sampling schedule has a geometric sampling schedule (i.e., geometrically increasing cumulative sample size) as in \citet{cousins2022computational}, using such a geometric schedule for our regret pruning algorithms results in too many iterations spent on very small sample sizes, yielding looser bounds with very few additional queries saved in return. To correct for this, our regret pruning algorithms use a sampling schedule which is linear until it reaches the corresponding $\alpha$ derived for \PSWE{}, and then follows the same geometric sampling schedule used in \PSWE{}. For more details regarding the sampling schedules, see the Appendix.

\subsection{Query Complexity vs. Target Error}

In the following experiments, we test our algorithms on two-player random zero-sum games (generated via the game-generator GAMUT \citep{nudelman2004run}) with 40 actions for each player and utility values in the range $[-2, 2]$. In order to emulate a noisy simulator, we add noise to each sample utility value. For each utility index $(\PlayerIndex, \StratProfile)\in\UtilityIndices$, we sample a variance modifier $\nu_{\PlayerIndex, \StratProfile}\sim \textup{Beta(1.5, 3)}$, and then each time the simulator is queried for $\SampleUtility(\StratProfile)\sim\Simulator(\StratProfile)$, we set $\SampleUtility_\PlayerIndex(\StratProfile) \doteq \Utility_\PlayerIndex(\StratProfile) + \mathcal{N}(\nu_{\PlayerIndex, \StratProfile})$ where $\mathcal{N}(\nu_{\PlayerIndex, \StratProfile})$ is a scaled and shifted Bernoulli random variable, generating either $10\nu_{\PlayerIndex, \StratProfile}$ or $-10\nu_{\PlayerIndex, \StratProfile}$ with equal probability. Notice then that our final utility range for these random zero-sum simulation-based games is $\UtilityRange = 24$. Sampling variance modifiers from $\textup{Beta}(1.5, 3)$ ensures that our utility indices have a wide range of noise variables with mostly moderate variance, but with some noise variables having particularly high variance and some particularly low variance.

For our first experiment (\Cref{fig:query-one-game}), we compare the query complexities (i.e., the number of simulation queries placed prior to termination) of our algorithms for varying target errors $\epsilon$. We begin by generating a two-player random zero-sum game $\GameTuple_{\Simulator}$ (and variance modifiers $\nu_{\PlayerIndex, \StratProfile}$ for each utility index $(\PlayerIndex, \StratProfile)\in\UtilityIndices$). For each target error $\epsilon\in \{\frac{\UtilityRange}{2}, \frac{\UtilityRange}{3}, \dots, \frac{\UtilityRange}{100}\}$, we then run each of our aforementioned PS algorithms (with failure probability $\delta\doteq 0.05$) on $\GameTuple_\Simulator$ a total of 10 times and plot the average query complexity across those runs. For each algorithm, we then connect these average query complexities for each target error by a line plot in \Cref{fig:query-one-game}.

In \Cref{fig:query-one-game}, we see that, as expected, \PSRegU{} obtains its guarantees for approximate pure equilibria of the empirical game at the cost of a slightly greater query complexity than \PSRegOld{}. In a similar vein, \PSRegNU{} with $\gamma^* = 2\epsilon$ also requires a greater number of queries than \PSRegNU{} with $\gamma^* = 0$ (i.e, \PSRegOld{}+) in order to yield its guarantees for $\gamma^*$-pure equilibria of the empirical game. We also observe that \PSRegOld{} and \PSRegU{} with $\gamma^* = 2\epsilon$, which both use uniform utility deviation bounds, consume significantly more queries than \PSWE{}, which takes advantage of non-uniform bounds but conducts no regret pruning. This suggests that utilizing non-uniform utility deviation bounds is crucial for designing query efficient progressive sampling algorithms. This idea is further reinforced when looking at \PSRegNU{} and \PSRegM{}, both of which use regret pruning criteria that take advantage of non-uniform utility deviation bounds, and outperform \PSWE{} by a very significant margin, especially when the target error $\epsilon$ is small. Another particularly surprising result is that \PSRegM{} only consumes marginally more queries than \PSRegNU{} with $\gamma^* = 2\epsilon$, despite yielding strong mixed Nash containment guarantees in return. 

\begin{figure}[!b]
    \centering
    \includegraphics[width=\columnwidth]{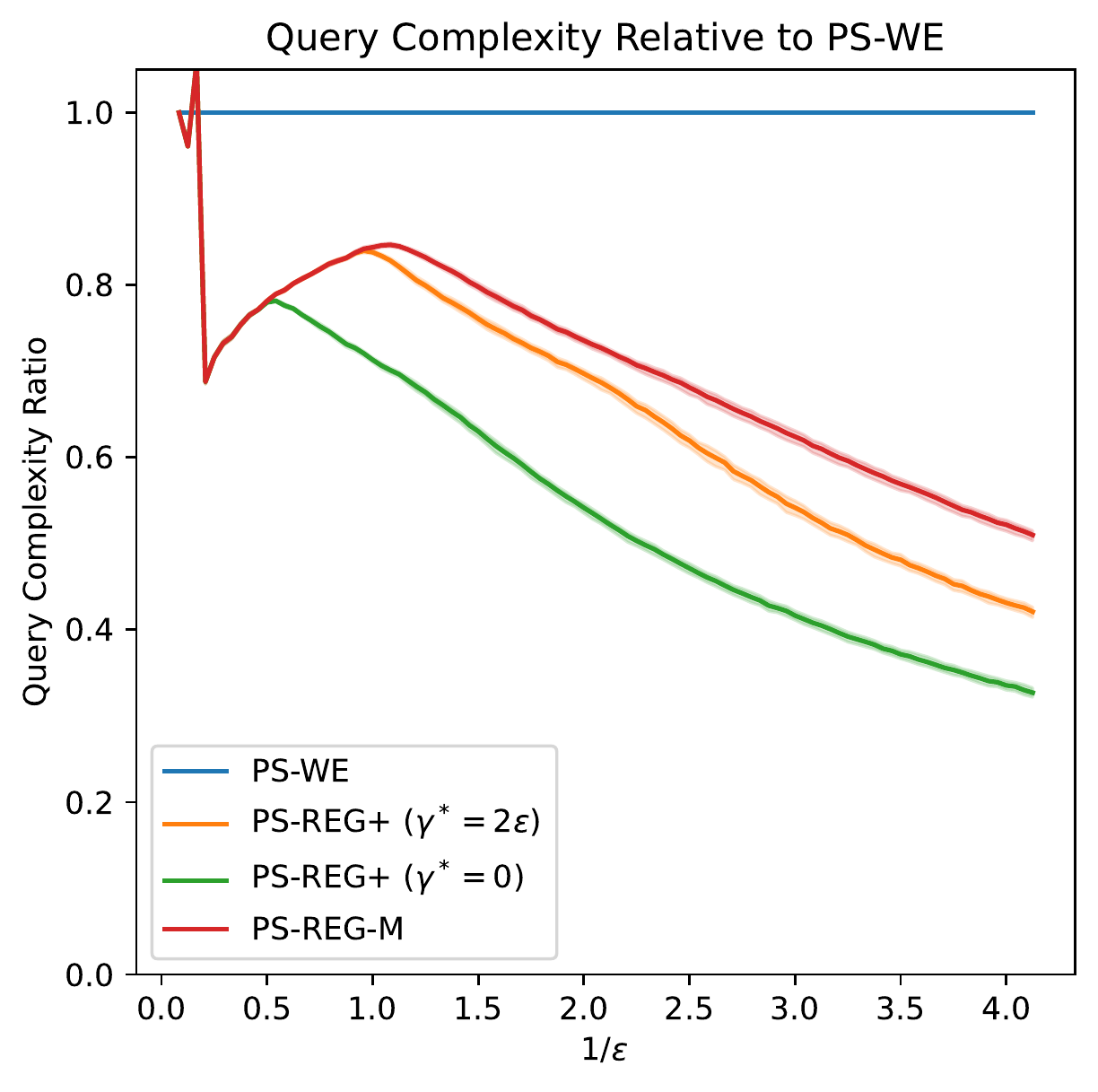}
    \caption{Average ratio of query complexity to query complexity of \PSWE{} vs $\nicefrac{1}{\epsilon}$ for a single run of each algorithm on each of 10 randomly generated two-player zero-sum games with non-uniform additive noise for each target error $\epsilon\in \{\frac{\UtilityRange}{2}, \frac{\UtilityRange}{3}, \dots, \frac{\UtilityRange}{100}\}$. Standard deviation bands are plotted and are just barely visible due to low variation.}
    \label{fig:query-many-games}
\end{figure}

Since our first experiment only tests on a single randomly generated simulation-based game, it is possible that the generated game was just particularly amenable to regret pruning. For our second experiment (\Cref{fig:query-many-games}), we observe the proportion of additional queries our new PS algorithms are able to save on average (across 10 random zero-sum simulation-based games) in comparison to \PSWE{}. This time, we run each PS algorithm (again with failure probability $\delta\doteq 0.05$) only once for each generated game.

In \Cref{fig:query-many-games}, we observe that the comparative performance of the algorithms in \Cref{fig:query-one-game} remain consistent across many different random two-player zero-sum games. We further observe that past a certain turning point, progressive sampling algorithms which use regret pruning techniques save a greater proportion of queries with respect to \PSWE{} as smaller target errors $\epsilon$ are used. When the target error $\epsilon$ is very small, \PSRegNU{} (both $\gamma^* = 0$ and $\gamma^* = 2\epsilon$) and \PSRegM{} are able to obtain their respective guarantees while saving more than 50\% and 40\%, respectively, of the queries used by \PSWE{}.

\section{Conclusion}

In this paper, we address a serious limitation of \citeauthor{areyan2020improved}'s progressive sampling algorithm with regret pruning -- it is only able to yield guarantees regarding \emph{true} pure Nash equilibria of the empirical game. We design two primary novel progressive sampling algorithms for practitioners to use to learn equilibria in simulation-based games: \PSRegNU{} and \PSRegM{}. \PSRegNU{} combines well-estimated pruning with a novel regret pruning variation which is modified to ensure the algorithm yields pure Nash containment guarantees for approximate $\gamma^*$-pure Nash equilibria of the empirical game and to take advantage of non-uniform utility deviation bounds to prune utility indices as soon as possible. When using \PSRegNU{}, a game analyst will set $\gamma^*$ according to the weakest approximate pure Nash equilibria for which they desire guarantees. \PSRegM{} also incorporates well-estimated pruning and a novel regret pruning variation, except unlike \PSRegNU{}, it yields strong Nash containment guarantees for all approximate pure or mixed Nash equilibria of the empirical game, at the cost of a slightly greater query complexity. Both \PSRegNU{} and \PSRegM{} significantly outperform \PSWE{}, the prior state-of-the-art algorithm for learning equilibria in simulation-based games. In light of this, game analysts seeking such an algorithm should use \PSRegNU{} if they only aim to learn pure Nash equilibria and should otherwise use \PSRegM{}.

In this work, we have only applied our progressive sampling algorithms and pruning techniques to normal-form games. In future work, we aim to extend this methodology to other game models such as extensive-form games. Additionally, EGTA algorithms for learning equilibria of simulation-based game have thus far been completely detached from algorithms for computing equilibria. A game analyst must first use EGTA algorithms to learn a sufficiently strong approximation of the simulation-based game, and then must compute equilibria of this empirical game. Future work can incorporate variance-sensitive and regret-sensitive progressive sampling techniques into an existing game-solving algorithm to make an EGTA-aware game-solving algorithm.

\bibliography{aaai23}

\begin{thebibliography}{23}
\providecommand{\natexlab}[1]{#1}

\bibitem[{Areyan~Viqueira, Cousins, and Greenwald(2020)}]{areyan2020improved}
Areyan~Viqueira, E.; Cousins, C.; and Greenwald, A. 2020.
\newblock Improved Algorithms for Learning Equilibria in Simulation-Based
  Games.
\newblock In \emph{Proceedings of the 19th International Conference on
  Autonomous Agents and MultiAgent Systems}, 79--87.

\bibitem[{{Areyan Viqueira} et~al.(2019){Areyan Viqueira}, Cousins, Mohammad,
  and Greenwald}]{DBLP:conf/uai/ViqueiraCMG19}
{Areyan Viqueira}, E.; Cousins, C.; Mohammad, Y.; and Greenwald, A. 2019.
\newblock Empirical Mechanism Design: Designing Mechanisms from Data.
\newblock In \emph{{UAI}}, 406. {AUAI} Press.

\bibitem[{Areyan~Viqueira et~al.(2019)Areyan~Viqueira, Greenwald, Cousins, and
  Upfal}]{areyan2019learning}
Areyan~Viqueira, E.; Greenwald, A.; Cousins, C.; and Upfal, E. 2019.
\newblock Learning Simulation-Based Games from Data.
\newblock In \emph{Proceedings of the 18th International Conference on
  Autonomous Agents and MultiAgent Systems}, 1778--1780. International
  Foundation for Autonomous Agents and Multiagent Systems.

\bibitem[{Cousins et~al.(2022)Cousins, Mishra, Viqueira, and
  Greenwald}]{cousins2022computational}
Cousins, C.; Mishra, B.; Viqueira, E.~A.; and Greenwald, A. 2022.
\newblock Computational and Data Requirements for Learning Generic Properties
  of Simulation-Based Games.
\newblock \emph{arXiv preprint arXiv:2208.06400}.

\bibitem[{Fearnley et~al.(2015)Fearnley, Gairing, Goldberg, and
  Savani}]{fearnley2015learning}
Fearnley, J.; Gairing, M.; Goldberg, P.~W.; and Savani, R. 2015.
\newblock Learning equilibria of games via payoff queries.
\newblock \emph{The Journal of Machine Learning Research}, 16(1): 1305--1344.

\bibitem[{Gatti and Restelli(2011)}]{Gattieqapprox}
Gatti, N.; and Restelli, M. 2011.
\newblock Equilibrium Approximation in Simulation-Based Extensive-Form Games.
\newblock In \emph{The 10th International Conference on Autonomous Agents and
  Multiagent Systems - Volume 1}, AAMAS '11, 199–206. Richland, SC:
  International Foundation for Autonomous Agents and Multiagent Systems.
\newblock ISBN 0982657153.

\bibitem[{Gottlob, Greco, and Scarcello(2005)}]{gottlob2005purenash}
Gottlob, G.; Greco, G.; and Scarcello, F. 2005.
\newblock Pure Nash Equilibria: Hard and Easy Games.
\newblock In \emph{AAAI}.

\bibitem[{Jordan, Kiekintveld, and Wellman(2007)}]{jordan2007empirical}
Jordan, P.~R.; Kiekintveld, C.; and Wellman, M.~P. 2007.
\newblock Empirical game-theoretic analysis of the TAC supply chain game.
\newblock In \emph{Proceedings of the 6th international joint conference on
  Autonomous agents and multiagent systems}, 193. ACM.

\bibitem[{Jordan and Wellman(2010)}]{jordan2010designing}
Jordan, P.~R.; and Wellman, M.~P. 2010.
\newblock Designing an ad auctions game for the trading agent competition.
\newblock In \emph{Agent-Mediated Electronic Commerce. Designing Trading
  Strategies and Mechanisms for Electronic Markets}. Springer.

\bibitem[{Ketter, Peters, and Collins(2013)}]{ketter2013autonomous}
Ketter, W.; Peters, M.; and Collins, J. 2013.
\newblock Autonomous Agents in Future Energy Markets: The 2012 Power Trading
  Agent Competition.
\newblock In \emph{AAAI}.

\bibitem[{Marchesi, Trov\`{o}, and Gatti(2020)}]{learningmaximin}
Marchesi, A.; Trov\`{o}, F.; and Gatti, N. 2020.
\newblock \emph{Learning Probably Approximately Correct Maximin Strategies in
  Simulation-Based Games with Infinite Strategy Spaces}, 834–842.
\newblock Richland, SC: International Foundation for Autonomous Agents and
  Multiagent Systems.
\newblock ISBN 9781450375184.

\bibitem[{Nudelman et~al.(2004)Nudelman, Wortman, Shoham, and
  Leyton-Brown}]{nudelman2004run}
Nudelman, E.; Wortman, J.; Shoham, Y.; and Leyton-Brown, K. 2004.
\newblock Run the GAMUT: A comprehensive approach to evaluating game-theoretic
  algorithms.
\newblock In \emph{Proceedings of the Third International Joint Conference on
  Autonomous Agents and Multiagent Systems-Volume 2}, 880--887. IEEE Computer
  Society.

\bibitem[{Picheny, Binois, and Habbal(2016)}]{picheny2016bayesian}
Picheny, V.; Binois, M.; and Habbal, A. 2016.
\newblock A Bayesian optimization approach to find {N}ash equilibria.
\newblock \emph{arXiv preprint arXiv:1611.02440}.

\bibitem[{Tavares et~al.(2016)Tavares, Azpurua, Santos, and
  Chaimowicz}]{tavares2016rock}
Tavares, A.; Azpurua, H.; Santos, A.; and Chaimowicz, L. 2016.
\newblock Rock, paper, starcraft: Strategy selection in real-time strategy
  games.
\newblock In \emph{The Twelfth AAAI Conference on Artificial Intelligence and
  Interactive Digital Entertainment (AIIDE-16)}.

\bibitem[{Tuyls et~al.(2020)Tuyls, Perolat, Lanctot, Hughes, Everett, Leibo,
  Szepesv{\'a}ri, and Graepel}]{tuyls2020bounds}
Tuyls, K.; Perolat, J.; Lanctot, M.; Hughes, E.; Everett, R.; Leibo, J.~Z.;
  Szepesv{\'a}ri, C.; and Graepel, T. 2020.
\newblock Bounds and dynamics for empirical game theoretic analysis.
\newblock \emph{Autonomous Agents and Multi-Agent Systems}, 34(1): 7.

\bibitem[{Tuyls et~al.(2018)Tuyls, Perolat, Lanctot, Leibo, and
  Graepel}]{tuyls2018generalised}
Tuyls, K.; Perolat, J.; Lanctot, M.; Leibo, J.~Z.; and Graepel, T. 2018.
\newblock A Generalised Method for Empirical Game Theoretic Analysis.
\newblock \emph{arXiv preprint arXiv:1803.06376}.

\bibitem[{Vorobeychik, Kiekintveld, and
  Wellman(2006)}]{vorobeychik2006empirical}
Vorobeychik, Y.; Kiekintveld, C.; and Wellman, M.~P. 2006.
\newblock Empirical mechanism design: methods, with application to a
  supply-chain scenario.
\newblock In \emph{Proceedings of the 7th ACM conference on Electronic
  commerce}. ACM.

\bibitem[{Vorobeychik and Wellman(2008)}]{vorobeychik2008stochastic}
Vorobeychik, Y.; and Wellman, M.~P. 2008.
\newblock Stochastic search methods for {N}ash equilibrium approximation in
  simulation-based games.
\newblock In \emph{Proceedings of the 7th international joint conference on
  Autonomous agents and multiagent systems-Volume 2}, 1055--1062. International
  Foundation for Autonomous Agents and Multiagent Systems.

\bibitem[{Vorobeychik, Wellman, and Singh(2007)}]{vorobeychik2007learning}
Vorobeychik, Y.; Wellman, M.~P.; and Singh, S. 2007.
\newblock Learning payoff functions in infinite games.
\newblock \emph{Machine Learning}, 67(1-2): 145--168.

\bibitem[{Wellman(2006)}]{Wellman06}
Wellman, M.~P. 2006.
\newblock Methods for Empirical Game-Theoretic Analysis (extended abstract).
\newblock In \emph{aaai06}, 1552--1556.

\bibitem[{Wellman, Kim, and Duong(2013)}]{wellman2013analyzing}
Wellman, M.~P.; Kim, T.~H.; and Duong, Q. 2013.
\newblock Analyzing incentives for protocol compliance in complex domains: A
  case study of introduction-based routing.
\newblock \emph{arXiv preprint arXiv:1306.0388}.

\bibitem[{Wiedenbeck, Yang, and Wellman(2018)}]{wiedenbeck2018regression}
Wiedenbeck, B.; Yang, F.; and Wellman, M.~P. 2018.
\newblock A Regression Approach for Modeling Games with Many Symmetric Players.
\newblock In \emph{32nd AAAI Conference on Artificial Intelligence}.

\bibitem[{Zhang and Sandholm(2021)}]{Zhang_Sandholm_2021}
Zhang, B.~H.; and Sandholm, T. 2021.
\newblock Finding and Certifying (Near-)Optimal Strategies in Black-Box
  Extensive-Form Games.
\newblock \emph{Proceedings of the AAAI Conference on Artificial Intelligence},
  35(6): 5779--5788.

\end{thebibliography}

\appendix
\onecolumn

\renewcommand\thefigure{\thesection.\arabic{figure}}
\renewcommand\thealgorithm{\thesection.\arabic{algorithm}}

\section{Appendix}
\label{sec:appendix}

\subsection{Notation}

To improve readability of the following proofs, we introduce a few notational short-hands.
\begin{enumerate}
    \item Given a utility index $(\PlayerIndex, \StratProfile)\in\UtilityIndices$ and a utility function $\Utility$, let $\StratProfile^*_{\Utility}$ denote the best response of player $\PlayerIndex$ to the opponents' strategies in $\StratProfile$, i.e., $\StratProfile^*_{\Utility} \doteq \sup_{\StratProfile' \in \Adjacent_{\PlayerIndex, \StratProfile}} \Utility_{\PlayerIndex} (\StratProfile')$.
    \item Given a mixed strategy profile $\StratProfile\in\MixedStratProfileSpace$, player $\PlayerIndex\in\SetOfPlayers$, and strategy $\StrategyAlt\in\StrategySet_\PlayerIndex$, we let $\StratProfile\vert_\StrategyAlt$ denote the strategy profile $\StratProfile'\in\Adjacent_{\PlayerIndex, \StratProfile}$ satisfying $\StratProfile'_\PlayerIndex = \StrategyAlt$, and let $\Prob[\StrategyAlt \vert \StratProfile]$ denote the probability that mixed strategy profile $\StratProfile$ assigns to strategy $\StrategyAlt$.
\end{enumerate}

\subsection{Nash Containment Lemmas}

We begin by proving the three different Nash containment lemmas.

\tuylsDual*
\begin{proof}
    Suppose that $\abs{\Utility_\PlayerIndex(\StratProfile) - \EUtility_\PlayerIndex(\StratProfile)}\leq \epsilon$ for all $(\PlayerIndex, \StratProfile)\in\UtilityIndices$.
    Let $\gamma\geq 0$ be arbitrary, and suppose that $\StratProfile\in\Equilibria_\gamma(\Utility)$. Then we have that
    \begin{align*}
        \Regret_\PlayerIndex(\StratProfile; \EUtility) &= \EUtility_\PlayerIndex(\StratProfile^*_{\EUtility}) - \EUtility_\PlayerIndex(\StratProfile)\\
        &\leq (\Utility_\PlayerIndex(\StratProfile^*_{\EUtility}) + \epsilon) - (\Utility_\PlayerIndex(\StratProfile) - \epsilon)& &\textsf{By assumption}\\
        &\leq (\Utility_\PlayerIndex(\StratProfile) + \gamma + \epsilon) - (\Utility_\PlayerIndex(\StratProfile) - \epsilon)&
        &\textsf{Since $\StratProfile\in\Equilibria_\gamma(\Utility)$, it holds that $\Utility_\PlayerIndex(\StratProfile) + \gamma \geq \Utility_\PlayerIndex(\StratProfile^*_{\EUtility})$}\\
        &= 2\epsilon + \gamma,
    \end{align*}
    and hence $\StratProfile\in\Equilibria_{2\epsilon + \gamma}(\EUtility)$. Since $\StratProfile$ was arbitrary, we have that $\Equilibria_\gamma(\Utility)\subseteq \Equilibria_{2\epsilon + \gamma}(\EUtility)$.
    By completely analogous reasoning, we see that $\Equilibria_\gamma(\EUtility)\subseteq \Equilibria_{2\epsilon + \gamma}(\Utility)$.
\end{proof}

\dualNew*
\begin{proof}
    Suppose that $\abs{\Utility_\PlayerIndex(\StratProfile) - \EUtility_\PlayerIndex(\StratProfile)}\leq \epsilon$ for all $(\PlayerIndex, \StratProfile)\in\UtilityIndices$ satisfying $\Regret_\PlayerIndex(\StratProfile; \Utility) = 0$ or $\Regret_\PlayerIndex(\StratProfile; \EUtility) \leq \gamma^*$. Suppose that $\StratProfile\in\Equilibria(\Utility)$. Then we have that
    \begin{align*}
        \Regret_\PlayerIndex(\StratProfile; \EUtility) &= \EUtility_\PlayerIndex(\StratProfile^*_{\EUtility}) - \EUtility_\PlayerIndex(\StratProfile)\\
        &\leq (\Utility_\PlayerIndex(\StratProfile^*_{\EUtility}) + \epsilon) - (\Utility_\PlayerIndex(\StratProfile) - \epsilon)&
        &\begin{aligned}[t]
            &\textsf{$\Regret_\PlayerIndex(\StratProfile^*_{\EUtility}; \EUtility) = 0 \leq \gamma^*$, so $(\PlayerIndex, \StratProfile^*_{\EUtility})$ is well-estimated (WE)}\\
            &\textsf{$\Regret_\PlayerIndex(\StratProfile; \Utility) = 0$ (since $\StratProfile\in\Equilibria(\Utility)$), so $(\PlayerIndex, \StratProfile)$ is WE}
        \end{aligned}\\
        &\leq (\Utility_\PlayerIndex(\StratProfile) + \epsilon) - (\Utility_\PlayerIndex(\StratProfile) - \epsilon)&
        &\textsf{Since $\StratProfile\in\Equilibria(\Utility)$, it holds that $\Utility_\PlayerIndex(\StratProfile) \geq \Utility_\PlayerIndex(\StratProfile^*_{\EUtility})$}\\
        &= 2\epsilon,
    \end{align*}
    and hence $\StratProfile\in\Equilibria_{2\epsilon}(\EUtility)$. Since $\StratProfile$ was arbitrary, we have that $\Equilibria(\Utility)\subseteq \Equilibria_{2\epsilon}(\EUtility)$.
    Now let $\gamma\in[0, \gamma^*]$ and suppose instead that $\StratProfile\in\Equilibria_{\gamma}(\EUtility)$. Then we have that
    \begin{align*}
        \Regret_\PlayerIndex(\StratProfile; \Utility) &= \Utility_\PlayerIndex(\StratProfile^*_{\Utility}) - \Utility_\PlayerIndex(\StratProfile)\\
        &\leq (\EUtility_\PlayerIndex(\StratProfile^*_{\Utility}) + \epsilon) - (\EUtility_\PlayerIndex(\StratProfile) - \epsilon)&
        &\begin{aligned}[t]
            &\textsf{$\Regret_\PlayerIndex(\StratProfile^*_{\Utility}; \Utility) = 0$ (by definition), so $(\PlayerIndex, \StratProfile^*_{\Utility})$ is WE}\\
            &\textsf{$\Regret_\PlayerIndex(\StratProfile; \EUtility) \leq \gamma^*$ (since $\StratProfile\in\Equilibria_{\gamma}(\EUtility)\subseteq \Equilibria_{\gamma^*}(\EUtility)$), so $(\PlayerIndex, \StratProfile)$ is WE}
        \end{aligned}\\
        &\leq (\EUtility_\PlayerIndex(\StratProfile) + \gamma + \epsilon) - (\EUtility_\PlayerIndex(\StratProfile) - \epsilon)&
        &\textsf{Since $\StratProfile\in\Equilibria_{\gamma}(\EUtility)$, it holds that $\EUtility_\PlayerIndex(\StratProfile) + \gamma \geq \EUtility_\PlayerIndex(\StratProfile^*_{\Utility})$}\\
        &= 2\epsilon + \gamma,
    \end{align*}
    and hence $\StratProfile\in\Equilibria_{2\epsilon + \gamma}(\Utility)$. Since $\StratProfile$ was arbitrary, we have that $\Equilibria_{\gamma}(\EUtility)\subseteq \Equilibria_{2\epsilon + \gamma}(\Utility)$ and we are done.
\end{proof}

Before proving the third Nash containment lemma, we prove an accessory lemma.

\begin{lemma}
    \label{lem:zero-regret}
    Let $\epsilon > 0$ be arbitrary. Suppose that for all $(\PlayerIndex, \StratProfile)\in\UtilityIndices$, it holds that
    \begin{equation*}
        \abs{\Utility_p(\StratProfile) - \EUtility_p(\StratProfile)} \leq \max\left\{\epsilon, \frac{\Regret_\PlayerIndex(\StratProfile; \EUtility)}{2}\right\}\enspace.
    \end{equation*}
    Then for all $(\PlayerIndex, \StratProfile)\in\UtilityIndices$ satisfying $\Regret_\PlayerIndex(\StratProfile; \Utility) = 0$, it must hold that $\abs{\Utility_p(\StratProfile) - \EUtility_p(\StratProfile)}\leq \epsilon$.
\end{lemma}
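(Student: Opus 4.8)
The plan is to fix an arbitrary index $(\PlayerIndex, \StratProfile)\in\UtilityIndices$ with $\Regret_\PlayerIndex(\StratProfile; \Utility) = 0$ and show the hypothesis forces it into the ``$\epsilon$'' branch of the maximum, i.e.\ that $\Regret_\PlayerIndex(\StratProfile; \EUtility) \leq 2\epsilon$. Once this is established, $\max\{\epsilon, \Regret_\PlayerIndex(\StratProfile; \EUtility)/2\} = \epsilon$, so the hypothesis at $(\PlayerIndex, \StratProfile)$ collapses directly to the desired $\abs{\Utility_\PlayerIndex(\StratProfile) - \EUtility_\PlayerIndex(\StratProfile)}\leq\epsilon$. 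The main obstacle is that the bound handed to us at $(\PlayerIndex, \StratProfile)$ is self-referential: its right-hand side contains $\Regret_\PlayerIndex(\StratProfile; \EUtility) = \EUtility_\PlayerIndex(\StratProfile^*_{\EUtility}) - \EUtility_\PlayerIndex(\StratProfile)$, which itself involves $\EUtility_\PlayerIndex(\StratProfile)$, so no clean lower bound on $\EUtility_\PlayerIndex(\StratProfile)$ is immediately available.

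The key idea that breaks this circularity is to bring in the empirical best response $\StratProfile^*_{\EUtility}$ as an auxiliary index. Since $\StratProfile^*_{\EUtility}$ shares $\StratProfile$'s opponent profile, we have $\Adjacent_{\PlayerIndex, \StratProfile^*_{\EUtility}} = \Adjacent_{\PlayerIndex, \StratProfile}$ and hence $\Regret_\PlayerIndex(\StratProfile^*_{\EUtility}; \EUtility) = 0$. Thus $(\PlayerIndex, \StratProfile^*_{\EUtility})\in\UtilityIndices$ lands in the ``$\epsilon$'' branch of the hypothesis and is well-estimated: $\abs{\Utility_\PlayerIndex(\StratProfile^*_{\EUtility}) - \EUtility_\PlayerIndex(\StratProfile^*_{\EUtility})}\leq\epsilon$. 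I would pair this with the true best-response property of $\StratProfile$, namely $\Utility_\PlayerIndex(\StratProfile)\geq\Utility_\PlayerIndex(\StratProfile^*_{\EUtility})$, which is exactly what $\Regret_\PlayerIndex(\StratProfile; \Utility) = 0$ yields since $\StratProfile^*_{\EUtility}\in\Adjacent_{\PlayerIndex, \StratProfile}$.

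Next I would chain these observations to control $R \doteq \Regret_\PlayerIndex(\StratProfile; \EUtility) = \EUtility_\PlayerIndex(\StratProfile^*_{\EUtility}) - \EUtility_\PlayerIndex(\StratProfile)$: the well-estimated bound at $\StratProfile^*_{\EUtility}$ together with the best-response property give $\EUtility_\PlayerIndex(\StratProfile^*_{\EUtility}) \leq \Utility_\PlayerIndex(\StratProfile) + \epsilon$, while the hypothesis at $(\PlayerIndex, \StratProfile)$ gives $\EUtility_\PlayerIndex(\StratProfile) \geq \Utility_\PlayerIndex(\StratProfile) - \max\{\epsilon, R/2\}$. Subtracting yields the single self-referential inequality $R \leq \epsilon + \max\{\epsilon, R/2\}$. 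Finally I would resolve this by cases: if $R/2\leq\epsilon$ then $R\leq 2\epsilon$ already, and if instead $R/2 > \epsilon$ the inequality reads $R\leq \epsilon + R/2$, i.e.\ $R/2\leq\epsilon$, contradicting the case. Hence $R\leq 2\epsilon$ unconditionally, and the conclusion follows. I expect the only genuinely subtle step to be the first one — recognizing that $\StratProfile^*_{\EUtility}$ must be introduced as a second, automatically well-estimated index — after which everything reduces to short inequality bookkeeping.
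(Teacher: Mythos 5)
Your proposal is correct and follows essentially the same route as the paper's proof: introduce the empirical best response $\StratProfile^*_{\EUtility}$ as an auxiliary index, observe that $\Regret_\PlayerIndex(\StratProfile^*_{\EUtility}; \EUtility) = 0$ forces it into the $\epsilon$ branch of the hypothesis, combine this with $\Utility_\PlayerIndex(\StratProfile)\geq\Utility_\PlayerIndex(\StratProfile^*_{\EUtility})$ to obtain the self-referential inequality $\Regret_\PlayerIndex(\StratProfile; \EUtility) \leq \epsilon + \max\{\epsilon, \Regret_\PlayerIndex(\StratProfile; \EUtility)/2\}$, and resolve it by the same case analysis. No gaps; the argument matches the paper's step for step.
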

\begin{proof}
    Suppose that $(\PlayerIndex, \StratProfile)\in\UtilityIndices$ satisfies $\Regret_\PlayerIndex(\StratProfile; \Utility) = 0$. Since $\Regret_\PlayerIndex(\StratProfile^*_{\EUtility}; \EUtility) = 0$, we have that $\abs{\Utility_\PlayerIndex(\StratProfile^*_{\EUtility}) - \EUtility_\PlayerIndex(\StratProfile^*_{\EUtility})} \leq \epsilon$, which implies that $\Utility_\PlayerIndex(\StratProfile^*_{\EUtility}) \geq \EUtility_\PlayerIndex(\StratProfile^*_{\EUtility}) - \epsilon$. But by definition, we have that $\Utility_\PlayerIndex(\StratProfile) \geq \Utility_\PlayerIndex(\StratProfile^*_{\EUtility})$, and hence, it must hold that $\Utility_\PlayerIndex(\StratProfile)\geq \EUtility_\PlayerIndex(\StratProfile^*_{\EUtility}) - \epsilon$. By hypothesis, we also have that $\Utility_\PlayerIndex(\StratProfile) \leq \EUtility_\PlayerIndex(\StratProfile) + \max\left\{\epsilon, \frac{\Regret_\PlayerIndex(\StratProfile; \EUtility)}{2}\right\}$. Chaining these two inequalities, we have that
    \begin{alignat*}{3}
        & & \EUtility_\PlayerIndex(\StratProfile^*_{\EUtility}) - \epsilon &\leq \EUtility_\PlayerIndex(\StratProfile) + \max\left\{\epsilon, \frac{\Regret_\PlayerIndex(\StratProfile; \EUtility)}{2}\right\}\\
        &\Longleftrightarrow\quad& \EUtility_\PlayerIndex(\StratProfile^*_{\EUtility}) - \EUtility_\PlayerIndex(\StratProfile) &\leq \epsilon + \max\left\{\epsilon, \frac{\Regret_\PlayerIndex(\StratProfile; \EUtility)}{2}\right\}\\
        &\Longleftrightarrow\quad& \Regret_\PlayerIndex(\StratProfile; \EUtility) &\leq \epsilon + \max\left\{\epsilon, \frac{\Regret_\PlayerIndex(\StratProfile; \EUtility)}{2}\right\}\enspace.\\
    \end{alignat*}
    If $\frac{\Regret_\PlayerIndex(\StratProfile; \EUtility)}{2} > \epsilon$, then the second term in the max wins out, but solving for $\Regret_\PlayerIndex(\StratProfile; \EUtility)$ immediately yields that $\Regret_\PlayerIndex(\StratProfile; \EUtility) \leq 2\epsilon$, a contradiction. Hence, it must hold that $\frac{\Regret_\PlayerIndex(\StratProfile; \EUtility)}{2}\leq \epsilon$, and hence that $\abs{\Utility_p(\StratProfile) - \EUtility_p(\StratProfile)}\leq \epsilon$.
\end{proof}

\dualMiddle* 
\begin{proof}
    Suppose that $\abs{\Utility_\PlayerIndex(\StratProfile) - \EUtility_\PlayerIndex(\StratProfile)}\leq \max\left\{\epsilon, \frac{\Regret_\PlayerIndex(\StratProfile; \EUtility)}{2}\right\}$ for all $(\PlayerIndex, \StratProfile)\in\UtilityIndices$.
    We will first show the pure Nash containment result, and then show the mixed Nash containment result.
    Suppose that $\StratProfile\in\Equilibria({\Utility})$. Then we have that
    \begin{align*}
        \Regret_\PlayerIndex(\StratProfile; \EUtility) &= \EUtility_\PlayerIndex(\StratProfile^*_{\EUtility}) - \EUtility_\PlayerIndex(\StratProfile)&
        &\textsf{By definition}\\
        &\leq (\Utility_\PlayerIndex(\StratProfile^*_{\EUtility}) + \epsilon) - (\Utility_\PlayerIndex(\StratProfile) - \epsilon)&
        &\begin{aligned}[t]
            &\textsf{Since $\Regret_\PlayerIndex(\StratProfile^*_{\EUtility}; \EUtility) = 0$, by hypothesis $(\PlayerIndex, \StratProfile^*_{\EUtility})$ is WE}\\
            &\textsf{Since $\Regret_\PlayerIndex(\StratProfile; \Utility) = 0$, by \Cref{lem:zero-regret} $(\PlayerIndex, \StratProfile)$ is WE.}
        \end{aligned}\\
        &\leq (\Utility_\PlayerIndex(\StratProfile) + \epsilon) - (\Utility_\PlayerIndex(\StratProfile) - \epsilon)&
        &\textsf{Since $\StratProfile\in\Equilibria(\Utility)$, it holds that $\Utility_\PlayerIndex(\StratProfile) \geq \Utility_\PlayerIndex(\StratProfile^*_{\EUtility})$}\\
        &= 2\epsilon,
    \end{align*}
    and hence $\StratProfile\in \Equilibria_{2\epsilon}(\EUtility)$. Since $\StratProfile$ was arbitrary, we have that $\Equilibria(\Utility)\subseteq \Equilibria_{2\epsilon}(\EUtility)$.

    Now let $\gamma\in[0, 2\epsilon]$ suppose instead that $\StratProfile\in\Equilibria_{\gamma}(\EUtility)$. Then we have that
    \begin{align*}
        \Regret_\PlayerIndex(\StratProfile; \Utility) &= \Utility_\PlayerIndex(\StratProfile^*_{\Utility}) - \Utility_\PlayerIndex(\StratProfile)& &\textsf{By definition}\\
        &\leq (\EUtility_\PlayerIndex(\StratProfile^*_{\Utility}) + \epsilon) - (\EUtility_\PlayerIndex(\StratProfile) - \epsilon)&
        &\begin{aligned}[t]
            &\textsf{Since $\Regret_\PlayerIndex(\StratProfile^*_{\Utility}; \Utility) = 0$, by \Cref{lem:zero-regret} $(\PlayerIndex, \StratProfile^*_{\Utility})$ is WE}\\
            &\textsf{Since $\Regret_\PlayerIndex(\StratProfile; 
\EUtility) \leq \gamma\leq 2\epsilon$ (by def. of $\Equilibria_{\gamma}(\EUtility)$), by hypothesis $(\PlayerIndex, \StratProfile)$ is WE}
        \end{aligned}\\
        &\leq (\EUtility_\PlayerIndex(\StratProfile) + \gamma + \epsilon) - (\EUtility_\PlayerIndex(\StratProfile) - \epsilon)&
        &\textsf{Since $\StratProfile\in\Equilibria_{\gamma}(\EUtility)$, it holds that $\EUtility_\PlayerIndex(\StratProfile) + \gamma \geq \EUtility_\PlayerIndex(\StratProfile^*_{\Utility})$}\\
        &= 2\epsilon + \gamma,
    \end{align*}
    and hence $\StratProfile\in\Equilibria_{2\epsilon + \gamma}(\Utility)$. Since $\StratProfile$ was arbitrary, we have that $\Equilibria_{\gamma}(\EUtility)\subseteq \Equilibria_{2\epsilon + \gamma}(\Utility)$, and we have shown the pure Nash containment result.

    Now we will prove the mixed Nash containment result. Suppose that $\StratProfile\in\MixedEquilibria(\Utility)$. Then we have that
    \begin{align*}
        \Regret_\PlayerIndex(\StratProfile; \EUtility) &= \EUtility_\PlayerIndex(\StratProfile^*_{\EUtility}) - \EUtility_\PlayerIndex(\StratProfile)& &\textsf{By definition}\\
        &= \EUtility_\PlayerIndex(\StratProfile^*_{\EUtility}) - \sum_{\StrategyAlt\in\StrategySet_\PlayerIndex}\EUtility_\PlayerIndex(\StratProfile\vert_\StrategyAlt)\Prob[\StrategyAlt \vert \StratProfile]& &\textsf{By definition (of utility at mixed strategy profile)}\\
        &\leq (\Utility_\PlayerIndex(\StratProfile^*_{\EUtility}) + \epsilon) - \sum_{\StrategyAlt\in\StrategySet_\PlayerIndex}\left(\Utility_\PlayerIndex(\StratProfile\vert_\StrategyAlt) - \max\left\{\epsilon, \frac{\Regret_\PlayerIndex(\StratProfile\vert_\StrategyAlt; \EUtility)}{2}\right\}\right)\Prob[\StrategyAlt \vert \StratProfile]& & \textsf{By hypothesis}\\
        &\leq (\Utility_\PlayerIndex(\StratProfile) + \epsilon) - \sum_{\StrategyAlt\in\StrategySet_\PlayerIndex}\left(\Utility_\PlayerIndex(\StratProfile\vert_\StrategyAlt) - \max\left\{\epsilon, \frac{\Regret_\PlayerIndex(\StratProfile\vert_\StrategyAlt; \EUtility)}{2}\right\}\right)\Prob[\StrategyAlt \vert \StratProfile]& &\textsf{Since $\StratProfile\in\MixedEquilibria(\Utility)$, we have $\Utility_\PlayerIndex(\StratProfile)\geq \Utility_\PlayerIndex(\StratProfile^*_{\EUtility})$}\\
        &= \epsilon + \sum_{\StrategyAlt\in\StrategySet_\PlayerIndex}\Prob[\StrategyAlt \vert \StratProfile]\max\left\{\epsilon, \frac{\Regret_\PlayerIndex(\StratProfile\vert_\StrategyAlt; \EUtility)}{2}\right\}& &\textsf{By definition (of utility at mixed strategy profile)}\\
        &\leq \epsilon + \sum_{\StrategyAlt\in\StrategySet_\PlayerIndex}\Prob[\StrategyAlt \vert \StratProfile]\left(\epsilon + \frac{\Regret_\PlayerIndex(\StratProfile\vert_\StrategyAlt; \EUtility)}{2}\right)& &\textsf{$\max\{A, B\}\leq A + B$}\\
        &= 2\epsilon + \frac{\Regret_\PlayerIndex(\StratProfile; \EUtility)}{2}.& &\textsf{By definition (of utility at mixed strategy profile)}
    \end{align*}
    Solving the above for $\Regret_\PlayerIndex(\StratProfile; \EUtility)$, we get that $\Regret_\PlayerIndex(\StratProfile; \EUtility)\leq 4\epsilon$, and hence $\StratProfile\in\MixedEquilibria_{4\epsilon}(\EUtility)$. Since $\StratProfile$ was arbitrary, we have that $\MixedEquilibria(\Utility)\subseteq \MixedEquilibria_{4\epsilon}(\EUtility)$.

    Now instead let $\gamma \geq 0$, and suppose that $\StratProfile\in \MixedEquilibria_{\gamma}(\EUtility)$. Then we have that
    \begin{align*}
        \Regret_\PlayerIndex(\StratProfile; \Utility) &= \Utility_\PlayerIndex(\StratProfile^*_{\Utility}) - \Utility_\PlayerIndex(\StratProfile)& &\textsf{By definition}\\
        &= \Utility_\PlayerIndex(\StratProfile^*_{\Utility}) - \sum_{\StrategyAlt\in\StrategySet_\PlayerIndex}\Utility_\PlayerIndex(\StratProfile\vert_\StrategyAlt)\Prob[\StrategyAlt \vert \StratProfile]& &\textsf{By definition (of utility at mixed strategy profile)}\\
        &\leq (\EUtility_\PlayerIndex(\StratProfile^*_{\Utility}) + \epsilon) - \sum_{\StrategyAlt\in\StrategySet_\PlayerIndex}\left(\EUtility_\PlayerIndex(\StratProfile\vert_\StrategyAlt) - \max\left\{\epsilon, \frac{\Regret_\PlayerIndex(\StratProfile\vert_\StrategyAlt; \EUtility)}{2}\right\}\right)\Prob[\StrategyAlt \vert \StratProfile]& &\textsf{Hypothesis + \Cref{lem:zero-regret}}\\
        &\leq (\EUtility_\PlayerIndex(\StratProfile) + \gamma + \epsilon) - \sum_{\StrategyAlt\in\StrategySet_\PlayerIndex}\left(\EUtility_\PlayerIndex(\StratProfile\vert_\StrategyAlt) - \max\left\{\epsilon, \frac{\Regret_\PlayerIndex(\StratProfile\vert_\StrategyAlt; \EUtility)}{2}\right\}\right)\Prob[\StrategyAlt \vert \StratProfile]& &\textsf{Since $\StratProfile\in\MixedEquilibria_{\gamma}(\EUtility)$, we have $\EUtility_\PlayerIndex(\StratProfile) + \gamma \geq \EUtility_\PlayerIndex(\StratProfile^*_{\Utility})$}\\
        &= \gamma + \epsilon + \sum_{\StrategyAlt\in\StrategySet_\PlayerIndex}\Prob[\StrategyAlt \vert \StratProfile]\max\left\{\epsilon, \frac{\Regret_\PlayerIndex(\StratProfile\vert_\StrategyAlt; \EUtility)}{2}\right\}& &\textsf{By definition (of utility at mixed strategy profile)}\\
        &\leq \gamma + \epsilon + \sum_{\StrategyAlt\in\StrategySet_\PlayerIndex}\Prob[\StrategyAlt \vert \StratProfile]\left(\epsilon + \frac{\Regret_\PlayerIndex(\StratProfile\vert_\StrategyAlt; \EUtility)}{2}\right)& &\textsf{$\max\{A, B\}\leq A + B$}\\
        &= \gamma + 2\epsilon + \frac{\Regret_\PlayerIndex(\StratProfile; \EUtility)}{2}& &\textsf{By definition (of utility at mixed strategy profile)}\\
        &\leq \frac{3\gamma}{2} + 2\epsilon\enspace,& &\textsf{Since $\StratProfile\in\Equilibria_{\gamma}(\EUtility)$, we have $\Regret_\PlayerIndex(\StratProfile; \EUtility) \leq \gamma$}
    \end{align*}
    and hence $\StratProfile\in\MixedEquilibria_{2\epsilon + \frac{3\gamma}{2}}(\Utility)$. Since $\StratProfile$ was arbitrary, we have that $\MixedEquilibria_{\gamma}(\EUtility)\subseteq \MixedEquilibria_{2\epsilon + \frac{3\gamma}{2}}(\Utility)$.
\end{proof}

In the text, we claim that the condition in \Cref{lem:middle-dual-containment} is strictly tighter than the condition in \Cref{lem:new-dual-containment} when $\gamma^* = 2\epsilon$.
We show this by proving the forward implication from \Cref{lem:middle-dual-containment}'s condition to \Cref{lem:new-dual-containment}'s condition, and then showing the reverse implication to be false.

Suppose the condition from \Cref{lem:middle-dual-containment}, i.e., that $\abs{\Utility_\PlayerIndex(\StratProfile) - \EUtility_\PlayerIndex(\StratProfile)}\leq \max\left\{\epsilon, \frac{\Regret_\PlayerIndex(\StratProfile; \EUtility)}{2}\right\}$ for all $(\PlayerIndex, \StratProfile)\in\UtilityIndices$. Then by \Cref{lem:zero-regret}, we have that $\abs{\Utility_\PlayerIndex(\StratProfile) - \EUtility_\PlayerIndex(\StratProfile)}\leq \epsilon$ for all $(\PlayerIndex, \StratProfile)\in\UtilityIndices$ with $\Regret_\PlayerIndex(\StratProfile; \Utility) = 0$. But by the hypothesis, we directly have that for all $(\PlayerIndex, \StratProfile)\in\UtilityIndices$ with $\Regret_\PlayerIndex(\StratProfile; \EUtility)\leq 2\epsilon$, it holds that $\abs{\Utility_\PlayerIndex(\StratProfile) - \EUtility_\PlayerIndex(\StratProfile)} \leq \max\left\{\epsilon, \frac{\Regret_\PlayerIndex(\StratProfile; \EUtility)}{2}\right\} \leq \max\{\epsilon, \epsilon\} = \epsilon$. Hence, the condition from \Cref{lem:new-dual-containment} is satisfied.

But obviously the converse does not hold, since indices with $\Regret_\PlayerIndex(\StratProfile; \Utility) > 0$ and $\Regret_\PlayerIndex(\StratProfile; \EUtility) > 2\epsilon$ can have arbitrarily bad approximations (e.g., the empirical utilities can be arbitrarily low), even if the condition in \Cref{lem:new-dual-containment} holds.

\subsection{Algorithm Correctness Proofs}

A crucial component to all of our progressive sampling algorithm correctness proofs is the use of a union bound to ensure that (w.h.p.) all pruning that occurs is justified with respect to the true game. We see this line of reasoning in the following correctness proof for \PSWE{}.

\algoPSWE*
\begin{proof}
    Recall that for each index $(\PlayerIndex, \StratProfile)\in\UtilityIndices$ and iteration $\TimeIndex\in\{1, \dots, \ScheduleLength\}$, the scalar  $\hat{\epsilon}^{(\TimeIndex)}_\PlayerIndex(\StratProfile)$ is an upper bound on $\abs{\Utility_\PlayerIndex(\StratProfile) - \EUtility^{(\TimeIndex)}_\PlayerIndex(\StratProfile)}$ with probability at least $1 - \frac{\delta}{\abs{\GameTuple}\ScheduleLength}$. Thus, via a union bound, with probability $1 - \delta$, it holds \emph{for all} indices $(\PlayerIndex, \StratProfile)\in\UtilityIndices$ and iterations $\TimeIndex\in\{1, \dots, \ScheduleLength\}$ that if $\EUtility_\PlayerIndex^{(\TimeIndex)}(\StratProfile)$ has been computed (i.e., the index hasn't been pruned on a prior iteration), it satisfies $\abs{\Utility_\PlayerIndex(\StratProfile) - \EUtility^{(\TimeIndex)}_\PlayerIndex(\StratProfile)} \leq \hat{\epsilon}^{(\TimeIndex)}_\PlayerIndex(\StratProfile)$.

    Now suppose that \PSWE{} returns an empirical utility function $\EUtility$. This implies that \PSWE{} managed to prune all utility indices prior to the exhaustion of its sampling schedule. Combining this with the result from the previous paragraph, we get that with probability at least $1 - \delta$, for all $(\PlayerIndex, \StratProfile)\in\UtilityIndices$, if we let $\TimeIndex$ denote the iteration on which $(\PlayerIndex, \StratProfile)$ was well-estimated pruned, it holds that
    \begin{equation*}
        \abs{\Utility_\PlayerIndex(\StratProfile) - \EUtility_\PlayerIndex(\StratProfile)} = \abs{\Utility_\PlayerIndex(\StratProfile) - \EUtility^{(\TimeIndex)}_\PlayerIndex(\StratProfile)} \leq \hat{\epsilon}^{(\TimeIndex)}_\PlayerIndex(\StratProfile) \leq \epsilon.
    \end{equation*}
    The correctness guarantee then follows from \Cref{thm:dual-containment-tuyls}
\end{proof}

As the same union bound argument is applied in all of the following correctness proofs, we do not repeat it again. We begin by proving the correctness of our \PSRegNU{} algorithm, and show \PSRegU{} and \PSRegOld{} to simply be special cases of this algorithm. Before proving the correctness of \PSRegNU{}, we prove another accessory lemma.

\begin{lemma}
    \label{lem:sup-bounds}
    Suppose that for all $(\PlayerIndex, \StratProfile)\in\UtilityIndices$, it holds that $\abs{\Utility_\PlayerIndex(\StratProfile) - \EUtility_\PlayerIndex(\StratProfile)}\leq \epsilon_\PlayerIndex(\StratProfile)$. Then for all $(\PlayerIndex, \StratProfile)\in\UtilityIndices$, we have that
    \begin{equation*}
        \sup_{\StratProfile'\in \Adjacent_{\PlayerIndex, \StratProfile}} \left(\EUtility_\PlayerIndex(\StratProfile') - \epsilon_\PlayerIndex(\StratProfile')\right)\leq \Utility_\PlayerIndex(\StratProfile^*_{\Utility})\leq \sup_{\StratProfile'\in \Adjacent_{\PlayerIndex, \StratProfile}} \left(\EUtility_\PlayerIndex(\StratProfile') + \epsilon_\PlayerIndex(\StratProfile')\right)
    \end{equation*}
\end{lemma}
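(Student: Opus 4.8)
The plan is to unpack the hypothesis into a pointwise two-sided bound and then take suprema over $\Adjacent_{\PlayerIndex, \StratProfile}$ in the appropriate direction for each half of the claimed inequality. Concretely, the assumption $\abs{\Utility_\PlayerIndex(\StratProfile') - \EUtility_\PlayerIndex(\StratProfile')}\leq \epsilon_\PlayerIndex(\StratProfile')$, which holds for \emph{every} index and in particular for every $\StratProfile'\in\Adjacent_{\PlayerIndex, \StratProfile}$, is equivalent to the chain $\EUtility_\PlayerIndex(\StratProfile') - \epsilon_\PlayerIndex(\StratProfile')\leq \Utility_\PlayerIndex(\StratProfile')\leq \EUtility_\PlayerIndex(\StratProfile') + \epsilon_\PlayerIndex(\StratProfile')$. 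Recalling that $\StratProfile^*_{\Utility}$ is defined so that $\Utility_\PlayerIndex(\StratProfile^*_{\Utility}) = \sup_{\StratProfile'\in\Adjacent_{\PlayerIndex, \StratProfile}}\Utility_\PlayerIndex(\StratProfile')$, both inequalities in the lemma then follow by monotonicity of the supremum.

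For the upper bound, I would start from the right half $\Utility_\PlayerIndex(\StratProfile')\leq \EUtility_\PlayerIndex(\StratProfile') + \epsilon_\PlayerIndex(\StratProfile')$ and apply $\sup_{\StratProfile'\in\Adjacent_{\PlayerIndex, \StratProfile}}$ to both sides. Since the supremum of the left side is exactly $\Utility_\PlayerIndex(\StratProfile^*_{\Utility})$ by definition, and the supremum is monotone, this immediately yields $\Utility_\PlayerIndex(\StratProfile^*_{\Utility})\leq \sup_{\StratProfile'\in\Adjacent_{\PlayerIndex, \StratProfile}}(\EUtility_\PlayerIndex(\StratProfile') + \epsilon_\PlayerIndex(\StratProfile'))$, the right-hand inequality of the lemma.

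For the lower bound, the key observation is that $\Utility_\PlayerIndex(\StratProfile^*_{\Utility})$ is a \emph{constant} in the dummy variable $\StratProfile'$ that dominates $\Utility_\PlayerIndex(\StratProfile')$ for every $\StratProfile'\in\Adjacent_{\PlayerIndex, \StratProfile}$. Chaining this with the left half of the pointwise bound gives $\EUtility_\PlayerIndex(\StratProfile') - \epsilon_\PlayerIndex(\StratProfile')\leq \Utility_\PlayerIndex(\StratProfile')\leq \Utility_\PlayerIndex(\StratProfile^*_{\Utility})$ for each $\StratProfile'$. Because the right-most term no longer depends on $\StratProfile'$, taking $\sup_{\StratProfile'\in\Adjacent_{\PlayerIndex, \StratProfile}}$ of the left-most expression preserves the inequality and produces $\sup_{\StratProfile'\in\Adjacent_{\PlayerIndex, \StratProfile}}(\EUtility_\PlayerIndex(\StratProfile') - \epsilon_\PlayerIndex(\StratProfile'))\leq \Utility_\PlayerIndex(\StratProfile^*_{\Utility})$, the left-hand inequality.

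The argument is essentially routine; the only point requiring care is the direction of the supremum manipulations. In the lower-bound half one must resist the temptation to take a supremum on both sides of $\Utility_\PlayerIndex(\StratProfile')\leq \Utility_\PlayerIndex(\StratProfile^*_{\Utility})$ (which would collapse the empirical envelope against a quantity that is already its own supremum and obscure the bookkeeping), and instead dominate each empirical term by the single fixed constant \emph{before} passing to the supremum, so that no interchange of two suprema is ever needed. Once this ordering is respected, both inequalities are one-line consequences of monotonicity.
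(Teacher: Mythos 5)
Your proof is correct and follows essentially the same route as the paper's: both arguments reduce each inequality to monotonicity of the supremum applied to the pointwise bound $\EUtility_\PlayerIndex(\StratProfile') - \epsilon_\PlayerIndex(\StratProfile')\leq \Utility_\PlayerIndex(\StratProfile')\leq \EUtility_\PlayerIndex(\StratProfile') + \epsilon_\PlayerIndex(\StratProfile')$, using that $\Utility_\PlayerIndex(\StratProfile^*_{\Utility})$ is by definition $\sup_{\StratProfile'\in\Adjacent_{\PlayerIndex,\StratProfile}}\Utility_\PlayerIndex(\StratProfile')$. The paper states the lower bound in one line and dismisses the upper bound as analogous; your write-up is just a more explicit version of the same argument.
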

\begin{proof}
    Let $(\PlayerIndex, \StratProfile)\in\UtilityIndices$. We have that
    \begin{align*}
        \Utility_\PlayerIndex(\StratProfile^*_{\Utility}) = \sup_{\StratProfile'\in\Adjacent_{\PlayerIndex, \StratProfile}} \Utility_\PlayerIndex(\StratProfile') \geq \sup_{\StratProfile'\in\Adjacent_{\PlayerIndex, \StratProfile}} (\EUtility_\PlayerIndex(\StratProfile') - \epsilon_\PlayerIndex(\StratProfile')).
    \end{align*}
    The second inequality holds by analogous reasoning.
\end{proof}

\regretNU*
\begin{proof}
    Suppose that \PSRegNU{} returns an empirical utility function $\EUtility$, thus guaranteeing that all indices have been pruned.
    Suppose that an index $(\PlayerIndex, \StratProfile)\in\UtilityIndices$ is regret-pruned on iteration $\TimeIndex$. Then, we have (w.h.p.) that
    \begin{align*}
        \Regret_\PlayerIndex(\StratProfile, \Utility) &= \Utility_\PlayerIndex(\StratProfile^*_{\Utility}) - \Utility_\PlayerIndex(\StratProfile)\\
        &\geq \sup_{\StratProfile'\in\Adjacent_{\PlayerIndex, \StratProfile}}\left(\EUtility^{(\TimeIndex)}_\PlayerIndex(\StratProfile') - \hat{\epsilon}^{(\TimeIndex)}_\PlayerIndex(\StratProfile')\right) - \left(\EUtility^{(\TimeIndex)}_\PlayerIndex(\StratProfile) + \hat{\epsilon}^{(\TimeIndex)}_\PlayerIndex(\StratProfile)\right)& &\textsf{\Cref{lem:sup-bounds} + Definition of $\epsilon^{(\TimeIndex)}_\PlayerIndex(\StratProfile)$} \\
        &= \Regret^\downarrow_\PlayerIndex(\StratProfile, \EUtility^{(\TimeIndex)}) > 0& &\textsf{Definition of $\Regret^\downarrow$ + Pruning Criterion}.
    \end{align*}
    We also have (w.h.p.) that
    \begin{align*}
        \Regret_\PlayerIndex(\StratProfile, \EmpiricalGame{\Simulator}) &= \EUtility_\PlayerIndex(\StratProfile^*_{\EUtility}) - \EUtility_\PlayerIndex(\StratProfile)\\
        &=\EUtility_\PlayerIndex(\StratProfile^*_{\EUtility}) - \EUtility^{(\TimeIndex)}_\PlayerIndex(\StratProfile)& &\textsf{Since $(\PlayerIndex, \StratProfile)$ was pruned on iteration $i$}\\
        &\geq \EUtility_\PlayerIndex(\StratProfile^*_{\Utility}) - \EUtility^{(\TimeIndex)}_\PlayerIndex(\StratProfile)& &\textsf{$\EUtility_\PlayerIndex(\StratProfile^*_{\EUtility})\geq \EUtility_\PlayerIndex(\StratProfile^*_{\Utility})$ by definition}\\
        &\geq
        (\Utility_\PlayerIndex(\StratProfile^*_{\Utility}) - \epsilon) - \EUtility^{(\TimeIndex)}_\PlayerIndex(\StratProfile)&
        &\begin{aligned}[t]
            &\textsf{$(\PlayerIndex, \StratProfile^*_{\Utility})$ cannot have been regret pruned since $\Regret_\PlayerIndex(\StratProfile^*_{\Utility}, \GameTuple_{\Simulator}) = 0$.}\\
            &\textsf{Hence, it must have been WE pruned.}
        \end{aligned}\\
        &\geq \left(\sup_{\StratProfile'\in\Adjacent_{\PlayerIndex, \StratProfile}}\left(\EUtility^{(\TimeIndex)}_\PlayerIndex(\StratProfile') - \hat{\epsilon}_\PlayerIndex^{(\TimeIndex)}(\StratProfile')\right) - \epsilon\right) - \EUtility^{(\TimeIndex)}_\PlayerIndex(\StratProfile)& &\textsf{\Cref{lem:sup-bounds}}\\
        &= \Regret^\downarrow_\PlayerIndex(\StratProfile, \EmpiricalGame{\Simulator}^{(\TimeIndex)}) + \hat{\epsilon}^{(\TimeIndex)}_\PlayerIndex(\StratProfile) - \epsilon&
        &\textsf{By definition}\\
        &>(\gamma^* + \epsilon - \hat{\epsilon}^{(\TimeIndex)}_\PlayerIndex(\StratProfile)) + \hat{\epsilon}^{(\TimeIndex)}_\PlayerIndex(\StratProfile) - \epsilon&
        &\textsf{Pruning criterion}\\
        &= \gamma^*.
    \end{align*}
    Since $(\PlayerIndex, \StratProfile)$ was arbitrary, we have (w.h.p.) that all regret-pruned indices have positive corresponding regret in the true game and greater than $\gamma^*$ corresponding regret in the empirical game. But since all indices have been pruned, this implies that, with probability at least $1 - \delta$, all indices $(\PlayerIndex, \StratProfile)\in\UtilityIndices$ with $\Regret_\PlayerIndex(\StratProfile, \GameTuple_{\Simulator}) = 0$ or $\Regret_\PlayerIndex(\StratProfile, \EmpiricalGame{\Simulator}) \leq \gamma^*$ will be well-estimated pruned by \PSRegNU{} prior to termination, and will hence satisfy $\abs{\Utility_\PlayerIndex(\StratProfile) - \EUtility_\PlayerIndex(\StratProfile)}\leq \epsilon$. The correctness result then follows from \Cref{lem:new-dual-containment}.
\end{proof}

\regretU*
\begin{proof}
    Since \PSRegU{} uses uniform utility deviation bounds, we have that for each index $(\PlayerIndex, \StratProfile)\in\UtilityIndices$ and iteration $\TimeIndex$, it holds that $\hat{\epsilon}^{(\TimeIndex)}_\PlayerIndex(\StratProfile) = \hat{\epsilon}^{(\TimeIndex)}$. If we applied the regret pruning technique from \PSRegNU{} to such an algorithm, the pruning criterion would simplify as follows:
    \begin{align*}
        & & \Regret^\downarrow_\PlayerIndex(\StratProfile; \EUtility^{(t)}) &> \max\{0, \gamma^* + \epsilon - \hat{\epsilon}^{(t)}_\PlayerIndex(\StratProfile)\}\\
        &\Longleftrightarrow & \sup_{\StratProfile'\in\Adjacent_{\PlayerIndex, \StratProfile}}\left(\EUtility^{(\TimeIndex)}_\PlayerIndex(\StratProfile') - \hat{\epsilon}^{(\TimeIndex)}_\PlayerIndex(\StratProfile')\right) - \left(\EUtility^{(\TimeIndex)}_\PlayerIndex(\StratProfile) + \hat{\epsilon}^{(\TimeIndex)}_\PlayerIndex(\StratProfile)\right) &> \max\{0, \gamma^* + \epsilon - \hat{\epsilon}^{(t)}_\PlayerIndex(\StratProfile)\}\\
        &\Longleftrightarrow & \left[\sup_{\StratProfile'\in\Adjacent_{\PlayerIndex, \StratProfile}}\EUtility^{(\TimeIndex)}_\PlayerIndex(\StratProfile') - \EUtility^{(\TimeIndex)}_\PlayerIndex(\StratProfile)\right] - 2\hat{\epsilon}^{(\TimeIndex)} &> \max\{0, \gamma^* + \epsilon - \hat{\epsilon}^{(t)}\}\\
        &\Longleftrightarrow & \sup_{\StratProfile'\in\Adjacent_{\PlayerIndex, \StratProfile}}\EUtility^{(\TimeIndex)}_\PlayerIndex(\StratProfile') - \EUtility^{(\TimeIndex)}_\PlayerIndex(\StratProfile)  &> 2\hat{\epsilon}^{(\TimeIndex)} + \max\{0, \gamma^* + \epsilon - \hat{\epsilon}^{(t)}\}\\
        &\Longleftrightarrow & 
        \Regret_\PlayerIndex(\StratProfile; \EUtility^{(\TimeIndex)})&> \max\{2\hat{\epsilon}^{(\TimeIndex)}, \gamma^* + \epsilon + \hat{\epsilon}^{(t)}\}\enspace.
    \end{align*}
    But this is precisely the pruning criterion of \PSRegU{}. Thus, \PSRegNU{} is simply a generalization of \PSRegU{} to cases with non-uniform utility deviation bounds, and \PSRegU{} must then yield the same guarantees as \PSRegNU{}.
\end{proof}

\regretOld*
\begin{proof}
    Consider the regret pruning criterion in \PSRegU{} when $\gamma^* = 0$. We know that the first iteration $\TimeIndex$ on which $\hat{\epsilon}^{(\TimeIndex)}\leq \epsilon$, all indices will be well-estimated pruned and the algorithm will terminate. Thus, regret pruning will only occur on iterations on which $\hat{\epsilon}^{(\TimeIndex)} > \epsilon$. But then the regret pruning criterion simplifies to $\Regret_\PlayerIndex(\StratProfile; \EUtility^{(\TimeIndex)}) > \max\{2\hat{\epsilon}^{(\TimeIndex)}, \epsilon + \hat{\epsilon}\} = 2\hat{\epsilon}^{(\TimeIndex)}$, which is precisely the pruning condition of \PSRegOld{}. Thus, \PSRegU{} is simply a generalization of \PSRegOld{} to $\gamma^* > 0$, and hence \PSRegOld{} must yield the same guarantees as \PSRegU{} when $\gamma^* = 0$.
\end{proof}

\regretM*
\begin{proof}
    Suppose an index $(\PlayerIndex, \StratProfile)\in\UtilityIndices$ is regret-pruned on iteration $i$. In the proof of \Cref{thm:regret-pruning-2}, we see that $\Regret^\downarrow_\PlayerIndex(\StratProfile, \EmpiricalGame{\Simulator}^{(i)}) > 0$ implies that $\Regret_\PlayerIndex(\StratProfile, \GameTuple_{\Simulator})$. Hence, this pruning criteria is also guaranteed not to regret prune any index $(\PlayerIndex, \StratProfile')\in\UtilityIndices$ satisfying $\Regret_\PlayerIndex(\StratProfile, \GameTuple_{\Simulator}) = 0$. Further following the proof of \Cref{thm:regret-pruning-2}, we have that $\Regret_\PlayerIndex(\StratProfile, \EmpiricalGame{\Simulator}) \geq \Regret^\downarrow_\PlayerIndex(\StratProfile, \EmpiricalGame{\Simulator}^{(i)}) + \hat{\epsilon}^{(i)}_\PlayerIndex(\StratProfile) - \epsilon$, which when combined with our pruning criteria, yields (w.h.p.) that
    \begin{equation*}
        \Regret_\PlayerIndex(\StratProfile, \EmpiricalGame{\Simulator}) \geq 2\hat{\epsilon}^{(i)}_\PlayerIndex(\StratProfile) \geq 2\abs{\Utility_\PlayerIndex(\StratProfile) - \EUtility_\PlayerIndex(\StratProfile)}.
    \end{equation*}
    Since $(\PlayerIndex, \StratProfile)$ was arbitrary, we have that $\abs{\Utility_\PlayerIndex(\StratProfile) - \EUtility_\PlayerIndex(\StratProfile)}\leq \frac{\Regret_\PlayerIndex(\StratProfile, \EmpiricalGame{\Simulator})}{2}$ for each index $(\PlayerIndex, \StratProfile)\in\UtilityIndices$ that is regret-pruned. Since the remaining indices must all be well-estimated, we have (w.h.p.) that for all $(\PlayerIndex, \StratProfile)\in\UtilityIndices$, it holds that
    \begin{equation*}
        \abs{\Utility_p(\StratProfile) - \EUtility_p(\StratProfile)} \leq \max\left\{\epsilon, \frac{\Regret_\PlayerIndex(\StratProfile, \EmpiricalGame{\Simulator})}{2}\right\}.
    \end{equation*}
    The conclusion then follows from \Cref{lem:middle-dual-containment}.
\end{proof}

\subsection{Efficiency Bounds}

\citep{cousins2022computational} derive high-probability sample complexity bounds for their empirical Bennett tail bounds. We state these sample complexity results below, and use them to derive our efficiency bounds for \PSRegNU{} and \PSRegM{}.

\begin{lemma}
    \label{lem:sample-complexity}
    Consider an index $(\PlayerIndex, \StratProfile)\in\UtilityIndices$. If the sample size $\NumberOfSamples_{\StratProfile} \geq 2 + 2\ln\frac{3}{\delta} \left(\frac{5\UtilityRange}{2\epsilon} + \frac{\UtilityVariance_\PlayerIndex(\StratProfile)}{\epsilon^2}\right)$,
    then with probability at least $1 - \frac{\delta}{3}$, it will hold that $\epsilonEB_\PlayerIndex(\StratProfile) \leq \epsilon$.
\end{lemma}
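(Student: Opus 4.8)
The plan is to read \Cref{lem:sample-complexity} as a purely high-probability statement about the random scalar $\epsilonEB_\PlayerIndex(\StratProfile)$, whose only source of randomness is the empirical variance $\EUtilityVariance_\PlayerIndex(\StratProfile)$ (which in turn determines the correction $\EVarianceEpsilon_\PlayerIndex(\StratProfile)$). I would instantiate \Cref{thm:eBennett} for the single index $(\PlayerIndex,\StratProfile)$ in isolation (i.e.\ with $\abs{\UtilityIndices}=1$), so that every logarithmic factor reads $L \doteq \ln\frac{3}{\delta}$. The first thing to record is that $\epsilonEB_\PlayerIndex(\StratProfile) = \frac{\UtilityRange L}{3\NumberOfSamples_\StratProfile} + \sqrt{\frac{2(\EUtilityVariance_\PlayerIndex(\StratProfile) + \EVarianceEpsilon_\PlayerIndex(\StratProfile))L}{\NumberOfSamples_\StratProfile}}$ is monotonically increasing in $\EUtilityVariance_\PlayerIndex(\StratProfile)$, since $\EVarianceEpsilon_\PlayerIndex(\StratProfile)$ itself carries $\EUtilityVariance_\PlayerIndex(\StratProfile)$ under a square root. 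Hence it suffices to control $\EUtilityVariance_\PlayerIndex(\StratProfile)$ from above by a deterministic function of the true variance $\UtilityVariance_\PlayerIndex(\StratProfile)$ and $\NumberOfSamples_\StratProfile$.

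The single probabilistic ingredient is an upper-tail variance concentration: the companion to the deviation bound that underlies $\EVarianceEpsilon_\PlayerIndex(\StratProfile)$, but applied in the direction that bounds $\EUtilityVariance_\PlayerIndex(\StratProfile)$ above rather than below. Using it, I would show that with probability at least $1-\frac{\delta}{3}$ one has $\EUtilityVariance_\PlayerIndex(\StratProfile) + \EVarianceEpsilon_\PlayerIndex(\StratProfile) \leq \UtilityVariance_\PlayerIndex(\StratProfile) + \bigOp{\frac{\UtilityRange^2 L}{\NumberOfSamples_\StratProfile}} + \bigOp{\UtilityRange\sqrt{\tfrac{\UtilityVariance_\PlayerIndex(\StratProfile)L}{\NumberOfSamples_\StratProfile}}}$. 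This one event is where the entire $\frac{\delta}{3}$ failure budget is spent, which is exactly why the failure probability in the statement is $\frac{\delta}{3}$ and why $L$ carries the factor $3$ (one of the three sub-events of the empirical Bennett construction).

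From here the argument is deterministic. I would substitute the displayed variance bound into $\epsilonEB_\PlayerIndex(\StratProfile)$, split the outer square root via $\sqrt{a+b}\leq \sqrt a + \sqrt b$, and thereby separate the deviation into one genuinely variance-driven piece scaling like $\sqrt{\UtilityVariance_\PlayerIndex(\StratProfile)L/\NumberOfSamples_\StratProfile}$ and a bundle of range-driven pieces scaling like $\UtilityRange L/\NumberOfSamples_\StratProfile$ and $(\UtilityRange^2 L/\NumberOfSamples_\StratProfile)^{1/2}$-type terms. Then I would invoke the hypothesis $\NumberOfSamples_\StratProfile \geq 2 + 2L\bigl(\frac{5\UtilityRange}{2\epsilon} + \frac{\UtilityVariance_\PlayerIndex(\StratProfile)}{\epsilon^2}\bigr)$: the summand $\frac{2L\,\UtilityVariance_\PlayerIndex(\StratProfile)}{\epsilon^2}$ drives the variance-driven piece below a prescribed fraction of $\epsilon$, the summand $\frac{5\UtilityRange L}{\epsilon}$ absorbs every range-driven piece, and the leading $+2$ ensures $\NumberOfSamples_\StratProfile\geq 2$ so that the $(\NumberOfSamples_\StratProfile-1)$ denominators appearing in $\kappa_\delta$ and $\EVarianceEpsilon_\PlayerIndex(\StratProfile)$ satisfy $\frac{1}{\NumberOfSamples_\StratProfile-1}\leq\frac{2}{\NumberOfSamples_\StratProfile}$ (and the empirical variance is well-defined). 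Bounding the sum of all pieces by $\epsilon$ then finishes the proof.

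The main obstacle I anticipate is the constant bookkeeping in this last step. Because $\EVarianceEpsilon_\PlayerIndex(\StratProfile)$ nests a square root of $\EUtilityVariance_\PlayerIndex(\StratProfile)$ inside the outer square root of $\epsilonEB_\PlayerIndex(\StratProfile)$, disentangling the $\UtilityRange$-dependent and $\UtilityVariance_\PlayerIndex(\StratProfile)$-dependent contributions requires repeated subadditivity of the square root together with a Young-type split $\sqrt{xy}\leq \frac{1}{2}(tx + y/t)$, and the split parameters must be chosen so that the sharp leading constant on $\frac{\UtilityVariance_\PlayerIndex(\StratProfile)}{\epsilon^2}$ is preserved while all residual range terms still fit inside the allotted $\frac{5\UtilityRange}{2\epsilon}$. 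Pinning down $\kappa_\delta \leq \frac{1}{3}+\frac{1}{2L}$ and verifying that these constants genuinely sum to at most $\epsilon$ is the delicate part; the probabilistic content, by contrast, remains confined to the single variance-concentration event of the second paragraph.
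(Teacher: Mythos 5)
You should first be aware that the paper does not actually prove \Cref{lem:sample-complexity}: it is stated with the sentence ``\citet{cousins2022computational} derive high-probability sample complexity bounds for their empirical Bennett tail bounds. We state these sample complexity results below,'' and is then used as a black box in the proof of the efficiency theorem. So there is no in-paper argument to compare yours against; the comparison has to be against what such a proof must contain.

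Your two-phase structure is the right one, and it matches how these bounds are derived in the cited work: (i) the only randomness in $\epsilonEB_\PlayerIndex(\StratProfile)$ enters through $\EUtilityVariance_\PlayerIndex(\StratProfile)$, and $\epsilonEB_\PlayerIndex(\StratProfile)$ is monotone increasing in it (both directly and through $\EVarianceEpsilon_\PlayerIndex(\StratProfile)$); (ii) a single upper-tail variance-concentration event, charged the entire $\nicefrac{\delta}{3}$ budget, yields $\EUtilityVariance_\PlayerIndex(\StratProfile) \lesssim \UtilityVariance_\PlayerIndex(\StratProfile) + \bigOp{\UtilityRange\sqrt{\UtilityVariance_\PlayerIndex(\StratProfile)L/\NumberOfSamples_\StratProfile}} + \bigOp{\UtilityRange^2 L/\NumberOfSamples_\StratProfile}$; (iii) the rest is a deterministic inversion via $\sqrt{a+b}\leq\sqrt{a}+\sqrt{b}$ and a Young-type split. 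Your identification of why the constant is $3$ (one of the three sub-events in the empirical Bennett construction) and why the additive $2$ in the sample-size hypothesis is needed (so that $\nicefrac{1}{(\NumberOfSamples_\StratProfile-1)}\leq\nicefrac{2}{\NumberOfSamples_\StratProfile}$) is also correct.

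The gap is that your write-up stops exactly where the content of this particular lemma lives. The statement is not ``$\NumberOfSamples_\StratProfile = \bigOp{L(\UtilityRange/\epsilon + \UtilityVariance_\PlayerIndex(\StratProfile)/\epsilon^2)}$ suffices''; it asserts the specific constants $\frac{5}{2}$ on the range term and $1$ on the variance term (inside a prefactor of $2L$). Note that the variance summand alone, $\NumberOfSamples_\StratProfile\geq 2L\UtilityVariance_\PlayerIndex(\StratProfile)/\epsilon^2$, makes the leading variance piece $\sqrt{2\UtilityVariance_\PlayerIndex(\StratProfile)L/\NumberOfSamples_\StratProfile}$ equal to $\epsilon$ with no slack whatsoever, so every residual term --- the $\frac{\UtilityRange L}{3\NumberOfSamples_\StratProfile}$ outside the root, the cross term $\UtilityRange\sqrt{\UtilityVariance_\PlayerIndex(\StratProfile)L/\NumberOfSamples_\StratProfile}$ from the variance concentration, the $\kappa_\delta(\UtilityRange^2L/(\NumberOfSamples_\StratProfile-1))^2$ term inside $\EVarianceEpsilon_\PlayerIndex(\StratProfile)$ --- must be absorbed using only the additional $\frac{5\UtilityRange L}{\epsilon}$ of budget, and the split parameters in your Young inequalities must be chosen jointly with how the two summands of the hypothesis are allocated. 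Whether $\frac{5}{2}$ genuinely suffices is precisely the claim, and your proposal defers it as ``the delicate part'' rather than carrying it out. Until that arithmetic is exhibited (or the constants are traced to the corresponding lemma of \citet{cousins2022computational}), the proof is a plausible plan rather than a proof.
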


\efficiency*
\begin{proof}
    Each of our efficiency bounds is presented as a minimum over two bounds, the first corresponding to regret pruning and the second to well-estimated pruning. It is clear that the second is a direct consequence of \Cref{lem:sample-complexity}. We show the regret pruning bounds, beginning with \PSRegNU{}.

    Recall that \PSRegNU{} prunes an index $(\PlayerIndex, \StratProfile)\in\UtilityIndices$ on an iteration $\TimeIndex$ if $\Regret^\downarrow_\PlayerIndex(\StratProfile; \EUtility^{(\TimeIndex)}) > \max\{0, \gamma^* + \epsilon - \hat{\epsilon}^{(\TimeIndex)}_\PlayerIndex(\StratProfile)\}$. We have (w.h.p.) that
    \begin{align*}
        \Regret^\downarrow_\PlayerIndex(\StratProfile; \EUtility^{(\TimeIndex)}) &= \sup_{\StratProfile'\in\Adjacent_{\PlayerIndex, \StratProfile}}\left(\EUtility^{(\TimeIndex)}_\PlayerIndex(\StratProfile') - \hat{\epsilon}^{(\TimeIndex)}_\PlayerIndex(\StratProfile')\right) - \left(\EUtility^{(\TimeIndex)}_\PlayerIndex(\StratProfile) + \hat{\epsilon}^{(\TimeIndex)}_\PlayerIndex(\StratProfile)\right)\\
        &> \left[\sup_{\StratProfile'\in\Adjacent_{\PlayerIndex, \StratProfile}}\EUtility^{(\TimeIndex)}_\PlayerIndex(\StratProfile')\right] - \left[\sup_{\StratProfile'\in\Adjacent_{\PlayerIndex, \StratProfile}}\hat{\epsilon}_\PlayerIndex^{(\TimeIndex)}(\StratProfile')\right] - \left(\EUtility^{(\TimeIndex)}_\PlayerIndex(\StratProfile) + \hat{\epsilon}^{(\TimeIndex)}_\PlayerIndex(\StratProfile)\right)\\
        &> \left[\sup_{\StratProfile'\in\Adjacent_{\PlayerIndex, \StratProfile}}\left(\Utility_\PlayerIndex(\StratProfile) - \hat{\epsilon}^{(\TimeIndex)}_\PlayerIndex(\StratProfile)\right)\right] - \left[\sup_{\StratProfile'\in\Adjacent_{\PlayerIndex, \StratProfile}}\hat{\epsilon}_\PlayerIndex^{(\TimeIndex)}(\StratProfile')\right] - \left(\Utility_\PlayerIndex(\StratProfile) + 2\hat{\epsilon}^{(\TimeIndex)}_\PlayerIndex(\StratProfile)\right)\\
        &> \left[\sup_{\StratProfile'\in\Adjacent_{\PlayerIndex, \StratProfile}}\Utility_\PlayerIndex(\StratProfile')\right] - 2\left[\sup_{\StratProfile'\in\Adjacent_{\PlayerIndex, \StratProfile}}\hat{\epsilon}_\PlayerIndex^{(\TimeIndex)}(\StratProfile')\right] - \left(\Utility_\PlayerIndex(\StratProfile) + 2\hat{\epsilon}^{(\TimeIndex)}_\PlayerIndex(\StratProfile)\right)\\
        &= \Regret_\PlayerIndex(\StratProfile; \Utility) - 2\hat{\epsilon}^{(\TimeIndex)}_\PlayerIndex(\StratProfile) - 2\sup_{\StratProfile'\in\Adjacent_{\PlayerIndex, \StratProfile}}\hat{\epsilon}^{(\TimeIndex)}_\PlayerIndex(\StratProfile')\enspace.
    \end{align*}
    Hence, a strictly tighter pruning criterion for \PSRegNU{} would be pruning an index $(\PlayerIndex, \StratProfile)\in\UtilityIndices$ when
    \begin{align*}
        \Regret_\PlayerIndex(\StratProfile; \Utility) - \gamma^* &> 2\hat{\epsilon}^{(\TimeIndex)}_\PlayerIndex(\StratProfile) + 2\sup_{\StratProfile'\in\Adjacent_{\PlayerIndex, \StratProfile}}\hat{\epsilon}^{(\TimeIndex)}_\PlayerIndex(\StratProfile') - \gamma^* + \max\{0, \gamma^* + \epsilon - \hat{\epsilon}^{(\TimeIndex)}_\PlayerIndex(\StratProfile)\}\\
        &= 2\sup_{\StratProfile'\in\Adjacent_{\PlayerIndex, \StratProfile}}\hat{\epsilon}^{(\TimeIndex)}_\PlayerIndex(\StratProfile') + \max\left\{2\hat{\epsilon}^{(\TimeIndex)}_\PlayerIndex(\StratProfile) - \gamma^*, \epsilon + \hat{\epsilon}^{(\TimeIndex)}_\PlayerIndex(\StratProfile)\right\}\enspace.
    \end{align*}
    We can make the pruning criterion even tighter by increasing the right-hand side:
    \begin{align*}
        2\sup_{\StratProfile'\in\Adjacent_{\PlayerIndex, \StratProfile}}\hat{\epsilon}^{(\TimeIndex)}_\PlayerIndex(\StratProfile') + \max\left\{2\hat{\epsilon}^{(\TimeIndex)}_\PlayerIndex(\StratProfile) - \gamma^*, \epsilon + \hat{\epsilon}^{(\TimeIndex)}_\PlayerIndex(\StratProfile)\right\} &\leq 4\sup_{\StratProfile'\in\Adjacent_{\PlayerIndex, \StratProfile}}\hat{\epsilon}^{(\TimeIndex)}_\PlayerIndex(\StratProfile').
    \end{align*}
    Hence, the latest (w.h.p.) an index $(\PlayerIndex, \StratProfile)\in\UtilityIndices$ will be regret-pruned by \PSRegNU{} is when
    \begin{equation*}
        \sup_{\StratProfile'\in\Adjacent_{\PlayerIndex, \StratProfile}}\hat{\epsilon}^{(\TimeIndex)}_\PlayerIndex(\StratProfile') < \frac{\Regret_\PlayerIndex(\StratProfile; \Utility) - \gamma^*}{4}.
    \end{equation*}
    Our result then follows via \Cref{lem:sample-complexity}. By analogous reasoning, we have that a strictly tighter regret pruning criterion than the one in \PSRegM{} would be pruning an index $(\PlayerIndex, \StratProfile)\in\UtilityIndices$ when
    \begin{align*}
        \Regret_\PlayerIndex(\StratProfile; \Utility) - \epsilon > 3\hat{\epsilon}^{(\TimeIndex)}_\PlayerIndex(\StratProfile) + 2\sup_{\StratProfile'\in\Adjacent_{\PlayerIndex, \StratProfile}}\hat{\epsilon}^{(\TimeIndex)}_\PlayerIndex(\StratProfile').
    \end{align*}
    Similar to before, we can make the pruning criterion even tighter by increasing the right hand side:
    $$3\hat{\epsilon}^{(\TimeIndex)}_\PlayerIndex(\StratProfile) + 2\sup_{\StratProfile'\in\Adjacent_{\PlayerIndex, \StratProfile}}\hat{\epsilon}^{(\TimeIndex)}_\PlayerIndex(\StratProfile') \leq 5\sup_{\StratProfile'\in\Adjacent_{\PlayerIndex, \StratProfile}}\hat{\epsilon}^{(\TimeIndex)}_\PlayerIndex(\StratProfile')$$
    Hence, the latest (w.h.p.) an index $(\PlayerIndex, \StratProfile)\in\UtilityIndices$ will be regret pruned by \PSRegM{} is when
    \begin{equation*}
        \sup_{\StratProfile'\in\Adjacent_{\PlayerIndex, \StratProfile}}\hat{\epsilon}^{(\TimeIndex)}_\PlayerIndex(\StratProfile') < \frac{\Regret_\PlayerIndex(\StratProfile; \Utility) - \epsilon}{5}\enspace.
    \end{equation*}
    Once again, our result follows from \Cref{lem:sample-complexity}.
\end{proof}

\omegaBound*
\begin{proof}
    Recall that all the aforementioned progressive sampling algorithms use well-estimated pruning. By Hoeffding's Inequality (\Cref{thm:hoeffding}), we have that on iteration $\ScheduleLength$ of algorithm PS, for all $(\PlayerIndex, \StratProfile)\in\UtilityIndices$, it holds that
    \begin{align*}
        \hat{\epsilon}^{(\ScheduleLength)}_\PlayerIndex(\StratProfile) \leq \UtilityRange\sqrt{\frac{\ln\left(\frac{2\abs{\UtilityIndices}\ScheduleLength}{\delta}\right)}{\CumulativeSamples_{\ScheduleLength}}} \leq \UtilityRange \sqrt{\ln\left(\frac{2\abs{\UtilityIndices}\ScheduleLength}{\delta}\right)\cdot \frac{2\epsilon^2}{\UtilityRange^2\ln\left(\frac{2\abs{\UtilityIndices}\ScheduleLength}{\delta}\right)}} = \epsilon,
    \end{align*}
    and thus all indices which remain active until iteration $\ScheduleLength$ will be pruned on iteration $\ScheduleLength$. Hence, the aforementioned algorithms are guaranteed to prune all indices prior to the exhaustion of the sampling schedule, and thus will return an empirical game satisfying the respective guarantees of the algorithm.
\end{proof}

\subsection{Sampling Schedule}

Our sampling schedule is derived via a sample complexity lower bound for the empirical Bennett tail bounds presented in \Cref{thm:eBennett}. \citep{cousins2022computational} lower bound the empirical Bennett bounds via the zero-variance case of Bennett's inequality (\Cref{thm:bennett}). We, however, use a tighter lower bound which we derive below.

\begin{lemma}
    \label{lem:sample-complexity-lower}
    Consider an index $(\PlayerIndex, \StratProfile)\in\UtilityIndices$. If $\epsilonEB_\PlayerIndex(\StratProfile) \leq \epsilon$, then it must hold that the sample size $\NumberOfSamples_{\StratProfile} > \left(\frac{1}{3} + \sqrt{\frac{4 + 2\sqrt{3}}{3}}\right)\cdot \frac{\UtilityRange\ln \left( \frac{3 \abs{\UtilityIndices}}{\delta} \right)}{\epsilon}$.
\end{lemma}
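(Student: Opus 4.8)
The plan is to lower-bound $\epsilonEB_\PlayerIndex(\StratProfile)$ by an expression depending only on the sample size $\NumberOfSamples_\StratProfile$ (and the fixed quantities $\UtilityRange$, $\abs{\UtilityIndices}$, $\delta$), and then invert the hypothesis $\epsilonEB_\PlayerIndex(\StratProfile)\le\epsilon$ to read off a lower bound on $\NumberOfSamples_\StratProfile$. Write $L\doteq\ln(\nicefrac{3\abs{\UtilityIndices}}{\delta})$ and $\NumberOfSamples\doteq\NumberOfSamples_\StratProfile$ for brevity. The key structural observation is that both $\EVarianceEpsilon_\PlayerIndex(\StratProfile)$ and $\epsilonEB_\PlayerIndex(\StratProfile)$ are increasing in the empirical variance $\EUtilityVariance_\PlayerIndex(\StratProfile)\ge 0$, so for a fixed sample size they are smallest in the zero-variance case. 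Discarding the nonnegative $\EUtilityVariance_\PlayerIndex(\StratProfile)$ contributions therefore yields a valid lower bound on $\epsilonEB_\PlayerIndex(\StratProfile)$, which is exactly what is needed to convert the assumed upper bound on $\epsilonEB_\PlayerIndex(\StratProfile)$ into a lower bound on $\NumberOfSamples$.

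Concretely, I would first lower-bound $\EVarianceEpsilon_\PlayerIndex(\StratProfile)$: drop the $\frac{2\UtilityRange^2\EUtilityVariance_\PlayerIndex(\StratProfile)L}{\NumberOfSamples}$ term under its square root, use $\NumberOfSamples-1<\NumberOfSamples$ to replace $\frac{\UtilityRange^2 L}{\NumberOfSamples-1}$ by the smaller $\frac{\UtilityRange^2 L}{\NumberOfSamples}$, and use $\kappa_\delta>\frac13$ (valid since $L>0$, whence $\sqrt{\kappa_\delta}>\frac{1}{\sqrt3}$). This gives $\EVarianceEpsilon_\PlayerIndex(\StratProfile)>\frac{\UtilityRange^2 L}{\NumberOfSamples}\bigl(\frac23+\frac1{\sqrt3}\bigr)$. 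Substituting into the definition of $\epsilonEB_\PlayerIndex(\StratProfile)$, and again dropping the nonnegative $\EUtilityVariance_\PlayerIndex(\StratProfile)$ appearing in its outer square root, produces $\epsilonEB_\PlayerIndex(\StratProfile)>\frac{\UtilityRange L}{3\NumberOfSamples}+\sqrt{\frac{2L}{\NumberOfSamples}\cdot\frac{\UtilityRange^2 L}{\NumberOfSamples}\bigl(\frac23+\frac1{\sqrt3}\bigr)}=\frac{\UtilityRange L}{\NumberOfSamples}\bigl(\frac13+\sqrt{\frac{4+2\sqrt3}{3}}\bigr)$, where the final equality simply factors out $\frac{\UtilityRange L}{\NumberOfSamples}$ and consolidates the constant $2\bigl(\frac23+\frac1{\sqrt3}\bigr)=\frac{4+2\sqrt3}{3}$.

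Finally, chaining this strict lower bound with the hypothesis $\epsilonEB_\PlayerIndex(\StratProfile)\le\epsilon$ gives $\epsilon>\frac{\UtilityRange L}{\NumberOfSamples}\bigl(\frac13+\sqrt{\frac{4+2\sqrt3}{3}}\bigr)$, and solving for $\NumberOfSamples$ yields the claimed bound $\NumberOfSamples_\StratProfile>\bigl(\frac13+\sqrt{\frac{4+2\sqrt3}{3}}\bigr)\frac{\UtilityRange L}{\epsilon}$. The argument is short, so the one place I would scrutinize is ensuring every relaxation points in the correct direction: since the goal is a \emph{lower} bound on $\NumberOfSamples$, I need a genuine \emph{lower} bound on $\epsilonEB_\PlayerIndex(\StratProfile)$, hence each step (dropping $\EUtilityVariance_\PlayerIndex(\StratProfile)$, replacing $\NumberOfSamples-1$ by $\NumberOfSamples$, bounding $\sqrt{\kappa_\delta}$ below by $\frac1{\sqrt3}$) must only ever decrease the expression. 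The secondary point is checking the constant arithmetic in the consolidation step, which is the source of the slightly unusual radical $\sqrt{(4+2\sqrt3)/3}$.
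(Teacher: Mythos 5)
Your proposal is correct and follows essentially the same route as the paper's proof: both lower-bound $\EVarianceEpsilon_\PlayerIndex(\StratProfile)$ by dropping the nonnegative $\EUtilityVariance_\PlayerIndex(\StratProfile)$ term, using $\kappa_\delta > \frac{1}{3}$ and $\frac{1}{\NumberOfSamples-1} > \frac{1}{\NumberOfSamples}$ to obtain the factor $\frac{2+\sqrt{3}}{3}$, then substitute into $\epsilonEB_\PlayerIndex(\StratProfile)$, drop $\EUtilityVariance_\PlayerIndex(\StratProfile)$ again, and invert. The constant consolidation $2\left(\frac{2}{3}+\frac{1}{\sqrt{3}}\right)=\frac{4+2\sqrt{3}}{3}$ matches the paper exactly.
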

\begin{proof}
    We have that
    \begin{align*}
        \EVarianceEpsilon_{\PlayerIndex}(\StratProfile)
        &= \frac{2 \UtilityRange^{2} \ln \left( \frac{3 \abs{\UtilityIndices}}{\delta} \right)}{3 \NumberOfSamples} + \sqrt{\left(\frac{1}{3} + \frac{1}{2\ln\left(\frac{3\abs{\UtilityIndices}}{\delta}\right)}\right) \left( \frac{\UtilityRange^{2} \ln \left( \frac{3\abs{\UtilityIndices}}{\delta} \right)}{\NumberOfSamples - 1}\right)^{2} + \frac{2 \UtilityRange^{2} \EUtilityVariance_\PlayerIndex(\StratProfile) \ln \left( \frac{3\abs{\UtilityIndices}}{\delta} \right)}{\NumberOfSamples}}\\
        &> \frac{2 \UtilityRange^{2} \ln \left( \frac{3 \abs{\UtilityIndices}}{\delta} \right)}{3 \NumberOfSamples} + \sqrt{\frac{1}{3} \left( \frac{\UtilityRange^{2} \ln \left( \frac{3\abs{\UtilityIndices}}{\delta} \right)}{\NumberOfSamples - 1}\right)^{2}}\\
        &> \frac{2 + \sqrt{3}}{3}\cdot\frac{\UtilityRange^{2} \ln \left( \frac{3 \abs{\UtilityIndices}}{\delta} \right)}{\NumberOfSamples}\enspace.
    \end{align*}
    We further have that
    \begin{align*}
        \epsilonEB_\PlayerIndex(\StratProfile)
        &= \frac{\UtilityRange\ln \left( \frac{3 \abs{\UtilityIndices}}{\delta} \right)}{3 \NumberOfSamples} + \sqrt{\frac{2 \left(\EUtilityVariance_\PlayerIndex (\StratProfile) + \EVarianceEpsilon_{\PlayerIndex}( \StratProfile)\right) \ln \left( \frac{3\abs{\UtilityIndices}}{\delta} \right)}{\NumberOfSamples}}\\
        &> \frac{\UtilityRange\ln \left( \frac{3 \abs{\UtilityIndices}}{\delta} \right)}{3 \NumberOfSamples} + \sqrt{\frac{2\cdot \frac{2 + \sqrt{3}}{3}\cdot\frac{\UtilityRange^{2} \ln \left( \frac{3 \abs{\UtilityIndices}}{\delta} \right)}{\NumberOfSamples}\cdot \ln \left( \frac{3\abs{\UtilityIndices}}{\delta} \right)}{\NumberOfSamples}}\\
        &= \left(\frac{1}{3} + \sqrt{\frac{4 + 2\sqrt{3}}{3}}\right)\cdot \frac{\UtilityRange\ln \left( \frac{3 \abs{\UtilityIndices}}{\delta} \right)}{\NumberOfSamples}
    \end{align*}
    The conclusion follows directly.
\end{proof}

Thus, our sampling schedule for \PSWE{} begins at $\alpha = \left(\frac{1}{3} + \sqrt{\frac{4 + 2\sqrt{3}}{3}}\right)\cdot \frac{\UtilityRange\ln \left( \frac{3 \abs{\UtilityIndices}\ScheduleLength}{\delta} \right)}{\epsilon}$ and ends at a cumulative sample size that is at least $\omega\doteq \frac{\UtilityRange^2\ln \left( \frac{3 \abs{\UtilityIndices}\ScheduleLength}{\delta}\right)}{2\epsilon^2}$ to satisfy \Cref{thm:omega}. Following \citep{cousins2022computational}, we then use a schedule with a geometrically increasing cumulative sample size with a geometric factor $\beta$ (for our experiments, we use $\beta = 1.1$). Our schedule length is then $\ScheduleLength \doteq \lceil\log_\beta \left(\frac{\omega}{\alpha}\right)\rceil$. The first sample size is defined by $\NumberOfSamples_1 \doteq \alpha\beta$, and each following sample size is defined by $\NumberOfSamples_{\TimeIndex} \doteq \alpha\beta^{\TimeIndex} - \NumberOfSamples_{\TimeIndex - 1}$.

Of course, for all of our regret pruning algorithms, it may be possible for regret pruning to occur prior to at least one index $(\PlayerIndex, \StratProfile)\in\UtilityIndices$ achieving $\hat{\epsilon}^{(\TimeIndex)}_\PlayerIndex(\StratProfile) \leq \epsilon$. Regret pruning cannot, however, happen prior to at least one index $(\PlayerIndex, \StratProfile)\in\UtilityIndices$ achieving $\hat{\epsilon}^{(\TimeIndex)}_\PlayerIndex(\StratProfile) \leq \frac{\UtilityRange}{2}$. This can be seen by looking at the loosest regret pruning criterion we discuss, that used in \PSRegNU{} with $\gamma^* = 0$. An index $(\PlayerIndex, \StratProfile)\in\UtilityIndices$ is regret-pruned on iteration $\TimeIndex$ if $\Regret^\downarrow_\PlayerIndex(\StratProfile; \EUtility^{(\TimeIndex)}) > 0$. Notice that if $\hat{\epsilon}^{(\TimeIndex)}_\PlayerIndex(\StratProfile) > \frac{\UtilityRange}{2}$ for all $(\PlayerIndex, \StratProfile)\in\UtilityIndices$, then for any given index $(\PlayerIndex, \StratProfile)\in\UtilityIndices$, we have that
\begin{align*}
    \Regret^\downarrow_\PlayerIndex(\StratProfile; \EUtility^{(\TimeIndex)}) &= \sup_{\StratProfile'\in\Adjacent_{\PlayerIndex, \StratProfile}}\left(\EUtility^{(\TimeIndex)}_\PlayerIndex(\StratProfile') - \hat{\epsilon}^{(\TimeIndex)}_\PlayerIndex(\StratProfile')\right) - \left(\EUtility^{(\TimeIndex)}_\PlayerIndex(\StratProfile) + \hat{\epsilon}^{(\TimeIndex)}_\PlayerIndex(\StratProfile)\right)\\
    &< \Regret_\PlayerIndex(\StratProfile; \EUtility^{(\TimeIndex)}) - \UtilityRange,
\end{align*}
and hence $(\PlayerIndex, \StratProfile)$ will be regret-pruned only if it holds that $\Regret_\PlayerIndex(\StratProfile; \EUtility^{(\TimeIndex)}) - \UtilityRange > \Regret^\downarrow_\PlayerIndex(\StratProfile; \EUtility^{(\TimeIndex)}) > 0$. But the latter is impossible since $\Regret_\PlayerIndex(\StratProfile; \EUtility^{(\TimeIndex)}) \leq \UtilityRange$ by definition. Hence, no index can be regret-pruned (by any of our regret pruning criteria) prior to at least one index being achieving an estimation guarantee of at least $\frac{\UtilityRange}{2}$.

Based on the above, we start our sampling schedule for all our regret pruning algorithms on $\alpha' \doteq \left(\frac{1}{3} + \sqrt{\frac{4 + 2\sqrt{3}}{3}}\right)\cdot 2\ln \left( \frac{3 \abs{\UtilityIndices}\ScheduleLength}{\delta} \right)$. But, as argued in the text, using a schedule with a strictly geometrically increasing schedule, we waste too many iterations on small sample sizes and yield a schedule length that is too large. Hence, we instead fix the schedule length to be $1.5$ times the schedule length used for \PSWE{}. We then occupy the final two-thirds of our schedule with the same sample sizes used in the sampling schedule for \PSWE{}, and occupy the first third of our schedule with one that has linearly increasing cumulative sample size beginning at $\alpha'$ and ending at $\alpha$ (from above).

\end{document}